\acrodef{isac}[ISAC]{integrated sensing and communications}
\acrodef{sinr}[SINR]{signal-to-interference-plus-noise ratio}
\acrodef{snr}[SNR]{signal-to-noise ratio}
\acrodef{crb}[CRB]{Cram{\'e}r-Rao Bound}
\acrodef{dof}[DoF]{degrees of freedom}
\acrodef{miso}[MISO]{ multiple-input single-output}
\acrodef{mimo}[MIMO]{multiple-input and multiple-output}
\acrodef{mu-mimo}[MU-MIMO]{multi-user multiple-input and multiple-output}
\acrodef{mi}[MI]{mutual information}
\acrodef{pwm}[PWM]{planar-wave model}
\acrodef{swm}[SWM]{spherical-wave model}
\acrodef{nusw}[NUSW]{non-uniform spherical-wave}
\acrodef{hspm}[HSPM]{hybrid spherical and planar wavefront model}
\acrodef{aoa}[AoA]{angle of arrival}
\acrodef{aod}[AoD]{angle of departure}
\acrodef{dod}[DoD]{direction of departure}
\acrodef{doa}[DoA]{direction of arrival}
\acrodef{hbf}[HBF]{hybrid beamforming}
\acrodef{los}[LoS]{line-of-sight}
\acrodef{nlos}[NLoS]{non-line-of-sight}
\acrodef{awgn}[AWGN]{additive white Gaussian noise}
\acrodef{rcs}[RCS]{radar cross section}
\acrodef{fim}[FIM]{Fisher’s information matrix}
\acrodef{scnr}[SCNR]{signal-to-clutter-plus-noise ratio}
\acrodef{mvdr}[MVDR]{minimum variance distortionless response}
\acrodef{svd}[SVD]{singular-value-decomlocation}
\acrodef{qmp}[QMP]{quadratic matrix programming}
\acrodef{sdr}[SDR]{semidefinite relaxation}
\acrodef{sdp}[SDP]{semidefinite programming}
\acrodef{kkt}[KKT]{Karush-Kuhn-Tucker}
\acrodef{music}[MUSIC]{mutiple signal 
classification}
\acrodef{xl}[XL]{extremely
large-scale}
\acrodef{mmwave}[mmWave]{millimeter wave}
\acrodef{thz}[THz]{terahertz}
\acrodef{xl-mimo}[XL-MIMO]{extremely
large-scale multiple-input and multiple-output}
\acrodef{wsms}[WSMS]{widely-spaced multi-subarray}
\acrodef{rf}[RF]{radio frequency}
\acrodef{bs}[BS]{base station}
\acrodef{ula}[ULA]{uniform linear array}
\acrodef{cpi}[CPI]{coherent processing interval}
\acrodef{doa}[DoA]{direction-of-arrival}
\acrodef{pri}[PRI]{pulse repetition interval}
\acrodef{sigw}[SIGW]{synchronous gridless weighted }
\acrodef{rmse}[RMSE]{root mean square error}
\acrodef{somp}[SOMP]{simultaneous orthogonal matching pursuit}
\acrodef{xl-mimo}[XL-MIMO]{extremely
large-scale multiple-input and multiple-output}
\acrodef{ris}[RIS]{reconfigurable intelligent surfaces}
\acrodef{pdf}[PDF]{probability density function}
\acrodef{mmse}[MMSE]{minimum mean-square error}
\acrodef{map}[MAP]{maximum a posteriori}
\acrodef{spa}[SPA]{sum-product algorithm}
\acrodef{bp}[BP]{Belief propagation}
\acrodef{amp}[AMP]{approximate message passing}
\acrodef{gamp}[GAMP]{generalized approximate message passing }
\acrodef{vm}[VM]{von Mises}
\acrodef{mvalse}[MVALSE]{multidimensional variational line spectrum estimation}
\acrodef{kld}[KLD]{Kullback-Leibler  divergence}
\acrodef{elbo}[ELBO]{evidence lower bound}
\acrodef{ep}[EP]{expectation propagation}
\acrodef{fim}[FIM]{Fisher information matrix}
\acrodef{mle}[MLE]{maximum likelihood estimation}
\acrodef{ml}[ML]{maximum likelihood}
\acrodef{6g}[6G]{sixth generation}
\acrodef{vmp}[VMP]{variational message passing}
\acrodef{vbi}[VBI]{variational Bayesian inference}
\acrodef{cavi}[CAVI]{coordinate ascent variational inference}
\acrodef{t-r}[T-R]{transmit-receive}
\def\BibTeX{{\rm B\kern-.05em{\sc i\kern-.025em b}\kern-.08em
		T\kern-.1667em\lower.7ex\hbox{E}\kern-.125emX}}
\newtheorem{remark}{\textbf{Remark}}
\newtheorem{theorem}{\textbf{Theorem}}
\newtheorem{lemma}{\textbf{Lemma}}
\newtheorem{corollary}{\textbf{Corollary}}
\newtheorem{proposition}{\textbf{Proposition}}
\newtheorem{definition}{\textbf{Definition}}
\newenvironment{proof}{{\indent \indent \it Proof:\quad}}{\hfill $\blacksquare$\par}
\newcommand{\Rmnum}[1]{\expandafter\@slowromancap\romannumeral #1@}
\begin{document}	
	\title{Near-Field  Motion Parameter Estimation: A Variational Bayesian  Approach}

	\author{Chunwei~Meng, Zhaolin~Wang, Zhiqing~Wei, Yuanwei~Liu,  Zhiyong~Feng
		\thanks{C. Meng, Z. Wei, and Z. Feng are with the Key Laboratory of Universal Wireless Communications, Ministry of Education, Beijing University of Posts and Telecommunications, Beijing 100876, China (e-mail: \{mengchunwei, weizhiqing, fengzy\}@bupt.edu.cn).
  Z. Wang is with the School of Electronic Engineering and Computer Science, Queen Mary University of London, London E1 4NS, U.K. (e-mail: zhaolin.wang@qmul.ac.uk).
  Y. Liu is with the Department of Electrical and Electronic Engineering, the
University of Hong Kong, Hong Kong, China (e-mail: yuanwei@hku.hk).
 %\textit{(Corresponding author: Zhiqing~Wei,  Zhiyong~Feng)} 
		}
		
	}
	\maketitle
	
	\begin{abstract}
A near-field motion parameter estimation method is proposed. 
In contract to far-field sensing systems, the near-field sensing system leverages spherical-wave characteristics to enable full-vector location and velocity estimation.
 Despite promising advantages, the near-field sensing system faces a significant challenge, where location and velocity parameters are intricately coupled within the signal.
     To address this challenge, a novel subarray-based variational message passing (VMP) method is proposed for near-field joint  location and velocity estimation.
First, a factor graph representation is introduced, employing subarray-level directional and Doppler parameters as intermediate variables to decouple the complex location-velocity dependencies.
     Based on this, the variational Bayesian inference is employed to obtain closed-form posterior distributions of subarray-level parameters.
     Subsequently, the message passing technique is employed, enabling tractable computation of location and velocity marginal distributions.
Two implementation strategies are proposed: 1) System-level fusion that aggregates all subarray posteriors for centralized estimation, or 2) Subarray-level fusion where locally processed estimates from subarrays are fused through Guassian product rule.
     Cramér-Rao bounds for location and velocity estimation are derived, providing theoretical performance limits. 
    Numerical results demonstrate that the proposed VMP method outperforms existing approaches while achieving a magnitude lower complexity. 
   Specifically, the proposed VMP method achieves centimeter-level location accuracy and sub-m/s velocity accuracy. 
    It also demonstrates robust performance for high-mobility targets, making the proposed VMP method suitable for real-time near-field sensing and communication applications.

	\end{abstract}
	
		\begin{IEEEkeywords}
		\noindent
      Integrated snesing and communication, joint location and velocity estimation, near-field sensing. 
		
	\end{IEEEkeywords}

\section{Introduction}
The emergence of novel applications in the  \ac{6g} wireless systems, such as smart manufacturing and intelligent transportation systems, necessitates high-capacity communications and high-precision sensing capabilities\cite{liu2022isac,dong2024sensing,wei2023isac,liu2023tutorial,wang2023nearisac}.
The demands of these applications make the \ac{xl-mimo} and \ac{mmwave}/\ac{thz} technologies key enablers and trends for \ac{6g} wireless systems, offering dramatically improved spectral efficiency and spatial resolution.
The combination of the two technologies brings paradigm shift in the electromagnetic characteristics of wireless environments\cite{liu2023tutorial}.
Specifically, the electromagnetic region of the \ac{bs} is divided into near-field and far-field regions by the Rayleigh distance. 
As the antenna array size increases and the operating frequency increases, the Rayleigh distance extends to hundreds of meters, making it more likely that the \ac{bs}  will carry out communications and sensing in the near-field region\cite{cong2023near}.

The spherical-wave characteristics of near-field electromagnetic propagation introduce new opportunities for both communication and sensing systems\cite{liu2024nearfield}.
The potential benefits of near-field communications have been extensively investigated in numerous studies, encompassing spatial multiplexing, inter-user interference management, location division multiple access\cite{cong2023near,liu2024nearfield,wang2024tutorial6g}.
For near-field sensing, the estimation of target parameters, both static and dynamic, is a crucial research problem\cite{liu2025sensingen,wang2024rethink,yang2023enhancing,yang2023pfboundsubarray}. 
Current research in near-field sensing predominantly focuses on static target localization, leveraging the spatial resolution advantages of spherical wavefronts for enhanced range/angle estimation with reduced time-frequency resources\cite{yang2023pfboundsubarray,yangs2023xlris,dehkordi2023multistatic,Sakhnini2022nfcoherent}.
However, the paradigm shifts significantly when considering the moving target. 
Traditional far-field systems using compact arrays can only resolve radial velocity components, while distributed multistatic configurations enable full velocity vector estimation through spatial diversity, albeit at the cost of increased hardware complexity and synchronization requirements\cite{Sakhnini2022nfcoherent,giovan2024pfbound,wang2024velocity}. 
In \cite{wang2024velocity}, the concept of near-field velocity sensing was introduced, enabling a monostatic radar operating in the near-field region to obtain the complete two-dimensional velocity of the target, thus avoiding the requirements for synchronization and the hardware costs associated with multistatic radar systems.

Despite the significant benefits of near-field operations for velocity estimation, research on this topic is still limited.
The existing literature can be categorized into two main approaches: filter-based tracking methods and snapshot estimation methods\cite{jiang2024near,Rahal2024ris,wang2024velocity}.
In the former category,  the authors of\cite{jiang2024near} proposed a novel near-field sensing enabled predictive beamforming framework that leverages both single-\ac{cpi} estimation and multi-\ac{cpi} filtering techniques to estimate the target's location and velocity.
Nevertheless, this approach relies on the availability of continuous \ac{cpi}s, which may not be feasible for scenarios requiring rapid initial target detection, e.g., the scenario contains newly appearing targets.
For the latter category, the authors of \cite{wang2024velocity} investigated the maximum likelihood velocity estimator and its adoption in predictive beamforming applications, assuming known target locations.
The reconfigurable intelligent surfaces aided single-snapshot near-field sensing algorithm introduced in \cite{Rahal2024ris} decouples the location and velocity estimation utilizing the static target assumption and linear approximations. 
However, research on near-field joint location-velocity estimation under unknown initial conditions remains  scarce.

% The joint estimation of location and velocity for near-field moving targets presents significant challenges when prior information is unavailable, as conventional estimation methods suffer from performance degradation and prohibitive computational complexity when extended to near-field scenarios\cite{wang2024velocity}. 
% %
% This limitation stems from the spherical wavefront effect that creates unique \ac{los} direction per antenna-target pair. 
% %
% Consequently, the observed Doppler shifts represent velocity projections along spatially-varying directions, establishing complex nonlinear coupling between target motion parameters and antenna-specific signal characteristics. 
% %
% This inherent coupling necessitates joint parameter estimation, leading to cubic computational complexity scaling with antenna count and snapshots for subspace-based \ac{music} algorithms from eigendecomposition, and quartic complexity for compressed sensing methods performing four-dimensional location-velocity parameter space searches. 
% %
% Given that \ac{xl-mimo} systems employ hundreds to thousands of antennas, such complexity becomes impractical for real-time moving target estimation.

The transition to near-field sensing in \ac{xl-mimo} systems enables reconstructing complete velocity vectors via spherical wavefronts with spatially diverse path projections.
However, this advancement faces two critical challenges: 1) The antenna-specific spherical wave curvature induces strong nonlinear coupling between location and velocity parameters, creating complex interdependencies that complicate joint location-velocity estimation; 2) The resulting four-dimensional estimation space imposes prohibitive computational burdens on conventional methods, with complexity scaling combinatorially with array dimensions and observation duration.
To overcome these challenges, we develop a novel subarray-based \ac{vmp} algorithm for  near-field joint location and velocity estimation with computational efficiency.  
The main contributions of this paper are summarized as follows:
 \begin{itemize}
      \item       We propose a novel processing paradigm for near-field joint location and velocity sensing.
    By partitioning the arrays into subarrays and applying a piecewise-far-field model,  each \ac{t-r} subarray pair operates as a bistatic radar.
   This decouples location and velocity estimation by enabling independent estimation of subarray-level direction parameters and Doppler shifts.
    Besides, the  spherical wavefront propagation across subarrays provides multiperspective spatial diversity, enabling simultaneous 2D location and velocity estimation.

\item      We introduce \ac{doa}, \ac{dod}, and bistatic Doppler shift as subarray-level intermediate variables for each  \ac{t-r} subarray pair.
Based on this, we establish a probabilistic graphical model that resolves intricate location-velocity dependencies through geometric conditional relationships, enabling efficient separate inference stages. 
We then develop a \ac{vbi} framework to derive closed-form posterior distributions of subarray-level parameters and estimate crucial hyperparameters, enabling efficient parallel processing at the subarray level.

\item    We propose a message passing-based algorithm to achieve efficient joint location-velocity estimation by
systematically fusing subarray measurements.
The proposed algorithm can operate in two complementary modes:
1) System-level fusion: Performs centralized estimation by integrating all subarray posteriors into a system-level optimization problem solved via gradient descent.
2) Subarray-level fusion: Performs distributed estimation by obtaining subarray-level location and velocity estimates through Laplace approximations, followed by the fusion of Gaussian mixtures  to derive the final estimates.

       \item 
    We evaluate the performance of the proposed method with existing approaches under different scenarios. 
     Numerical results demonstrate that the proposed method achieves centimeter-level localization and sub-m/s velocity accuracy,  while achieving an order-of-magnitude reduction in complexity.
  The proposed system-level fusion achieves higher estimation accuracy, while the subarray-level method offers lower computational complexity through distributed processing.
     Moreover, the proposed method exhibits robust performance for the high-velocity target, making it suitable for real-time applications.
 \end{itemize}

The rest of this paper is organized as follows.
Section~\ref{section_system_model} presents the system model,  channel model, and problem formulation.
Section~\ref{section_Probability_model} establishes the probabilistic model. 
Section~\ref{section_posteri_pdf_sub}  develops the variational inference framework for subarray-level parameter estimation.
Section~\ref{section_algorithm} develops the message passing-based algorithm for location and velocity estimation. 
Section~\ref{section_performance_analysis}  derives the location and velocity \ac{crb}s.
Section~\ref{section_simulation} presents simulation results.
Finally, Section~\ref{section_conclusion} concludes the paper.

\textit{Notation:} Vectors and matrices are denoted by boldface lowercase and uppercase letters. $(\cdot)^T$, $(\cdot)^H$, $(\cdot)^{-1}$, and $\|\cdot\|_F$ represent transpose, conjugate transpose, matrix inversion, and Frobenius norm, respectively. $|\cdot|$ and $\|\cdot\|$ denote the absolute value and Euclidean norm, respectively. $\mathcal{CN}(\cdot,\cdot)$, $\mathcal{M}(\cdot,\cdot)$, $\Re\{\cdot\}$, $\Im\{\cdot\}$, $\mathbb{E}\{\cdot\}$, and $\text{tr}\{\cdot\}$ denote complex Gaussian distribution, von Mises distribution, real part operator, imaginary part operator, expectation operator, and trace operator, respectively. $\otimes$, and $\succeq$ represent Kronecker product and positive semidefinite, respectively. $\mathbf{I}$ and $\mathbf{0}$ denote an identity matrix and a zero matrix. $\delta(\cdot)$ represents the Dirac delta function.

\section{System Model and Problem Formulation} \label{section_system_model} 
\begin{figure} [t]
\setlength{\abovecaptionskip}{-0.1cm}
		\centering
	\includegraphics[width=0.35\textwidth]{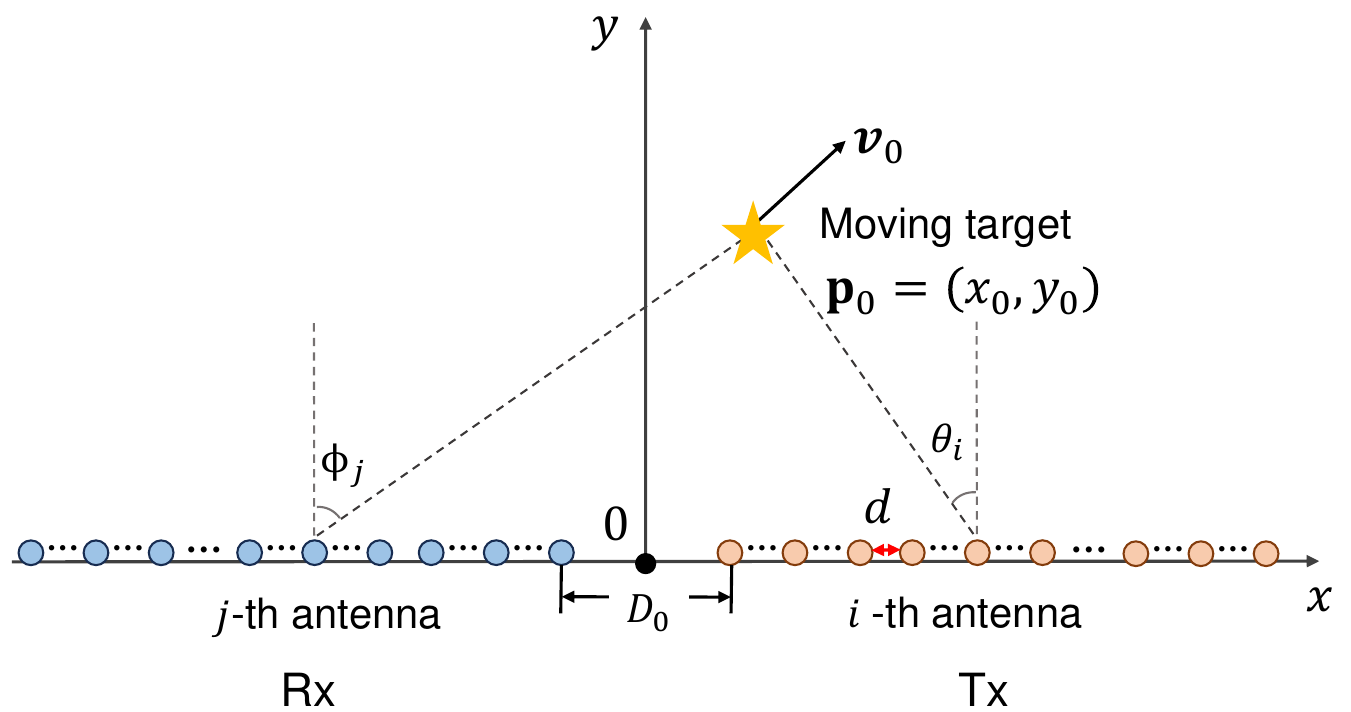}
		\caption{ {Illustration of the downlink near-field sensing scenario  where the \ac{bs} equipped with \ac{xl-mimo}s senses a moving target in the near-field region.}}
		\label{scenario}\vspace{-3mm}
	\end{figure}
    
 {As illustrated in Fig. \ref{scenario},}  the \ac{bs}  equipped with  transmit and receive XL-\ac{ula}s  actively senses a moving target.
The \ac{bs}'s transmit and receive arrays are deployed along the $x$-axis, symmetrically centered around the origin, and widely separated in space to suppress signal leakage from transmitter.
The total number of antenna elements in transmit and receive arrays are $N_t$ and $N_r$, respectively, with an inter-element spacing of $d = \frac{\lambda}{2}$, where $\lambda$ denotes the wavelength.
Assuming the distance between transmit and receive arrays is $D_0$,  the location of the $i$-th transmit antenna is given by $ \mathbf{p}_{i}^{t}=(\frac{D_0}{2} + (i-1)d, 0)$, and  the location of the $j$-th receive antenna is given by $ \mathbf{p}_{j}^{r}=(-\frac{D_0}{2} - (j-1)d, 0)$.
The boundary between the near-field and far-field regions of the \ac{bs} is determined by the Rayleigh distance, which is given by $R = \min\{\frac{2(N_t-1)^2d^2}{\lambda}, \frac{2(N_r-1)^2d^2}{\lambda}\}$.
Assume that the moving target is located in the near-field region of both arrays, with its initial location and velocity denoted by $\mathbf{p}_0 = (x_0, y_0)$ and $\mathbf{v}_0 = (v_x, v_y)$, respectively.

 Each antenna of the transmitter emits temporally orthogonal signals that impinge on the target, and the reflections are intercepted
by the receiver\cite{li2008mimo}.
The signal received at the $j$ -th receive antenna at time $t$ is given by
\begin{equation} \label{receive_sig_jth}
    \begin{aligned}
    {y}_j(t)\!=&\!\!\sum\nolimits_{i=1}^{N_t}\!\beta_{ij}e^{-j\frac{2\pi}
     {\lambda}\!\left[r_{T_i}\!(t)+r_{R_j}\!(t)\right]} s_i(t\!-\!{\tau}_{T_i}\!(t)\!-\!{\tau}_{R_j}\!(t))\!\!\\
     &+\!{n}_{ij}(t),
\end{aligned}
\end{equation}
where $\beta_{ij}$ denotes the channel gain, ${n}_{ij}(t)$  denotes the additive noise which is characterized as i.i.d. complex Gaussian with zero mean and variance ${\sigma}^2$.
 The time-variant propagation distances from the $i$-th transmit antenna to the target and from the $j$-th receive antenna to the target are represented by $r{T_i}(t)$ and $r_{R_j}(t)$, respectively. 
Specifically, we have $r_{T_i}(t) = r_{T_i}+v_i t$ and $r_{R_j}(t) = r_{R_j}+v_j t$, where $r_{T_i} = \|\mathbf{p}_i^t - \mathbf{p}_0\|$ and $r_{R_j} = \|\mathbf{p}_j^r - \mathbf{p}_0\|$ denote initial distances between the target and the $i$-th transmit antenna and the $j$-th receive antenna, respectively.
The terms $v_i$ and $v_j$ represent the radial velocity components obtained by projecting the target's velocity vector onto the lines connecting the target with the $i$-th transmit antenna and the $j$-th receive antenna, respectively.
Additionally, ${\tau}_{T_i}(t) = {r_{T_i}(t)}/{c}$ and ${\tau}_{R_j}(t) = {r_{R_j}(t)}/{c}$ are the time delays, where $c$ denotes the speed of light.
Following the radar range equation\cite{richards2010principles}, the channel gain $\beta_{ij}$ can be modeled as $\beta_{ij} = {\frac{\sqrt{{{P}_{T}}{{G}_{T}}G_R}{\alpha}_{\text{RCS}}}{{{r}_{{{T}_{i}}}}{{r}_{R_j}}}}$, where $P_T$ is the transmit signal power, $G_T$ represents the antenna gain of the transmit array,  $G_R$ represents the antenna gain of the receive array, ${\alpha}_{\text{RCS}}$ is the \ac{rcs} of the target, which is  modeled as a Swerling I-type fluctuation, remaining constant over the observation interval with ${\alpha}_{\text{RCS}} \sim \mathcal{CN}(0, \sigma_s^2)$\cite{skolnik1980introduction}.

Considering $L$ pulses within a \ac{cpi}, the slow-time sampling instants are given by $t_l = lT$ for $l=0,...,L-1$, where $T$ denotes the \ac{pri}. 
Under the common no-range-migration assumption within short \ac{cpi}\cite{richards2010principles}, the matched filter output at the $j$-th receive antenna for the $i$-th transmit antenna and $l$-th pulse is given by
\begin{equation}
Y_{ij}^{(l)} = \beta_{ij} \exp\Big(-j\frac{2\pi}{\lambda}\Phi_{ij}^{(l)}(\mathbf{p}_0,\mathbf{v}_0)\Big) + N_{ij}^{(l)},
\end{equation}
where the aggregate phase accumulation $\Phi_{ij}^{(l)}(\mathbf{p}_0,\mathbf{v}_0) \in \mathbb{R}$ combines spatial propagation delay and Doppler-induced phase modulation, i.e.,
\begin{equation}\label{phase_component}
\Phi_{ij}^{(l)}(\mathbf{p}_0,\mathbf{v}_0) \!\triangleq \!\underbrace{\|\mathbf{p}_i^t\!-\!\mathbf{p}_0\| \!+\! \|\mathbf{p}_j^r\!-\!\mathbf{p}_0\|}_{\text{Static path delay}}\! + \!\!\!\underbrace{\lambda t_l f_{ij}(\mathbf{p}_0,\mathbf{v}_0)}_{\text{Dynamic Doppler shift}}.
\end{equation}
Specifically, the Doppler shift $f_{ij}(\mathbf{p}_0,\mathbf{v}_0)$ for the $(i,j)$-th \ac{t-r} antenna pair is given by
\begin{equation}\label{f_ij_antena}
     \setlength{\abovedisplayskip}{2pt}
	\setlength{\belowdisplayskip}{2pt}
        {f}_{ij}(\mathbf{p}_0,\mathbf{v}_0)
        =\frac{1}{\lambda }\Big[ \frac{{[ \mathbf{p}_{i}^{t}-{{\mathbf{p}}_{0}}]^{T}}{{\mathbf{v}}_{0}}}{\| \mathbf{p}_{i}^{t}-{{\mathbf{p}}_{0}}\|}+\frac{{[ \mathbf{p}_{j}^{r}-{{\mathbf{p}}_{0}}]^{T}}{{\mathbf{v}}_{0}}}{\| \mathbf{p}_{j}^{r}-{{\mathbf{p}}_{0}}\|} \Big].
\end{equation}
Therefore, the maximum likelihood estimation  requires solving a non-convex optimization, i.e.,
\begin{equation} \label{original_problem_ml}
     \setlength{\abovedisplayskip}{3pt}
	\setlength{\belowdisplayskip}{3pt}
 \arg\min_{\mathbf{p}_0,\mathbf{v}_0} \sum_{i=1}^{N_t}\sum_{j=1}^{N_r}\sum_{l=1}^L \Big| {Y_{ij}^{(l)}} -\beta_{ij} e^{-j\frac{2\pi}{\lambda}\bm{{\Phi}}_{ij}^{(l)}(\mathbf{p}_0,\mathbf{v}_0)} \Big|^2.
\end{equation}

The phase function in \eqref{phase_component} induces inseparable nonlinear coupling between $\mathbf{p}_0$ and $\mathbf{v}_0$ through each antenna pair, which necessitates 4D joint estimation over $(x_0,y_0,v_x,v_y)$. 
This  renders the problem  \eqref{original_problem_ml}  computationally intractable due to the exponential complexity scaling with the parameter space dimensionality. 
To address this limitation, we propose a low-complexity variational Bayesian approach in the following sections, which enables subarray-level distributed processing.

\section{Probability Model of Joint Location and Velocity Estimation} \label{section_Probability_model}
 In this section, we establish the probability model for joint location and velocity estimation problem, leveraging the piecewise-far-field approximation of the near-field channel.

\subsection{Piecewise-Far-Field Channel Model}

To decouple the nonlinear interdependency between location and velocity in the phase term \eqref{phase_component} while maintaining acceptable accuracy, we employ a piecewise-far-field channel model to approximate the intricate near-field channel\cite{cui2024nearwide}.

It can be observed that the Rayleigh distance is proportional to the square of the number of antennas, implying that the far-field assumption becomes more accurate for a smaller number of antennas.
Consequently, in the piecewise-far-field channel model, we partition the entire transmit and receive arrays into $K_t\triangleq \frac{N_t}{M}$ and $K_r\triangleq\frac{N_r}{N_0}$ non-overlapping subarrays, respectively, where $M$ denotes the number of antennas per subarray. 
This partitioning creates subarrays with significantly reduced Rayleigh distances (typically meters), enabling far-field planar-wave approximations within individual subarrays while preserving near-field spherical-wave characteristics across subarrays.

Specifically, we select the first antenna of each subarray as the reference antenna to represent the subarray's location.
Thus, the position of the $m$-th transmit subarray is given by ${\mathbf{p}}_m^t =(x_m^t, y_m^t)=(\frac{D_0}{2} + (m-1)Md, 0)$, and the position of the $n$-th receive subarray is given by ${\mathbf{p}}_n^r =(x_n^r, y_n^r)=(-\frac{D_0}{2} - (n-1)Md, 0)$, where $m \in \mathcal{I}_t \triangleq \{1, 2, \dots, K_t\}$ for the transmit array and $n \in \mathcal{I}_r \triangleq \{1, 2, \dots, K_r\}$ for the receive array.
Based on the far-field assumption within each subarray, the \ac{dod} $\tilde{\theta}_m$ is defined as the azimuth angle between the target  and the $m$-th transmit subarray's reference antenna, i.e., $\tilde{\theta}_m \triangleq \arctan\left(\frac{x_0 - x_m^t}{y_0 - y_m^t}\right), \quad m \in \mathcal{I}_t$.
Similarly, the \ac{doa} $\tilde{\phi}_n$ characterizes the arrival angle from the target to the $n$-th receive subarray at $\mathbf{p}_n^r$, i.e., $\tilde{\phi}_n \triangleq \arctan\left(\frac{x_0 - x_n^r}{y_0 - y_n^r}\right), \quad n \in \mathcal{I}_r$.
Therefore, each \ac{t-r} subarray pair can be regarded as a traditional bistatic radar. %operating in the far-field region.
The matched filter output of  the received signal for
the $(m, n)$-th \ac{t-r} subarray pair at the $l$-th pulse is given by
\begin{equation} \label{z_ij_l}
    \begin{aligned}
        \mathbf{z}_{mn}^{(l)}= {\beta}_{mn}{{e}^{-j2\pi {{f}_{mn}}{{t}_{l}}}}\mathbf{a}_r({\phi}_n)\otimes \mathbf{a}_t(\theta_m) + \mathbf{n}_{mn}^{(l)},
    \end{aligned}
\end{equation}
 {where ${\beta}_{mn} \triangleq \frac{\sqrt{{{P}_{T}}{{G}_{T}}G_R}\alpha_{\text{RCS}}}{r_{t,m}r_{r,n}}$, with $r_{t,m} \triangleq\| {\mathbf{p}}_{m}^{t}-{{\mathbf{p}}_{0}} \|$ and $r_{r,n} \triangleq \| {\mathbf{p}}_{n}^{r}-{{\mathbf{p}}_{0}}\|$}.
 The bistatic Doppler frequency shift associated with the $(m, n)$-th \ac{t-r} subarray pair is represented by ${f}_{mn}$, which is given by
\begin{equation}
     \setlength{\abovedisplayskip}{2pt}
	\setlength{\belowdisplayskip}{2pt}
        {f}_{mn}
       =-\frac{1}{\lambda }\Big[ \frac{{[ \mathbf{p}_{m}^{t}-{{\mathbf{p}}_{0}}]^{T}}{{\mathbf{v}}_{0}}}{\| \mathbf{p}_{m}^{t}-{{\mathbf{p}}_{0}}\|}+\frac{{[ \mathbf{p}_{n}^{r}-{{\mathbf{p}}_{0}}]^{T}}{{\mathbf{v}}_{0}}}{\| \mathbf{p}_{n}^{r}-{{\mathbf{p}}_{0}}\|} \Big].
\end{equation}
In addition, ${{\mathbf{a}}_{t}}({{\theta }_{m}})={{[1,{{e}^{j{{\theta }_{m}}}},\ldots ,{{e}^{j(M-1){{\theta }_{m}}}}]^{T}}}$ and ${{\mathbf{a}}_{r}}({{\phi }_{n}})={{[1,{{e}^{j{{\phi }_{n}}}},\ldots ,{{e}^{j(M-1){{\phi }_{n}}}}]^{T}}}$ denote the $m$-th transmit and $n$-th receive subarray steering vectors, respectively, with ${{\theta }_{m}}\triangleq {2\pi d }/{\lambda }\sin {{{\tilde{\theta }}}_{m}}$ and ${{\phi }_{n}}\triangleq {2\pi d}/{\lambda }\sin {{{\tilde{\phi }}}_{n}}$.
The post-matched-filter noise term $\mathbf{n}_{mn}^{(l)}$ preserves the complex Gaussian distribution\cite{richards2005fundamentals}.

By stacking the slow-time samples $\mathbf{z}_{mn}^{(l)}$ for $t_l = lT$, $l = 0, 1, \ldots, L-1$ as columns, we obtain that
\begin{equation}\label{stack_zij}
     \setlength{\abovedisplayskip}{2pt}
	\setlength{\belowdisplayskip}{2pt}
        \mathbf{z}_{mn} = {\beta}_{mn}\mathbf{a}_r({\phi}_n)\otimes \mathbf{a}_t(\theta_m)  \otimes \mathbf{d}({f}_{mn})\ + \mathbf{n}_{mn}, 
\end{equation}
where the complex channel gain ${\beta}_{mn}$  follows $\mathcal{CN}(0,\varsigma_{mn})$, and
\begin{equation}
     \setlength{\abovedisplayskip}{2pt}
	\setlength{\belowdisplayskip}{2pt}
    \mathbf{d}({f}_{mn}) \!=\! [1, e^{-j2\pi {f}_{mn}T}, \ldots, e^{-j2\pi {f}_{mn}{(L-1)T}}]^T.
\end{equation}

\subsection{Probabilistic Model and Factor Graph Representation}
Building upon the piecewise-far-field channel model, we now establish the probability model of the joint location and velocity estimation problem.
Based on the subarray signal model in \eqref{stack_zij}, the likelihood function of the received signal $ \mathbf{z}_{mn}$ given $\mathbf{p}_0$, $\mathbf{v}_0$ and ${\beta}_{mn}$ can be expressed as
\begin{equation}\label{likelihood_zmn}
     \setlength{\abovedisplayskip}{4pt}
	\setlength{\belowdisplayskip}{1pt}
\begin{aligned}
    & p( {{\mathbf{z}}_{mn}}| {{\mathbf{p}}_{0}},{{\mathbf{v}}_{0}},{{\beta }_{mn}})=\mathcal{C}\mathcal{N}( {{\mathbf{z}}_{mn}};{{\beta }_{mn}}\boldsymbol{\mu}_{mn},{{\sigma }}{\mathbf{I}_{{{M}^{2}}L}} ),\\
&=\frac{1}{{{(\pi {{\sigma }})}^{{{M}^{2}}L}}}\exp ( -\frac{1}{{{\sigma }}}\| {{\mathbf{z}}_{mn}}-{{\beta }_{mn}}{{\boldsymbol{\mu }}_{mn}}\|_{\text{2}}^{\text{2}} ),
\end{aligned}
\end{equation}
where $\boldsymbol{\mu}_{mn} \triangleq \mathbf{a}_r({\phi}_n)\otimes \mathbf{a}_t(\theta_m) \otimes \mathbf{d}({{f}_{mn}})$.
By denoting ${\boldsymbol{\beta}} \triangleq [\beta_{11}, \ldots, \beta_{mn},\ldots, \beta_{K_tK_r}]$, the joint \ac{pdf} is given by
\begin{equation} \label{origi_joint_distribution}
     \setlength{\abovedisplayskip}{4pt}
	\setlength{\belowdisplayskip}{3pt}
    \begin{aligned}
        p( \mathbf{z},{{\mathbf{p}}_{0}},{{\mathbf{v}}_{0}},  \boldsymbol{\beta}) =  &\prod\nolimits_{m=1}^{{{K}_{t}}}\prod\nolimits_{n=1}^{{{K}_{r}}}p({{\mathbf{z}}_{mn}}| {{\mathbf{p}}_{0}},{{\mathbf{v}}_{0}},\!{{\beta }_{mn}})\\
       & \times p( {{\beta }_{mn}})p\left( {{\mathbf{p}}_{0}}\right)p\left({{\mathbf{v}}_{0}}\right).
        \end{aligned}
\end{equation}

 Our aim is to find the posteriori distributions $p( {{\mathbf{p}}_{0}}| \mathbf{z})$ and $p( {{\mathbf{v}}_{0}}| \mathbf{z})$ and their estimates by using either the \ac{mmse} or \ac{map} principles.
 Following the Bayes’ theorem, the posterior distributions of  $\mathbf{p}_0$ and $\mathbf{v}_0$ are given by
\begin{subequations} \label{post_pdf_po_vo}
     \setlength{\abovedisplayskip}{3pt}
	\setlength{\belowdisplayskip}{3pt}
    \begin{align}
    \label{post_pdf_p0}
         & p( {{\mathbf{p}}_{0}}| \mathbf{z}  )=\int{\frac{p( \mathbf{z},{{\mathbf{p}}_{0}},{{\mathbf{v}}_{0}},\boldsymbol{\beta})}{p( \mathbf{z})}}d{{\mathbf{v}}_{0}}d{\boldsymbol{\beta}} , \\  \label{post_pdf_v0}
 & p( {{\mathbf{v}}_{0}}| \mathbf{z})=\int{\frac{p( \mathbf{z},{{\mathbf{p}}_{0}},{{\mathbf{v}}_{0}},\boldsymbol{\beta} )}{p( \mathbf{z} )}}d{{\mathbf{p}}_{0}}d{\boldsymbol{\beta}} .
    \end{align}
\end{subequations}

However, the coupled nature of $\mathbf{p}_0$, $\mathbf{v}_0$, and $\boldsymbol{\beta}$ in \eqref{origi_joint_distribution} renders the direct evaluation of the high-dimensional integrals in \eqref{post_pdf_po_vo} computationally prohibitive.
To resolve the intricate coupling, we introduce intermediate variables \(\phi_n\) (\ac{doa}), \(\theta_m\) (\ac{dod}), and \(f_{mn}\) (bistatic Doppler frequency) for each \ac{t-r} subarray pair.
By conditioning on these intermediate variables, the likelihood \(p(\mathbf{z}_{mn} | \mathbf{p}_0, \mathbf{v}_0, \beta_{mn})\) can be factorized into a cascade of simpler distributions: \(p(\mathbf{z}_{mn} | \beta_{mn}, \phi_n, \theta_m, f_{mn})\) and the geometric mappings \(p(\phi_n | \mathbf{p}_0)\), \(p(\theta_m | \mathbf{p}_0)\), and \(p(f_{mn} | \phi_n, \theta_m, \mathbf{v}_0)\), i.e.,
\begin{equation} \label{joint_pdf}
    \begin{aligned}
 & p( \mathbf{z},{{\mathbf{p}}_{0}},{{\mathbf{v}}_{0}}, {\boldsymbol{\beta}})  \!=\!{\prod\nolimits_{m=1}^{{{K}_{t}}}\prod\nolimits_{n=1}^{{{K}_{r}}}{p( {{\mathbf{z}}_{mn}} | {{\phi }_{n}},{{\theta }_{m}},{{f}_{mn}},{{\beta }_{mn}} )}}\times \\ 
 &  p( {{f}_{mn}} | {{\theta }_{m}},{{\phi }_{n}},{{\mathbf{v}}_{0}} ) p( {{\mathbf{v}}_{0}} )p( {{\phi }_{n}} | {{\mathbf{p}}_{0}})p( {{\theta }_{m}} | {{\mathbf{p}}_{0}} )p( {{\mathbf{p}}_{0}})p( {{\beta }_{mn}} ),
    \end{aligned}
\end{equation}
where
\begin{subequations} \label{geometry_relationship}
     \setlength{\abovedisplayskip}{2pt}
	\setlength{\belowdisplayskip}{2pt}
    \begin{align}
  & p( {{\phi }_{n}}| {{\mathbf{p}}_{0}} )=\delta ({{\phi }_{n}}-{2\pi d}/{\lambda }\mathbf{e}(\phi_n)^T{{\mathbf{e}}_{x}}), \\ 
 & p( {{\theta }_{m}}| {{\mathbf{p}}_{0}})=\delta ( {{\theta }_{m}}-{2\pi d}/{\lambda} \mathbf{e}(\theta_m)^T{{\mathbf{e}}_{x}}), \\ 
\label{fmn_geomitric_relation} 
 & p( {{f}_{mn}} | {{\phi }_{n}},{{\theta }_{m}},{{\mathbf{v}}_{0}} )\\ \notag
 &= \delta ( {{f}_{mn}} - {1}/{\lambda}[ \mathbf{v}_{0}^{T}\mathbf{e}(\phi_n)  + \mathbf{v}_{0}^{T}\mathbf{e}(\theta_m)]) \\ \notag
&= \!\delta(\! {{f}_{mn}}\!\!-\!\!{1}/{\lambda}[ v_x\!\cos\tilde{\theta}_m \!\!+\!  v_y\sin\tilde{\theta}_m \!\!+\! v_x\cos\tilde{\phi}_n \!\!+\! v_y\sin\tilde{\phi}_n ] ),
\end{align}
\end{subequations}
with 
\begin{equation}
     \setlength{\abovedisplayskip}{2pt}
	\setlength{\belowdisplayskip}{3pt}
    \mathbf{e}(\phi_n)\triangleq\frac{{( \mathbf{p}_{n}^{r}-{{\mathbf{p}}_{0}})}}{\left\|\mathbf{p}_{n}^{r}-{{\mathbf{p}}_{0}} \right\|},\ \ \mathbf{e}(\theta_m)\triangleq\frac{{{( \mathbf{p}_{m}^{t}-{{\mathbf{p}}_{0}})}}}{\left\| \mathbf{p}_{m}^{t}-{{\mathbf{p}}_{0}} \right\|}.
\end{equation}

The  factorization in \eqref{joint_pdf} can be naturally represented by a factor graph \cite{loe2004factorgraph}, a bipartite graphical model consisting of variable nodes for parameters and factor nodes encoding their probabilistic relationships as detailed in Table~\ref{table_factor_nodes}. 
As shown in Fig.~\ref{factor_graph}, this graphical model explicitly captures the dependencies between motion parameters ($\mathbf{p}_0,\mathbf{v}_0$) and subarray-level variables ($\phi_n,\theta_m,f_{mn}, \beta_{mn}$), enabling the application of \ac{vbi} (Section \ref{section_posteri_pdf_sub}) and message passing techniques (Section \ref{section_algorithm}) for efficient location and velocity estimation.

% \begin{table*}[htbp]
% \centering
%  \caption{Factor Nodes in Fig.~\ref{factor_graph}}
%  \label{table_factor_nodes}
% \begin{tabular}{c c}
% \hline
% Factor Node & Factor Function \\
% \hline
% $p_{\phi_n}$ & $p(\phi_n|\mathbf{p}_0) =\delta \left(\phi_n - \frac{(\mathbf{p}_n^r - \mathbf{p}_0)^T \mathbf{e}_y}{\Vert \mathbf{p}
% _n^r - \mathbf{p}_0 \Vert}\right)$ \\
% $p_{\theta_m}$  &  $p(\theta_m|\mathbf{p}_0)=\delta \left(\theta_m - \frac{(\mathbf{p}_m^t - \mathbf{p}_0)^T \mathbf{e}_y}{\Vert \mathbf{p}_m^t - \mathbf{p}_0 \Vert}\right)$ \\
% $p_{f_{mn}}$ & $p(f_{mn}|\phi_n, \theta_m, \mathbf{v}_0)=\delta \left(f_{mn} - \frac{1}{\lambda} (\mathbf{v}_0^T \mathbf{e}(\theta_m) + \mathbf{v}_0^T \mathbf{e}(\phi_n))\right)$ \\
% $p_{\beta_{mn}}$ & $p(\beta_{mn})=\mathcal{CN}(\beta_{mn}; 0, \sigma_{mn}^2)$ \\
% $p_{z_{mn}}$ & $p(\mathbf{z}_{mn}|\phi_n, \theta_m, f_{mn}, \beta_{mn})=\mathcal{CN}(\boldsymbol{\mu}_{mn} , \sigma^2 \mathbf{I}_{M^2 L})$ \\
% \hline
% \end{tabular}
% \end{table*}

\begin{table}[t]
\centering
\caption{Factor Nodes in Fig.~\ref{factor_graph}}
\label{table_factor_nodes}
\begin{tabular}{|c|c|}
\hline
Factor Node & Factor Function \\
\hline
$p_{\phi_n}$ & $p(\phi_n|\mathbf{p}_0)$ \\
\hline
$p_{\theta_m}$ & $p(\theta_m|\mathbf{p}_0)$ \\
\hline
$p_{f_{mn}}$ & $p(f_{mn}|\phi_n, \theta_m, \mathbf{v}_0)$ \\
\hline
$p_{\beta_{mn}}$ & $p(\beta_{mn})$ \\
\hline
$p_{z_{mn}}$ & $p(\mathbf{z}_{mn}|\phi_n, \theta_m, f_{mn}, \beta_{mn})$ \\
\hline
\end{tabular}\vspace{-3mm}
\end{table}

\begin{figure} [t]
\setlength{\abovecaptionskip}{-0.1cm}
		\centering
	\includegraphics[width=0.35\textwidth]{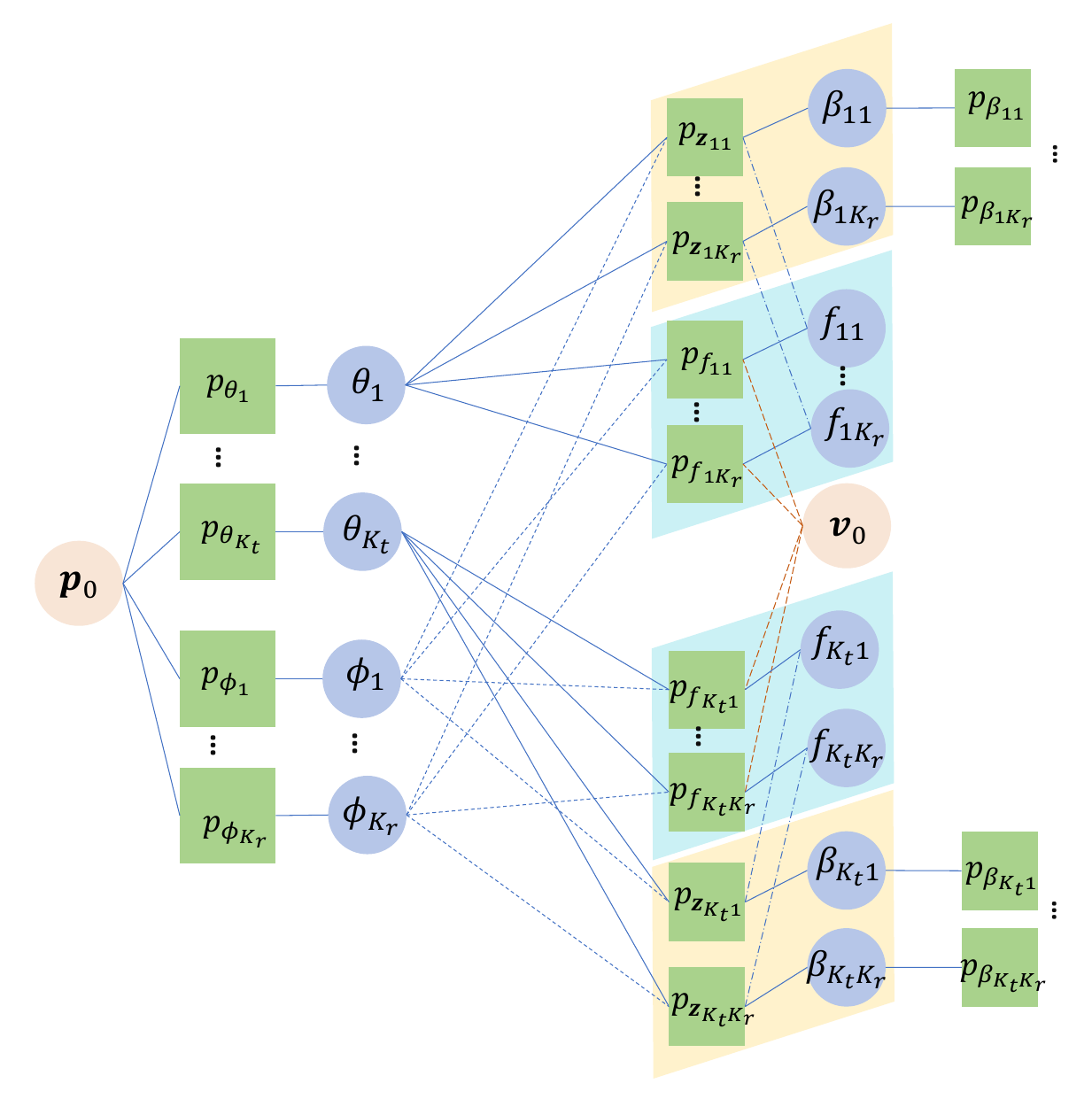}
		\caption{Factor graph representation of \eqref{joint_pdf}, which represents variables as circles and factors as squares, where factors $p({{\mathbf{p}}_{0}})$ and $p({{\mathbf{v}}_{0}})$ are omitted.}
		\label{factor_graph}\vspace{-3mm}
	\end{figure}

\section{Variational Bayesian Inference} \label{section_posteri_pdf_sub}
In this section, we  develop a \ac{vbi} framework   with an alternating optimization algorithm  to compute the approximate posterior distributions of  subarray-level parameters and estimate the noise variance as well as complex amplitude variance hyperparameters.
%As the location and velocity estimates rely on the intermediate parameters, such as \ac{dod}, \ac{doa}, and bistatic Doppler shift, extracted from each transmit-receive subarray pair, we first derive their efficient approximate posterior distributions. 

Taking the $(m,n)$-th \ac{t-r} subarray pair as an example, other subarray pairs can be computed in parallel.
Let $\boldsymbol{\Theta} =\left\{ {{\theta }_{m}},{{\phi }_{n}},{{f}_{mn}},{{\beta }_{mn}} \right\}$ denote the set of parameters to be estimated in the subarray pair, and $\boldsymbol{\alpha }=\left\{ {{\varsigma }_{mn}},\sigma  \right\}$ represent the hyperparameters. 
According to Bayesian rule, we have 
\begin{equation} \label{posteri_pdf_subarray}
   p( \boldsymbol{\Theta}  | {{\mathbf{z}}_{mn}};\boldsymbol{\alpha})\propto \frac{p( {{\mathbf{z}}_{mn}},\boldsymbol{\Theta} ;\boldsymbol{\alpha })}{p ( {{\mathbf{z}}_{mn}};\boldsymbol{\alpha } )},
\end{equation}
where  ${p( {{\mathbf{z}}_{mn}};\boldsymbol{\alpha } )}$ is the model evidence, which acts as a normalizing constant.
The numerator $p( {{\mathbf{z}}_{mn}},\boldsymbol{\Theta} ;\boldsymbol{\alpha })$ is the joint \ac{pdf}, which is given by
\begin{equation} \label{joint_pdf_trsubarray}
        p( {{\mathbf{z}}_{mn}},\Theta;\boldsymbol{\alpha})\!=\!
         p( {{\mathbf{z}}_{mn}} | \Theta ;\boldsymbol{\alpha}) p( {{\theta }_{m}}) p( {{\phi }_{n}}) p( {{f}_{mn}}) p( {{\beta }_{mn}};\boldsymbol{\alpha}).
\end{equation}

Calculating the posterior distributions of subarray-level parameters in $\boldsymbol{\Theta}$ is crucial, as they form the basis for the final location and velocity estimates. 
However, exact Bayesian inference via direct marginalization of the joint posterior in \eqref{posteri_pdf_subarray} becomes computationally intractable due to the high-dimensional integration  and  the non-Gaussian nature of  angular parameters $\theta_m$, $\phi_n$, and Doppler frequency $f_{mn}$,  which precludes closed-form solutions.

Therefore, we resort to the \ac{vbi} methodology\cite{Tzikas2008vbi}, which approximates the true posterior $p(\boldsymbol{\Theta}|\mathbf{z}_{mn};\boldsymbol{\alpha})$ with a factorized surrogate distribution $q(\boldsymbol{\Theta}|\mathbf{z}_{mn})$ while maintaining computational feasibility. 
Specifically, we impose a mean-field approximation that factorizes the joint posterior as
\begin{equation} \label{surrogate_posteri_pdf}
    q( \mathbf{\Theta } | {{\mathbf{z}}_{mn}})\!=\!q( {{\theta }_{m}} | {{\mathbf{z}}_{mn}})q( {{\phi }_{n}} | {{\mathbf{z}}_{mn}} ) q( {{f}_{mn}} | {{\mathbf{z}}_{mn}}) q( {{\beta }_{mn}}| {{\mathbf{z}}_{mn}}).
\end{equation}
The optimal surrogate distribution minimizes the \ac{kld} $D_{KL}(q\|p)$, equivalent to maximizing the \ac{elbo}:
\begin{equation} \label{elbo_origi_express}
    \mathcal{L}\left( q( \mathbf{\Theta }| {{\mathbf{z}}_{mn}} ) \right)={\mathbb{E}_{q( \mathbf{\Theta } | {{\mathbf{z}}_{mn}})}}\left[ \ln \frac{p( {{\mathbf{z}}_{mn}},\Theta ;\boldsymbol{\alpha })}{q( \mathbf{\Theta } | {{\mathbf{z}}_{mn}})} \right].
\end{equation}

Then, we solve the optimization of \eqref{elbo_origi_express} through \ac{cavi}\cite{bishop2006pattern}, iteratively updating each factor $q(u|\mathbf{z}_{mn}), u\in\boldsymbol{\Theta}$ while fixing others. 
For any parameter $u\in\boldsymbol{\Theta}$, the optimal factor satisfies
\begin{equation} \label{trans_elbo_wrt_qu}
    \ln q(u|\mathbf{z}_{mn}) \propto \mathbb{E}_{\backslash u}\left[\ln p(\mathbf{z}_{mn},\boldsymbol{\Theta};\boldsymbol{\alpha})\right],
\end{equation}
where $\propto$ denotes proportionality up to a normalization constant, and ${\mathbb{E}_{\backslash u}}[\cdot]$ denote the expectation over $\frac{q( \mathbf{\Theta} | {{\mathbf{z}}_{mn}} )}{q(u| {{\mathbf{z}}_{mn}} )}$.

To enable unified treatment of spatial-temporal parameters, we introduce the normalized Doppler frequency ${\tilde{f}}_{mn} \triangleq 2\pi T f_{mn}$.
This normalized transformation ensures ${\tilde{f}}_{mn}$ resides on the same $[-\pi,\pi)$ circular domain as $\phi_n$ and $\theta_m$, permitting unified probabilistic modeling through \ac{vm} distributions\footnote{The \ac{vm} distribusion is a circular analogue to Gaussian distributions in linear spaces where it serves as the natural exponential family distribution for periodic variables\cite{bishop2006pattern}.}. 
Following variational line spectral estimation principles \cite{badiu2017variational, zhang2020multiVALSE}, the \ac{vm} parameterization enables closed-form variational updates, while inherently circumventing spectral discretization artifacts caused by grid-based compressed sensing approaches.
The \ac{vm} distribution is formally defined as follows.  
\begin{definition} \label{definition_vmd}
	\emph{ The \ac{vm} distribution, denoted as $\mathcal{M}(\theta ; \mu, \kappa)$, is a continuous probability distribution defined on the circle, often used to model angular variables\cite{bishop2006pattern}. Its probability density function is given by
    \begin{equation}
         \setlength{\abovedisplayskip}{2pt}
	\setlength{\belowdisplayskip}{2pt}
        \mathcal{M}(\theta ; \mu, \kappa)=\frac{1}{2 \pi I_0(\kappa)} \exp (\kappa \cos (\theta-\mu)),
    \end{equation}
    where $\theta \in [-\pi, \pi)$ is the angular variable, $\mu \in [-\pi, \pi)$ is the mean direction, $\kappa \geq 0$ is the concentration parameter that controls the concentration of the distribution around the mean, and $I_0(\cdot)$ is the modified Bessel function of the first kind and order zero. Additionally, the \ac{vm} \ac{pdf} can also be
parameterized in terms of $\eta = \kappa e^{j\mu}$, i.e.,
\begin{equation}
     \setlength{\abovedisplayskip}{2pt}
	\setlength{\belowdisplayskip}{1pt}
    \mathcal{M}(\theta ; \eta)=\frac{1}{2 \pi I_0(|\eta|)} \exp(\Re\{\eta^* e^{j \theta}\}).
\end{equation} }
\end{definition}

\subsection{Inferring the Posterior Probability of \texorpdfstring{$\boldsymbol{\Theta}$}.} 
% Building on the directional statistics foundation, we subsequently derive closed-form variational solutions for both the parameter set $\boldsymbol{\Theta}$ and hyperparameters $\boldsymbol{\alpha}$ through systematic maximization of the \ac{elbo} in \eqref{elbo_origi_express}, as detailed in the following subsections.
In this subsection, we derive closed-form variational posterior distributions for the parameter set $\boldsymbol{\Theta}$, alternating between updating $\{\theta_m, \phi_n, \tilde{f}_{mn}\}$  following \ac{vm} distributions and $\beta_{mn}$ as a complex Gaussian. Each parameter is updated while keeping the hyperparameters and other variables fixed, ensuring computationally tractable Bayesian inference.

\begin{lemma}[Variational Posteriors for $\phi_n,\theta_m$, and $\tilde{f}_{mn}$]\label{lemma_subarray_vm}
For each parameter $x \in \{\phi_n,\theta_m,\tilde{f}_{mn}\}$ with VM prior $\mathcal{M}(x;\bar{\eta}_x)$, the optimal posterior approximation $q(x|\mathbf{z}_{mn})$ follows a \ac{vm} distribution  $\mathcal{M}(x;\hat{\eta}_x)$, where:
\begin{subequations}\label{eta_compute}
\begin{align}
\hat{x} &= \bar{x} - {g'(\bar{x})}/{g''(\bar{x})} \\
\hat{\kappa}_x &= A^{-1}\left[\exp\left({1}/{2g''(\bar{x})}\right)\right]
\end{align}
\end{subequations}
with \( g(x) \triangleq \Re[\bar{\eta}_x^*e^{jx} + \sum_{k \in \mathcal{K}_x} \eta_{x,k}^*e^{jkx}] \), where \( \mathcal{K}_x = \{1,...,M-1\} \) for \( x \in \{\phi_n,\theta_m\} \) and \( \mathcal{K}_x = \{1,...,L-1\} \) for \( x = \tilde{f}_{mn} \). 
Additionally, $\bar{x}$ is obtained at the maximum of $g(x)$.
The conjugate parameters are defined as
\begin{subequations}
    \begin{align}
&\tilde{\eta}_{\phi_n,k}^* \!\!\triangleq\!\! \boldsymbol{\eta}_{mn}\!\langle k \rangle\!^H \hat{\mathbf{c}}_{mn}, 
\begin{cases}
\!\boldsymbol{\eta}_{mn} \triangleq -\frac{2}{\sigma}\mathbf{z}_{mn}, \\
\!\hat{\mathbf{c}}_{mn}\! \triangleq \!\!\beta_{mn}\mathbf{a}(\theta_m) \!\!\otimes\! \mathbf{d}(f_{mn})
\end{cases},
\\
&\tilde{\eta}_{\theta_m,k}^*\!\! \triangleq\! \boldsymbol{\eta}_{mn}^{(1)}\!\langle k \rangle\!^H \hat{\mathbf{c}}_{mn}^{(1)}, 
\begin{cases}
\!\boldsymbol{\eta}_{mn}^{(1)} \triangleq -\frac{2}{\sigma}\mathbf{z}_{mn}^{(1)}, \\
\!\hat{\mathbf{c}}_{mn}^{(1)}\!\! \triangleq \!\!\beta_{mn}\mathbf{a}({\phi}_n) \!\!\otimes\! \mathbf{d}({f}_{mn})
\end{cases},
\\
&\eta_{\tilde{f}_{mn},l}^*\!\! \triangleq\! (\hat{\mathbf{c}}_{mn}^{(2)})\!^H\! \boldsymbol{\eta}_{mn}^{(2)}\!\langle l \rangle, 
\begin{cases}
\!\boldsymbol{\eta}_{mn}^{(2)} \triangleq -\frac{2}{\sigma}\mathbf{z}_{mn}^{(2)}, \\
\!{\mathbf{c}}_{mn}^{(2)}\!\! \triangleq\!\! \beta_{mn}\mathbf{a}({\phi}_n)\!\! \otimes \!\mathbf{a}({\theta}_m)
\end{cases},
    \end{align}
\end{subequations}
where \(\mathbf{z}_{mn}^{(i)} = \mathcal{P}_i(\mathbf{z}_{mn})\) denote dimension-rearranged versions of the original measurement vector, and $\langle k\rangle$ denotes the block indexing operator. 
Specifically, we define \(\boldsymbol{\eta}\langle k \rangle \triangleq [\boldsymbol{\eta}]_{kML:(k+1)ML}\), \(\boldsymbol{\eta}^{(1)}\langle k \rangle \triangleq [\boldsymbol{\eta}^{(1)}]_{kML:(k+1)ML}\), and  \(\boldsymbol{\eta}^{(2)}\langle l \rangle \triangleq [\boldsymbol{\eta}^{(2)}]_{lM^2:(l+1)M^2}\).
\end{lemma}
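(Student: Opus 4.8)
The plan is to specialize the \ac{cavi} stationarity condition \eqref{trans_elbo_wrt_qu} to a single circular parameter $x\in\{\phi_n,\theta_m,\tilde{f}_{mn}\}$, show that the resulting log-posterior collapses onto the trigonometric polynomial $g(x)$, and then project the result onto the \ac{vm} family to read off $\hat{\eta}_x$. First I would substitute the joint density \eqref{joint_pdf_trsubarray} into \eqref{trans_elbo_wrt_qu} and discard every term independent of $x$. Only two contributions survive: the Gaussian log-likelihood \eqref{likelihood_zmn} and the \ac{vm} log-prior $\mathcal{M}(x;\bar{\eta}_x)$, which supplies $\Re\{\bar{\eta}_x^{*}e^{jx}\}$. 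Expanding $\|\mathbf{z}_{mn}-\beta_{mn}\boldsymbol{\mu}_{mn}\|_2^2$, the quadratic term $|\beta_{mn}|^2\|\boldsymbol{\mu}_{mn}\|_2^2=|\beta_{mn}|^2M^2L$ is \emph{constant}, since every entry of the Kronecker-structured $\boldsymbol{\mu}_{mn}$ has unit modulus; hence the entire $x$-dependence of the likelihood is carried by the cross term $\tfrac{2}{\sigma}\Re\{\beta_{mn}^{*}\boldsymbol{\mu}_{mn}^{H}\mathbf{z}_{mn}\}$.

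Second, I would exploit the factorization $\boldsymbol{\mu}_{mn}=\mathbf{a}_r(\phi_n)\otimes\mathbf{a}_t(\theta_m)\otimes\mathbf{d}(f_{mn})$. Taking $x=\phi_n$ as the representative case, $\mathbf{a}_r(\phi_n)$ is the outermost factor, so $\boldsymbol{\mu}_{mn}^{H}\mathbf{z}_{mn}$ splits into $M$ blocks of length $ML$ selected by $\langle k\rangle$, each weighted by $e^{-jk\phi_n}$ and pairing the inner factors $\mathbf{a}_t(\theta_m)\otimes\mathbf{d}(f_{mn})$ against $\mathbf{z}_{mn}\langle k\rangle$. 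Replacing $\beta_{mn}$ and the remaining steering vectors by their current variational means (using $\mathbb{E}[e^{jk\theta_m}]\propto e^{jk\hat{\theta}_m}$ under the \ac{vm} factor, the origin of the hats in $\hat{\mathbf{c}}_{mn}$) implements $\mathbb{E}_{\backslash x}[\cdot]$ and yields precisely the coefficients $\eta_{x,k}$. Folding the $k=0$ block into the normalization and combining with the prior harmonic gives $\ln q(x|\mathbf{z}_{mn})=g(x)+\mathrm{const}$. The cases $x=\theta_m$ and $x=\tilde{f}_{mn}$ are identical once the permutations $\mathcal{P}_1,\mathcal{P}_2$ rotate the relevant factor into the outermost slot, changing only the block size (to $ML$ and $M^2$) and the harmonic range $\mathcal{K}_x$.

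The main obstacle is that $g(x)$ contains harmonics $e^{jkx}$ up to order $M-1$ (or $L-1$), so $q(x|\mathbf{z}_{mn})\propto e^{g(x)}$ is a \emph{generalized} von Mises density, whereas the lemma asserts an ordinary \ac{vm} (which would require $\mathcal{K}_x=\{1\}$). Following the variational line-spectral machinery of \cite{badiu2017variational}, I would project $e^{g(x)}$ onto the \ac{vm} family in two moment-matching steps. The location $\hat{x}$ is obtained by one Newton step toward the mode of $g$, $\hat{x}=\bar{x}-g'(\bar{x})/g''(\bar{x})$. For the concentration, the second-order expansion $g(x)\approx g(\bar{x})+\tfrac{1}{2}g''(\bar{x})(x-\bar{x})^2$ is a wrapped Gaussian of variance $-1/g''(\bar{x})$ (note $g''(\bar{x})<0$ at the mode), whose first trigonometric moment equals $\exp(1/(2g''(\bar{x})))$; equating this to the mean resultant length $A(\kappa)=I_1(\kappa)/I_0(\kappa)$ of the target \ac{vm} and inverting gives $\hat{\kappa}_x=A^{-1}[\exp(1/(2g''(\bar{x})))]$, so that $\hat{\eta}_x=\hat{\kappa}_xe^{j\hat{x}}$. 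The delicate points left to verify are the unit-modulus cancellation that removes all quadratic $x$-dependence, the conjugation and sign bookkeeping relating $\boldsymbol{\eta}_{mn}=-\tfrac{2}{\sigma}\mathbf{z}_{mn}$ to the $\eta_{x,k}$, and the fact that it is \emph{first-moment} matching, not curvature matching, that produces the Bessel-ratio inversion.
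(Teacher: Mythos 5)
Your proposal is correct and follows essentially the same route as the paper's Appendix A: the CAVI stationarity condition reduces to the prior harmonic plus the cross term of the Gaussian likelihood (the quadratic term being constant by unit modulus), the Kronecker structure yields the block-indexed coefficients $\eta_{x,k}$ and hence a generalized von Mises density $\propto e^{g(x)}$, which is then projected onto the VM family via a Newton step for the mode and first-trigonometric-moment matching (the Bessel-ratio inversion $A^{-1}$) for the concentration. You in fact make explicit two points the paper leaves implicit — the $\|\boldsymbol{\mu}_{mn}\|_2^2=M^2L$ cancellation and the wrapped-Gaussian origin of $\exp(1/(2g''(\bar{x})))$ — which is a welcome clarification rather than a deviation.
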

\begin{proof}
Please see Appendix~\ref{proof_of_lemma_subarray_vm}.
\end{proof}

Then, we derive the variational posterior for ${\beta}_{mn}$ by maximizing the \ac{elbo} $\mathcal{L}$ in \eqref{elbo_origi_express} with respect to $q( {{\beta}_{mn}}| {{\mathbf{z}}_{mn}} )$ while keeping other factors fixed. 

\begin{lemma}[Variational  Posterior for ${\beta}_{mn}$] \label{lemma_beta_gaussian}
\emph{Given the complex Gaussian prior $p({\beta}_{mn}) = \mathcal{CN}(0, \varsigma_{mn})$, the optimal posterior approximation $q({\beta}_{mn}|\mathbf{z}_{mn})$ follows a complex Gaussian distribution:
\begin{equation} \label{eq_beta_gaussian_final}
q({\beta}_{mn}|\mathbf{z}_{mn}) = \mathcal{CN}( \hat{\beta}_{mn}, \tilde{\varsigma}_{mn})
\end{equation}
where the posterior mean $\hat{\beta}_{mn}$ and variance $\tilde{\varsigma}_{mn}$ are given by
\begin{equation}\label{eq_beta_mean} 
\begin{aligned}
\hat{\beta}_{mn} = \frac{{{\varsigma }_{mn}}\boldsymbol{\mu }_{mn}^{H}{{\mathbf{z}}_{mn}}}{\sigma +{{\varsigma }_{mn}}\left\| {{\boldsymbol{\mu }}_{mn}} \right\|_{2}^{2}} ,
\tilde{\varsigma}_{mn} = \frac{{{\varsigma }_{mn}}\sigma }{2( \sigma +{{\varsigma }_{mn}}\left\| {{\boldsymbol{\mu }}_{mn}} \right\|_{2}^{2})}
\end{aligned}
\end{equation}
with $\boldsymbol{\mu}_{mn} \triangleq \mathbb{E}_{q}\left[ \mathbf{a}_r({\phi}_n)\otimes \mathbf{a}_t(\theta_m) \otimes \mathbf{d}({{f}_{mn}})\right]$ denoting the expected steering vector, where the expectation is taken over the variational distributions of $\phi_n, \theta_m$, and $f_{mn}$.}
\end{lemma}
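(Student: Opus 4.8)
The plan is to specialize the coordinate-ascent stationary condition \eqref{trans_elbo_wrt_qu} to $u=\beta_{mn}$ and to show that the resulting $\ln q(\beta_{mn}|\mathbf{z}_{mn})$ is a quadratic form in $\beta_{mn}$, which immediately forces $q(\beta_{mn}|\mathbf{z}_{mn})$ to be complex Gaussian by the conjugacy of the Gaussian likelihood and the Gaussian prior. First I would substitute the factorization \eqref{joint_pdf_trsubarray} into \eqref{trans_elbo_wrt_qu} and discard every factor that does not depend on $\beta_{mn}$; only the likelihood from \eqref{likelihood_zmn} and the conjugate prior $p(\beta_{mn})=\mathcal{CN}(0,\varsigma_{mn})$ survive. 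Writing out the two Gaussian exponents yields $\ln q(\beta_{mn}|\mathbf{z}_{mn}) \propto -\tfrac{1}{\sigma}\mathbb{E}_{\backslash\beta_{mn}}\big[\|\mathbf{z}_{mn}-\beta_{mn}\boldsymbol{\mu}_{mn}\|_2^2\big] - |\beta_{mn}|^2/\varsigma_{mn}$, where the expectation is taken over the current \ac{vm} factors of $\theta_m,\phi_n,f_{mn}$ supplied by Lemma~\ref{lemma_subarray_vm} through the mean-field split \eqref{surrogate_posteri_pdf}.

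Next I would expand $\|\mathbf{z}_{mn}-\beta_{mn}\boldsymbol{\mu}_{mn}\|_2^2 = \|\mathbf{z}_{mn}\|_2^2 - 2\Re\{\beta_{mn}^*\boldsymbol{\mu}_{mn}^H\mathbf{z}_{mn}\} + |\beta_{mn}|^2\|\boldsymbol{\mu}_{mn}\|_2^2$ and push the expectation through each term. The cross term is \emph{linear} in the steering vector, so its expectation collapses to the expected steering vector $\boldsymbol{\mu}_{mn}=\mathbb{E}_q[\mathbf{a}_r(\phi_n)\otimes\mathbf{a}_t(\theta_m)\otimes\mathbf{d}(f_{mn})]$ exactly as defined in the statement; and because every steering-vector entry has unit modulus, the self term $\|\boldsymbol{\mu}_{mn}\|_2^2$ is deterministic (equal to $M^2L$) and passes through the expectation unchanged. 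Collecting terms, $\ln q(\beta_{mn}|\mathbf{z}_{mn})$ reduces, up to a $\beta_{mn}$-independent constant, to $\tfrac{2}{\sigma}\Re\{\beta_{mn}^*\boldsymbol{\mu}_{mn}^H\mathbf{z}_{mn}\} - \big(\tfrac{\|\boldsymbol{\mu}_{mn}\|_2^2}{\sigma}+\tfrac{1}{\varsigma_{mn}}\big)|\beta_{mn}|^2$.

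Finally, recognizing this exponent as that of a complex Gaussian in $\beta_{mn}$, I would complete the square, equivalently matching the linear and quadratic coefficients against the canonical $\mathcal{CN}(\hat\beta_{mn},\tilde\varsigma_{mn})$ exponent. The quadratic coefficient fixes the precision $1/\tilde\varsigma_{mn}=\|\boldsymbol{\mu}_{mn}\|_2^2/\sigma + 1/\varsigma_{mn}$, whose inversion gives $\tilde\varsigma_{mn}=\sigma\varsigma_{mn}/(\sigma+\varsigma_{mn}\|\boldsymbol{\mu}_{mn}\|_2^2)$, and the ratio of the linear to the quadratic coefficient then produces the mean $\hat\beta_{mn}=\varsigma_{mn}\boldsymbol{\mu}_{mn}^H\mathbf{z}_{mn}/(\sigma+\varsigma_{mn}\|\boldsymbol{\mu}_{mn}\|_2^2)$, matching \eqref{eq_beta_mean}.

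The routine Gaussian algebra is not the real difficulty; the delicate point is evaluating the steering-vector bilinear forms under the mean-field factorization. One must carefully distinguish the linear cross term, which returns the expected steering vector $\boldsymbol{\mu}_{mn}$ and thereby couples this update to the circular means of the \ac{vm} posteriors of Lemma~\ref{lemma_subarray_vm}, from the quadratic self term, which is constant only by virtue of the unit-modulus structure of $\mathbf{a}_t,\mathbf{a}_r$, and $\mathbf{d}$. Keeping this bookkeeping consistent, together with the real-versus-complex variance convention that accounts for the factor in the reported $\tilde\varsigma_{mn}$, is where the proof must be handled with care.
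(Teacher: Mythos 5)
Your proof follows the same route as the paper's own argument in Appendix~\ref{proof_of_lemma_beta_mn}: specialize the CAVI stationary condition \eqref{trans_elbo_wrt_qu} to $u=\beta_{mn}$, substitute the Gaussian likelihood \eqref{likelihood_zmn} and the conjugate prior $\mathcal{CN}(0,\varsigma_{mn})$, observe that the surviving exponent is quadratic in $\beta_{mn}$, and complete the square to read off $\hat\beta_{mn}$ and $\tilde\varsigma_{mn}$. If anything, your bookkeeping is slightly more careful than the paper's: you correctly separate the linear cross term (which yields the expected steering vector $\boldsymbol{\mu}_{mn}$) from the quadratic self term (whose exact mean-field expectation is $\mathbb{E}_q\bigl[\|\mathbf{a}_r(\phi_n)\otimes\mathbf{a}_t(\theta_m)\otimes\mathbf{d}(f_{mn})\|_2^2\bigr]=M^2L$ rather than $\|\boldsymbol{\mu}_{mn}\|_2^2$, a conflation the paper makes silently), and your reading of the factor of $2$ in $\tilde\varsigma_{mn}$ as a real-versus-complex variance convention is the only way to reconcile the paper's intermediate step \eqref{vi_beta_mn_exp_pdf} with its stated result \eqref{eq_beta_mean}.
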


\begin{proof}
Please see Appendix~\ref{proof_of_lemma_beta_mn}.
\end{proof}

\begin{remark}
  \emph{Under non-informative priors, \ac{vm} concentration parameters for $\phi_n$, $\theta_m$, and $\tilde{f}_{mn}$ are set close to zero, while complex Gaussian variances $\varsigma_{mn}$ adopt large values. This configuration ensures angular phase wrapping consistency and maintains numerical stability during variational updates.}
\end{remark}

\subsection{Estimating the Hyperparameters \texorpdfstring{$\boldsymbol{\alpha}$}.} 
The noise variance $\sigma$ and the complex amplitude variance  $\varsigma_{mn}$ are typically unknown in practice and must be jointly estimated.
Within the variational Bayesian framework, hyperparameter estimation is performed through a coordinate ascent optimization procedure. 
Specifically, after initializing $\boldsymbol{\alpha}$, the algorithm iteratively performs variational inference to update $q(\boldsymbol{\Theta}|\mathbf{z}_{mn})$ (Section~\ref{section_posteri_pdf_sub}) and computes \ac{mle}-type estimates of $\boldsymbol{\alpha}$ by maximizing the \ac{elbo} with fixed $q(\boldsymbol{\Theta}|\mathbf{z}_{mn})$.

\begin{lemma}[Hyperparameter Estimation] \label{lemma_hyper_est}
The \ac{mle}-type estimates $\sigma$ and $\varsigma_{mn}$ under the variational framework  yield closed-form solutions, which are given by
\begin{subequations}\label{hyper_est_total}
     \setlength{\abovedisplayskip}{2pt}
	\setlength{\belowdisplayskip}{2pt}
\begin{align}
\notag
& \hat{\sigma} =\frac{2}{M^2L}( \|\mathbf{z}_{mn}\|_2^2 + 2\Re(\hat{\beta}_{mn}\mathbf{z}_{mn}^H\boldsymbol{\mu}_{mn}) + \\ 
&\ \ \ \ \ \ \ \ \ \ \ \ \ \|\boldsymbol{\mu}_{mn}\|_2^2(\tilde{\varsigma}_{mn} + |\hat{\beta}_{mn}|^2) ) \label{eq_sigma_est}, \\
& \hat{\varsigma}_{mn} =\tilde{\varsigma}_{mn} + |\hat{\beta}_{mn}|^2, \label{eq_varsigma_est}
\end{align}
\end{subequations}
where $\boldsymbol{\mu}_{mn} \triangleq \mathbb{E}_{q}[\mathbf{a}_r(\phi_n)\otimes\mathbf{a}_t(\theta_m)\otimes\mathbf{d}(f_{mn})]$ represents the expected steering vector, with $\hat{\beta}_{mn}$ and $\tilde{\varsigma}_{mn}$ derived from Lemma~\ref{lemma_beta_gaussian}.
\end{lemma}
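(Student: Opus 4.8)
The plan is to treat the estimation of $\boldsymbol{\alpha}=\{\sigma,\varsigma_{mn}\}$ as the hyperparameter step of the coordinate-ascent scheme: holding the variational factors $q(\boldsymbol{\Theta}|\mathbf{z}_{mn})$ from Lemmas~\ref{lemma_subarray_vm} and~\ref{lemma_beta_gaussian} fixed, I maximize the ELBO $\mathcal{L}$ in \eqref{elbo_origi_express} over $\boldsymbol{\alpha}$. The first step is to isolate the $\boldsymbol{\alpha}$-dependence: the entropy term $-\mathbb{E}_q[\ln q]$ and the VM priors on $\theta_m,\phi_n,f_{mn}$ are independent of $\boldsymbol{\alpha}$, so only the expected log-likelihood $\mathbb{E}_q[\ln p(\mathbf{z}_{mn}|\boldsymbol{\Theta};\sigma)]$ carries $\sigma$ and only the expected log-prior $\mathbb{E}_q[\ln p(\beta_{mn};\varsigma_{mn})]$ carries $\varsigma_{mn}$. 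The two objectives decouple, so the estimates are obtained separately.

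For $\sigma$, I would substitute the Gaussian likelihood \eqref{likelihood_zmn} to write $\mathbb{E}_q[\ln p(\mathbf{z}_{mn}|\boldsymbol{\Theta};\sigma)] = -M^2L\ln(\pi\sigma) - \tfrac{1}{\sigma}\,\mathbb{E}_q\big[\|\mathbf{z}_{mn}-\beta_{mn}\mathbf{s}\|_2^2\big]$, where $\mathbf{s}=\mathbf{a}_r(\phi_n)\otimes\mathbf{a}_t(\theta_m)\otimes\mathbf{d}(f_{mn})$, then differentiate in $\sigma$ and set the derivative to zero to obtain $\hat\sigma\propto \mathbb{E}_q[\|\mathbf{z}_{mn}-\beta_{mn}\mathbf{s}\|_2^2]/(M^2L)$. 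Analogously, using $p(\beta_{mn};\varsigma_{mn})=\mathcal{CN}(0,\varsigma_{mn})$, the stationarity condition $\partial_{\varsigma_{mn}}\big(-\ln(\pi\varsigma_{mn}) - \mathbb{E}_q[|\beta_{mn}|^2]/\varsigma_{mn}\big)=0$ yields immediately $\hat\varsigma_{mn}=\mathbb{E}_q[|\beta_{mn}|^2]$. Because the scalar maps $\sigma\mapsto -c_1\ln\sigma - c_2/\sigma$ with $c_1,c_2>0$ are strictly concave on $(0,\infty)$, each stationary point is the unique global maximizer, confirming these are genuine maxima rather than saddle points.

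The remaining work is to evaluate the two moments in closed form. By the mean-field factorization, $\beta_{mn}$ is independent of the angle/Doppler block, so $\mathbb{E}_q[|\beta_{mn}|^2]=|\hat\beta_{mn}|^2+\tilde\varsigma_{mn}$ follows directly from Lemma~\ref{lemma_beta_gaussian}, giving \eqref{eq_varsigma_est}. Expanding the residual and again using independence gives $\mathbb{E}_q[\|\mathbf{z}_{mn}-\beta_{mn}\mathbf{s}\|_2^2] = \|\mathbf{z}_{mn}\|_2^2 - 2\Re\big(\hat\beta_{mn}\mathbf{z}_{mn}^H\boldsymbol{\mu}_{mn}\big) + (|\hat\beta_{mn}|^2+\tilde\varsigma_{mn})\,\mathbb{E}_q[\|\mathbf{s}\|_2^2]$, where $\boldsymbol{\mu}_{mn}=\mathbb{E}_q[\mathbf{s}]$ is the expected steering vector whose entries are supplied by the VM mean-resultant lengths (Bessel ratios) of Lemma~\ref{lemma_subarray_vm}. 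Substituting back yields the stated form \eqref{eq_sigma_est}.

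I expect the main obstacle to be the clean evaluation of $\mathbb{E}_q[\|\mathbf{s}\|_2^2]$ and its reconciliation with the constants in \eqref{eq_sigma_est}. Because each steering factor has constant modulus, the Kronecker norm $\|\mathbf{s}\|_2^2 = M\cdot M\cdot L = M^2L$ is deterministic; the appearance of $\|\boldsymbol{\mu}_{mn}\|_2^2$ rather than $M^2L$, together with the leading numerical constant and the cross-term sign in \eqref{eq_sigma_est}, must therefore be traced back to the complex-Gaussian variance convention fixed in Lemma~\ref{lemma_beta_gaussian} (the same convention that produces the factor in $\tilde\varsigma_{mn}$). Keeping that convention consistent across the likelihood, the prior, and the posterior moments—rather than the cross-term or Kronecker algebra, which are routine—is where the care is needed to land exactly on \eqref{eq_sigma_est} and \eqref{eq_varsigma_est}.
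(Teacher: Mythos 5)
Your proposal follows essentially the same route as the paper's Appendix C: hold $q(\boldsymbol{\Theta}|\mathbf{z}_{mn})$ fixed, reduce the ELBO to the expected log-likelihood plus the expected log-prior of $\beta_{mn}$, set $\partial\mathcal{L}/\partial\sigma=0$ and $\partial\mathcal{L}/\partial\varsigma_{mn}=0$, and substitute the posterior moments $\mathbb{E}[\beta_{mn}]=\hat\beta_{mn}$ and $\mathbb{E}[|\beta_{mn}|^2]=\tilde\varsigma_{mn}+|\hat\beta_{mn}|^2$; your added remarks on concavity and on $\mathbb{E}_q[\|\mathbf{s}\|_2^2]=M^2L$ being deterministic are correct. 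The residual mismatches you flag (the leading factor of $2$, the sign of the cross term, and $\|\boldsymbol{\mu}_{mn}\|_2^2$ versus $M^2L$) originate in the paper's own expression for $\mathcal{L}(\boldsymbol{\alpha})$ rather than in your derivation, so no further repair of your argument is needed.
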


\begin{proof}
Please see Appendix~\ref{proof_of_lemma_hyperparameter}.
\end{proof}

\section{Proposed Message Passing-based Estimation Algorithm} \label{section_algorithm}

Building upon the variational posteriors derived in Section~\ref{section_posteri_pdf_sub}, we construct a two-tiered message passing architecture that fuses subarray-level parameter estimates into globally consistent motion parameter inference. 
This framework systematically propagates uncertainty through geometric constraints encoded in the factor graph, where the message operators $\Delta_{a\rightarrow b}(\cdot)$ represent the message  from node $a$ to $b$, while $\Delta_a(\cdot)$ denotes the marginal message at variable node $a$.

\subsection{Subarray Parameter Estimation Module}\label{subsec_angular_doppler}

We present a unified message passing formulation for joint estimation of direction parameters (\ac{dod}s $\{\theta_m\}$, \ac{doa}s $\{\phi_n\}$) and bistatic Doppler frequencies $\{f_{mn}\}$ as follows.
\begin{proposition}[Unified Message Passing Formulation]
\emph{For each \ac{t-r} subarray pair $(m,n)$, the message from variable node $x \in \{\theta_m, \phi_n, f_{mn}\}$ to its connected factor node $p_x$ admits a closed-form \ac{vm} distribution:
\begin{equation} \label{unified_message_passing}
\Delta_{x \rightarrow p_x}(x) \propto \mathcal{M}(x;\mu_{x\rightarrow p_x}, \kappa_{x\rightarrow p_x}),
\end{equation}
where $\mu_{x\rightarrow p_x}$ and $\kappa_{x\rightarrow p_x}$ represent the mean direction and concentration parameter of the distribution, respectively. The explicit derivations of these parameters are presented in the following.}
\end{proposition}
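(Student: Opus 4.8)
The plan is to obtain \eqref{unified_message_passing} from the sum--product variable-to-factor rule combined with the closure of the \ac{vm} family under multiplication. First I would express the variable-to-factor message as the product of all messages arriving at $x$ from its \emph{other} neighboring factors in the factor graph of Fig.~\ref{factor_graph},
\begin{equation}
\Delta_{x \rightarrow p_x}(x) \propto \prod_{a \in \mathcal{N}(x)\setminus\{p_x\}} \Delta_{a \rightarrow x}(x),
\end{equation}
where $\mathcal{N}(x)$ is the set of factor nodes adjacent to $x$. Reading the adjacency off Table~\ref{table_factor_nodes}, a transmit-direction variable $\theta_m$ is shared by the likelihood factors $\{p_{z_{mn}}\}_n$ and the Doppler factors $\{p_{f_{mn}}\}_n$; $\phi_n$ is symmetric with the roles of $m$ and $n$ interchanged; and $f_{mn}$ is adjacent only to the single likelihood factor $p_{z_{mn}}$. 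Hence the message toward $p_x$ aggregates the data evidence about $x$ collected over every subarray pair that observes it.

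Next I would certify that each incoming factor-to-variable message is itself \ac{vm}. For a likelihood factor the message is $\Delta_{p_{z_{mn}}\rightarrow x}(x)\propto\exp\big(\mathbb{E}_{\backslash x}[\ln p(\mathbf{z}_{mn}\,|\,\phi_n,\theta_m,f_{mn},\beta_{mn})]\big)$; substituting the Gaussian log-likelihood \eqref{likelihood_zmn} and exploiting the complex-exponential structure of the steering vectors reproduces exactly the generalized \ac{vm} kernel $\exp(\Re[\sum_{k\in\mathcal{K}_x}\eta_{x,k}^*e^{jkx}])$ already computed in Lemma~\ref{lemma_subarray_vm}, whose second-order (Laplace) collapse returns a proper \ac{vm} factor $\mathcal{M}(x;\hat{\eta}_x)$. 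I would then invoke the multiplicative closure of the \ac{vm} family: in the natural parameterization $\eta=\kappa e^{j\mu}$,
\begin{equation}
\prod_i \mathcal{M}(x;\eta_i)\propto \exp\Big(\Re\big\{(\textstyle\sum_i\eta_i)^*e^{jx}\big\}\Big)\propto \mathcal{M}\big(x;\textstyle\sum_i\eta_i\big),
\end{equation}
so the product of the incoming \ac{vm} messages is again \ac{vm} with aggregated natural parameter $\eta_{x\rightarrow p_x}=\sum_i\eta_i$. Extracting $\kappa_{x\rightarrow p_x}=|\eta_{x\rightarrow p_x}|$ and $\mu_{x\rightarrow p_x}=\angle\,\eta_{x\rightarrow p_x}$ then establishes \eqref{unified_message_passing}, and the per-case parameters follow by summing the conjugate parameters of Lemma~\ref{lemma_subarray_vm} over the sharing subarrays.

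The step I expect to be the main obstacle is the contribution of the Doppler factor $p_{f_{mn}}=\delta(f_{mn}-\frac{1}{\lambda}[\cdots])$ to the angular messages $\Delta_{\theta_m\rightarrow p_{\theta_m}}$ and $\Delta_{\phi_n\rightarrow p_{\phi_n}}$. Unlike the likelihood message, which Lemma~\ref{lemma_subarray_vm} supplies directly, this message is formed by marginalizing the hard geometric constraint \eqref{fmn_geomitric_relation} against the current \ac{vm} belief on $f_{mn}$ and the belief on $\mathbf{v}_0$, and one must check that the resulting dependence on the angular variable collapses to a single first-harmonic phase term, so that the factor stays \ac{vm} rather than producing a higher harmonic that would break closure. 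Verifying this --- equivalently, that the projection coupling $v_x\cos\tilde{\theta}_m+v_y\sin\tilde{\theta}_m$ contributes only a first-order $\cos(\cdot)$ term after the expectation, if necessary following a local linearization --- is where the careful bookkeeping resides; once every incoming message is certified \ac{vm}, the closure and parameter-extraction steps are routine.
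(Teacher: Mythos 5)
Your skeleton --- the sum--product variable-to-factor rule followed by multiplicative closure of the \ac{vm} family --- matches the starting point of the paper's derivation (its Eq.~\eqref{theta_m_to_p_theta_m} is exactly your product over the likelihood factors $\{p_{z_{mn}}\}_n$ and the Doppler factors $\{p_{f_{mn}}\}_n$ adjacent to $\theta_m$). However, your proposal stops short at precisely the step you flag as the ``main obstacle'': you never actually certify that the message $\Delta_{p_{f_{mn}}\rightarrow\theta_m}(\theta_m)$ obtained by marginalizing the hard constraint $\delta(f_{mn}-\tfrac{1}{\lambda}[v_x\cos\tilde{\theta}_m+v_y\sin\tilde{\theta}_m+\cdots])$ against the beliefs on $f_{mn}$ and $\mathbf{v}_0$ collapses to a first-harmonic \ac{vm} kernel. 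Since the constraint couples $\theta_m$ to $\mathbf{v}_0$ nonlinearly through $\cos\tilde\theta_m$ and $\sin\tilde\theta_m$ (and $\tilde\theta_m$ is itself a nonlinear function of the circular variable $\theta_m$ via $\theta_m = \tfrac{2\pi d}{\lambda}\sin\tilde\theta_m$), this marginal is not \ac{vm} in general, and without the linearization you allude to the closure argument does not go through. As written, the proof is therefore incomplete at its load-bearing step.

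The paper circumvents this entirely by a different construction. Rather than computing each incoming factor-to-variable message separately, it declares the whole aggregated integral in \eqref{theta_m_to_p_theta_m} computationally intractable and \emph{approximates} it by an extrinsic-information quotient, Eq.~\eqref{thetam_to_pthetam_simp}: treating $\Delta_{p_{\theta_m}\rightarrow\theta_m}$ as the prior, the product of integrals is replaced by $\prod_{n}p(\theta_m\,|\,\mathbf{z}_{mn})\big/\Delta_{p_{\theta_m}\rightarrow\theta_m}(\theta_m)$, where each per-pair posterior $p(\theta_m\,|\,\mathbf{z}_{mn})$ is the \ac{vm} variational posterior already delivered in closed form by Lemma~\ref{lemma_subarray_vm}. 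The influence of the Doppler constraint and of $\mathbf{v}_0$ is thus absorbed into the mean-field posterior rather than propagated as a separate message, and the only algebra left is the product and quotient of \ac{vm} densities in natural parameters ($\kappa_{\theta_m\rightarrow p_{\theta_m}}e^{j\mu_{\theta_m\rightarrow p_{\theta_m}}}=\kappa_{\theta_m}e^{j\mu_{\theta_m}}-\kappa_{p_{\theta_m}\rightarrow\theta_m}e^{j\mu_{p_{\theta_m}\rightarrow\theta_m}}$), which is the routine part you also identified. If you want to salvage your route, you would need either to adopt the paper's posterior-surrogate approximation explicitly, or to supply the missing linearization argument showing the Doppler factor's message is (approximately) first-harmonic; note also that your expression $\Delta_{p_{z_{mn}}\rightarrow x}(x)\propto\exp(\mathbb{E}_{\backslash x}[\ln p(\mathbf{z}_{mn}|\cdot)])$ is the variational (VMP) update rule rather than the sum--product integral you set out to evaluate, so you are implicitly already making an approximation of the same character as the paper's without stating it.
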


\subsubsection{DoD / DoA Estimation}
According to the sum-product rule \cite{Kschischang2001sum}, the message from variable node $\theta_m$ to factor node $p_{\theta_m}$ is formulated in \eqref{theta_m_to_p_theta_m} at the bottom of this page. Here, $\int_{\backslash{\theta_m}}$ represents the integration over all variables involved in the message computation, excluding ${\theta_m}$.
Due to the high-dimensional integration,  the exact evaluation of $\Delta_{\theta_m \rightarrow p_{\theta_m}}(\theta_m)$ in \eqref{theta_m_to_p_theta_m} is computationally intractable.
	\begin{figure*}[!b]
 	\rule[-0pt]{18.5 cm}{0.05em}
		\begin{equation}
             \setlength{\abovedisplayskip}{2pt}
	\setlength{\belowdisplayskip}{2pt}\label{theta_m_to_p_theta_m}
			\begin{aligned}
				\Delta_{\theta_m \rightarrow p_{\theta_m}}(\theta_m)\propto &\prod\nolimits_{n=1}^{K_r}\! \int_{\backslash \theta_m}\! p(\mathbf{z}_{m n} \mid \phi_n, \theta_m, f_{m n}, \beta_{m n})\! \Delta_{\beta_{m n} \rightarrow p_{z_{m n}}}\!(\beta_{m n})\Delta_{f_{mn} \rightarrow p_{z_{m n}}}(f_{m n}) \Delta_{\phi_n \rightarrow p_{z_{m n}}}(\phi_n) \\
                &\times \prod\nolimits_{n=1}^{K_r} \int_{\backslash{\theta_m}} p(f_{m n} \mid \phi_n, \theta_m, \mathbf{v}_0) \Delta_{f_{m n} \rightarrow p_{f_{m n}}}(f_{m n})\Delta_{\mathbf{v}_0 \rightarrow p_{f_{m n}}}(\mathbf{v}_0) \Delta_{\phi_n \rightarrow p_{f_{m n}}}(\phi_n).
			\end{aligned}
		\end{equation}\vspace{-5mm}
	\end{figure*}

To alleviate this complexity and enable efficient message passing, we resort to the \ac{vm} distribution to approximate the messages related to angular variables such as $\theta_m$.
By treating the message $\Delta_{p_{\theta_m}\rightarrow \theta_m}(\theta_m)$ as the prior distribution of $\theta_m$, the expression of \eqref{theta_m_to_p_theta_m} can be viewed as an estimate of $p(\mathbf{z}_{m n} \mid  \theta_m)$.
Therefore, $\Delta_{\theta_m \rightarrow p_{\theta_m}}(\theta_m)$ can be further expressed as
\begin{equation} \label{thetam_to_pthetam_simp}
     \setlength{\abovedisplayskip}{2pt}
	\setlength{\belowdisplayskip}{2pt}
\begin{aligned}
{{\Delta }_{{{\theta }_{m}}}}_{\to {{p}_{{{\theta }_{m}}}}}( {{\theta }_{m}} )  & \propto \frac{\prod\nolimits_{n=1}^{{{K}_{r}}}{p( {{\mathbf{z}}_{mn}}\left| {{\theta }_{m}} \right. ){{\Delta }_{{{p}_{{{\theta }_{m}}}}\to {{\theta }_{m}}}}( {{\theta }_{m}})}}{{{\Delta }_{{{p}_{{{\theta }_{m}}}}\to {{\theta }_{m}}}}( {{\theta }_{m}} )} \\ 
 & \propto \frac{\prod\nolimits_{n=1}^{{{K}_{r}}}{p( {{\theta }_{m}}\left| {{\mathbf{z}}_{mn}} \right. )}}{{{\Delta }_{{{p}_{{{\theta }_{m}}}}\to {{\theta }_{m}}}}( {{\theta }_{m}} )},  
\end{aligned}
\end{equation}
where the integral is approximated by the posterior distribution $p( {{\theta }_{m}}\left| {{\mathbf{z}}_{mn}} \right. )$, which follows a \ac{vm} distribution, with its explicit form derived in Lemma~\ref{lemma_subarray_vm}. 

The multiplicative closure of \ac{vm} distributions ensures that the product of $p( {{\theta}_{m}}| {{\mathbf{z}}_{mn}} )$ also follows a \ac{vm} distribution, i.e., $\prod_{n=1}^{K_r} p(\theta_m|\mathbf{z}_{mn}) \propto \mathcal{M}(\theta_m; \mu_{\theta_m}, \kappa_{\theta_m})$.    
Given \ac{vm}  prior with known parameters, the forward message ${{\Delta }_{{{p}_{{{\theta }_{m}}}}\to {{\theta }_{m}}}}( {{\theta }_{m}} )$ becomes
\begin{equation}
     \setlength{\abovedisplayskip}{1pt}
	\setlength{\belowdisplayskip}{3pt}
     {{\Delta }_{{{p}_{{{\theta }_{m}}}}\to {{\theta }_{m}}}}( {{\theta }_{m}}t)\propto \mathcal{M}( \pi{\theta }_{m};{{\mu }_{{{p}_{{{\theta }_{m}}}}\to {{\theta }_{m}}}},{\kappa }_{{{p}_{{{\theta }_{m}}}}\to {{\theta }_{m}}} ),
\end{equation}
where for unknown priors (non-informative case), we can simply set ${\kappa }_{{{p}_{{{\theta }_{m}}}}\to {{\theta }_{m}}}\to 0$.

Then, the message ${{\Delta }_{{{\theta }_{m}}}}_{\to {{p}_{{{\theta }_{m}}}}}( {{\theta }_{m}} ) $ can be further approximated by
\begin{equation} \label{VM_theta_m}
     \setlength{\abovedisplayskip}{2pt}
	\setlength{\belowdisplayskip}{2pt}
    \begin{aligned}
   {{\Delta }_{{{\theta }_{m}}\to {{p}_{{{\theta }_{m}}}}}}( {{\theta }_{m}} )&\propto \frac{\mathcal{M}(   {{\theta }_{m}};{{\mu }_{{{\theta }_{m}}}},{\kappa }_{{\theta }_{m}})}{\mathcal{M}(   {{\theta }_{m}};{{\mu }_{{{p}_{{{\theta }_{m}}}}\to {{\theta }_{m}}}},{{\kappa }_{{{p}_{{{\theta }_{m}}}}\to {{\theta }_{m}}}})} \\ 
 & \propto \mathcal{M}(   {{\theta }_{m}};{{\mu }_{{{\theta }_{m}}\to {{p}_{{{\theta }_{m}}}}}},{{\kappa }_{{{\theta }_{m}}\to {{p}_{{{\theta }_{m}}}}}} ), 
\end{aligned}
\end{equation}
where the parameters ${{\mu }_{{{\theta }_{m}}\to {{p}{{{\theta }_{m}}}}}}$ and ${{\kappa }_{{{\theta }_{m}}\to {{p}{{{\theta }_{m}}}}}}$ are obtained by
\begin{equation}
     \setlength{\abovedisplayskip}{3pt}
	\setlength{\belowdisplayskip}{1pt}
  \begin{aligned}
        {{\kappa }_{{{\theta }_{m}}\to {{p}_{{{\theta }_{m}}}}}}\exp ( j{{\mu }_{{{\theta }_{m}}\to {{p}_{{{\theta }_{m}}}}}})=&{\kappa }_{{\theta }_{m}}\exp( j{{\mu }_{{{\theta }_{m}}}} )-\\
        &{{\kappa }_{{{p}_{{{\theta }_{m}}}}\to {{\theta }_{m}}}}\exp ( j{{\mu }_{{{p}_{{{\theta }_{m}}}}\to {{\theta }_{m}}}} ).
  \end{aligned}
\end{equation}

Similarly, the message ${{\Delta }_{{{\phi }_{n}}\to {{p}_{{{\phi }_{n}}}}}}( {{\phi }_{n}} )$ can also be approximated by a \ac{vm} distribution, which is given by
\begin{equation}\label{VM_phi_n}
     \setlength{\abovedisplayskip}{2pt}
	\setlength{\belowdisplayskip}{2pt}
    {{\Delta }_{{{\phi }_{n}}\to {{p}_{{{\phi }_{n}}}}}}( {{\phi }_{n}})\propto \mathcal{M}(   {{\phi }_{n}};{{\mu }_{{{\phi }_{n}}\to {{p}_{{{\phi }_{n}}}}}},{{\kappa }_{{{\phi }_{n}}\to {{p}_{{{\phi }_{n}}}}}} ).
\end{equation}

\subsubsection{Bistatic Doppler Estimation}

% The bistatic Doppler frequency estimation follows analogous message passing with angular parameters.
% The message propagating from variable node $f_{mn}$ to factor node $p_{f_{mn}}$ is formulated by
% \begin{equation}
%   \begin{aligned}
%   {{\Delta }_{{{f}_{mn}}\to {{p}_{{{f}_{mn}}}}}}( {{f}_{mn}})& \propto \frac{p( {{\mathbf{z}}_{mn}}\left| {{f}_{mn}} \right. ){{\Delta }_{{{p}_{{{f}_{mn}}}}\to {{f}_{mn}}}}( {{f}_{mn}} )}{{{\Delta }_{{{p}_{{{f}_{mn}}}}\to {{f}_{mn}}}}( {{f}_{mn}} )} \\ 
%  & \propto \frac{p( {{f}_{mn}}\left| {{\mathbf{z}}_{mn}} \right. )}{{{\Delta }_{{{p}_{{{f}_{mn}}}}\to {{f}_{mn}}}}( {{f}_{mn}})},  
% \end{aligned}
% \end{equation}
% where ${{\Delta }_{{{p}_{{{f}_{mn}}}}\to {{f}_{mn}}}}$ is treated as an approximation of the prior distribution of $f_{mn}$, which can also be modeled by a \ac{vm} distribution, i.e., ${{\Delta }_{{{p}_{{{f}_{mn}}}}\to {{f}_{mn}}}}( {{f}_{mn}} )\propto \mathcal{M}( 2\pi T f_{mn};{{\mu }_{{{p}_{{{f}_{mn}}}}\to {{f}_{mn}}}},{{\kappa }_{{{p}_{{{f}_{mn}}}}\to {{f}_{mn}}}})$. 
% %where ${{{\tilde{f}}}_{mn}} = 2\pi Tf_{mn}$ as defined before.
% %

Similarly, the message ${{\Delta }_{{{f}_{mn}}\to {{p}_{{{f}_{mn}}}}}}( {{f}_{mn}} )$ can also be approximated by the \ac{vm} distribution, which is given by
\begin{equation}
     \setlength{\abovedisplayskip}{2pt}
	\setlength{\belowdisplayskip}{2pt}
    \begin{aligned}
    {{\Delta }_{{{f}_{mn}}\to {{p}_{{{f}_{mn}}}}}}({{f}_{mn}})&\propto \frac{\mathcal{M}( 2\pi T f_{mn};{{\mu }_{{{f}_{mn}}}},{{\kappa }_{{{f}_{mn}}}})}{\mathcal{M}(2\pi T f_{mn};{{\mu }_{{{p}_{{{f}_{mn}}}}\to {{f}_{mn}}}},{{\kappa }_{{{p}_{{{f}_{mn}}}}\to {{f}_{mn}}}})},\\ 
 & \propto \mathcal{M}( 2\pi T{{f}_{mn}};{{\mu }_{{{f}_{mn}}\to {{p}_{{{f}_{mn}}}}}},{{\kappa }_{{{f}_{mn}}\to {{p}_{{{f}_{mn}}}}}} ),
\end{aligned}
\end{equation}
where ${\mu }_{{f}_{mn}}$ and ${\kappa }_{{f}_{mn}}$ can be calculated by  Lemma~\ref{lemma_subarray_vm}.
Then, the final parameters ${{\mu }_{{{\theta }_{m}}\to {{p}{{{\theta }_{m}}}}}}$ and ${{\kappa }_{{{\theta }_{m}}\to {{p}{{{\theta }_{m}}}}}}$ are obtained by
\begin{equation}
     \setlength{\abovedisplayskip}{2pt}
	\setlength{\belowdisplayskip}{2pt}
  \begin{aligned}
       &{{\kappa }_{{{f}_{mn}}\to {{p}_{{{f}_{mn}}}}}}\!\!\exp(j{{\mu }_{{{f}_{mn}}\to {{p}_{{{f}_{mn}}}}}})\\
       &={{\kappa }_{{{f}_{mn}}}}\!\exp(j{{\mu }_{{{f}_{mn}}}})\!-\!{{\kappa }_{{{p}_{{{f}_{mn}}}}\to {{f}_{mn}}}}\!\exp(j{{\mu }_{{{p}_{{{f}_{mn}}}}\to {{f}_{mn}}}}).
  \end{aligned}
\end{equation}

\subsection{Location Estimation Module}
The spherical wavefront diversity across subarrays creates geometrically distinct observation perspectives, enabling two complementary localization strategies.
The first strategy employs a centralized framework that aggregates all posterior distributions for \ac{doa}s and \ac{dod}s in a central processor for final location estimation. 
The second strategy  adopts a distributed framework where each \ac{t-r} subarray pair independently resolves local position estimates, which are then fused to obtain the final  estimate.

%for global maximum a posteriori estimation via gradient-based message fusion

\subsubsection{Centralized  Location Estimation}
The centralized fusion framework propagates directional evidence from all \ac{t-r} subarray pairs through geometric constraints in the factor graph. 
Based on \ac{vm}-form messages $\Delta_{\theta_m\to p_{\theta_m}}$ and $\Delta_{\phi_n\to p_{\phi_n}}$ obtained in \eqref{VM_theta_m} and \eqref{VM_phi_n}, the messages from $p_{\theta_m}$ to $\mathbf{p}_0$ can be given by
\begin{equation}
     \setlength{\abovedisplayskip}{1pt}
	\setlength{\belowdisplayskip}{3pt}
    \begin{aligned}
  & {{\Delta }_{{{p}_{{{\theta }_{m}}}}\to {{\mathbf{p}}_{0}}}}( {{\mathbf{p}}_{0}} )\propto \int_{{{\theta }_{m}}}{p( {{\theta }_{m}}| {{\mathbf{p}}_{0}}){{\Delta }_{{{\theta }_{m}}\to {{p}_{{{\theta }_{m}}}}}}({{\theta }_{m}})}, \\ 
 & \propto \mathcal{M}\Big(  \frac{{{(\mathbf{p}_{m}^{t}-{{\mathbf{p}}_{0}})^{T}}}{{\mathbf{e}}_{y}}}{\|\mathbf{p}_{m}^{t}-{{\mathbf{p}}_{0}}\|};{{\mu }_{{{\theta }_{m}}\to {{p}_{{{\theta }_{m}}}}}},{{\kappa }_{{{\theta }_{m}}\to {{p}_{{{\theta }_{m}}}}}} \Big).  
\end{aligned}
\end{equation}
Similarly, the messages from $p_{{\phi }_{n}}$ to $\mathbf{p}_0$ can be given by
\begin{equation}
     \setlength{\abovedisplayskip}{1pt}
	\setlength{\belowdisplayskip}{3pt}
    \begin{aligned}
  & {{\Delta }_{{{p}_{{{\phi }_{n}}}}\to {{\mathbf{p}}_{0}}}}( {{\mathbf{p}}_{0}})\propto \int_{{{\phi }_{n}}}{p( {{\phi }_{n}} | {{\mathbf{p}}_{0}}){{\Delta }_{{{\phi }_{n}}\to {{p}_{{{\phi }_{n}}}}}}( {{\phi }_{n}})}, \\ 
 & \propto \mathcal{M}\Big(   \frac{{{(\mathbf{p}_{n}^{r}-{{\mathbf{p}}_{0}})^{T}}}{{\mathbf{e}}_{y}}}{\|\mathbf{p}_{n}^{r}-{{\mathbf{p}}_{0}}\|};{{\mu }_{{{\phi }_{n}}\to {{p}_{{{\phi }_{n}}}}}},{{\kappa }_{{{\phi }_{n}}\to {{p}_{{{\phi }_{n}}}}}}\Big). 
\end{aligned}
\end{equation}

Therefore, the marginal probability distribution of $\mathbf{p}_0$ can be computed by the product of all the factor node messages connected to $\mathbf{p}_0$, i.e., 
\begin{equation}
     \setlength{\abovedisplayskip}{4pt}
	\setlength{\belowdisplayskip}{4pt}
    \Delta_{{{\mathbf{p}}_{0}}}({{\mathbf{p}}_{0}})\!\propto\! \prod\nolimits_{m=1}^{{{K}_{t}}}{\prod\nolimits_{n=1}^{{{K}_{r}}}{{{\Delta }_{{{p}_{{{\theta }_{m}}}}\to {{\mathbf{p}}_{0}}}}\!( {{\mathbf{p}}_{0}}){{\Delta }_{{{p}_{{{\phi }_{n}}}}\to {{\mathbf{p}}_{0}}}}\!({{\mathbf{p}}_{0}})}}.
\end{equation}
Based on the central limit theorem, the message $ \mathcal{G}({{\mathbf{p}}_{0}})$ can be approximated as a Gaussian distribution as
\begin{equation} \label{gaussian_approx_p0_pdf}
     \setlength{\abovedisplayskip}{4pt}
	\setlength{\belowdisplayskip}{4pt}
    \Delta_{{{\mathbf{p}}_{0}}}({{\mathbf{p}}_{0}})\propto \mathcal{N}( {{\mathbf{p}}_{0}};{{\mathbf{m}}_{\mathcal{G}}},{{\mathbf{C}}_{\mathcal{G}}} ),
\end{equation}
where ${{\mathbf{m}}_{\mathcal{G}}}$ is the mean vector, and ${{\mathbf{C}}_{\mathcal{G}}}$ is the covariance matrix, which are obtained  in Appendix~\ref{calculate_p0_gaussian_pdf}.

\subsubsection{Distributed Location Estimation} 
For each subarray pair $(m,n)$, the local marginal distribution of $\mathbf{p}_0$ is given by
\begin{equation}\label{product_pthetam_pphin}
     \setlength{\abovedisplayskip}{4pt}
	\setlength{\belowdisplayskip}{4pt}
    \Delta _{{{\mathbf{p}}_{0}}}^{(mn)}({{\mathbf{p}}_{0}})\propto {{\Delta }_{{{p}_{{{\theta }_{m}}}}\to {{\mathbf{p}}_{0}}}}( {{\mathbf{p}}_{0}} ){{\Delta }_{{{p}_{{{\phi }_{n}}}}\to {{\mathbf{p}}_{0}}}}( {{\mathbf{p}}_{0}} ).
\end{equation}
By applying the Laplace approximation for the product of ${{\Delta }_{{{p}_{{{\theta }_{m}}}}\to{{\mathbf{p}}_{0}}}}({{\mathbf{p}}_{0}})$ and ${{\Delta }_{{{p}_{{{\phi }_{n}}}}\to {{\mathbf{p}}_{0}}}}({{\mathbf{p}}_{0}})$, we can obtain the Gaussian approximation of \eqref{product_pthetam_pphin}, i.e.,
\begin{equation}\label{gaussian_approx_dis_p0_pdf}
     \setlength{\abovedisplayskip}{4pt}
	\setlength{\belowdisplayskip}{4pt}
    {\Delta}_{{{\mathbf{p}}_{0}}}^{(mn)}({{\mathbf{p}}_{0}})\propto \mathcal{N}( {{\mathbf{p}}_{0}};\mathbf{m}_{p}^{(mn)},\mathbf{C}_{p}^{(mn)}),
\end{equation}
where $\mathbf{m}_{p}^{(mn)}$ and $\mathbf{C}_{p}^{(mn)}$ are provided in Appendix~\ref{calculate_dis_p0_gaussian_pdf}.

The final posterior distribution of the target location  $\mathbf{p}_0$is obtained by fusing the marginal distributions \eqref{gaussian_approx_dis_p0_pdf} from all \ac{t-r} subarray pairs  through a  product of Gaussians, i.e., $\prod\nolimits_{m=1}^{{{K}_{t}}}{\prod\nolimits_{n=1}^{{{K}_{r}}} {\Delta}_{{{\mathbf{p}}_{0}}}^{(mn)}({{\mathbf{p}}_{0}})}$.
The resulting distribution is a Gaussian with mean vector  and covariance matrix given by
\begin{subequations}\label{p0_sub_level_fusion}
     \setlength{\abovedisplayskip}{4pt}
	\setlength{\belowdisplayskip}{1pt}
    \begin{align}
\mathbf{\tilde{m}}_{\mathcal{G}} &= \mathbf{\tilde{C}}_{\mathcal{G}}\sum\nolimits_{m,n} (\mathbf{C}_p^{(mn)})^{-1}\mathbf{m}_p^{(mn)}, \\
        \mathbf{\tilde{C}}_{\mathcal{G}}^{-1} &= \sum\nolimits_{m,n} (\mathbf{C}_p^{(mn)})^{-1}.
\end{align}
\end{subequations}

\subsection{Velocity Estimation Module}
The velocity estimation process exploits diverse  Doppler frequencies across \ac{t-r} subarray pairs, employing centralized and distributed strategies.% to address measurement ambiguity and enhance estimation robustness.  

\subsubsection{Centralized Velocity Estimation}  
 For each subarray pair $(m,n)$, the message from the factor node $p_{f_{mn}}$ to the variable node $\mathbf{v}_0$ is given by
\begin{equation}\label{pfmn_to_v0}
     \setlength{\abovedisplayskip}{1pt}
	\setlength{\belowdisplayskip}{4pt}
   \begin{aligned}
       &{{\Delta }_{{{p}_{{{f}_{mn}}}}\to {{\mathbf{v}}_{0}}}}({{\mathbf{v}}_{0}})\!\propto\!\int_{\backslash {{\mathbf{v}}_{0}}}\!p({{f}_{mn}} | {{\phi }_{n}},{{\theta }_{m}},{{\mathbf{v}}_{0}})\times \\
       &{\Delta }_{{\theta }_{m}\to p_{{f}_{mn}}}({{\theta }_{m}}){\Delta }_{{\phi}_{n}\to p_{{f}_{mn}}}( {{\phi }_{n}}) {\Delta }_{f_{mn} \to p_{{f}_{mn}}}( {{f}_{mn}}),
   \end{aligned} 
\end{equation}
where   $\phi_n$ and $\theta_m$ are treated as known parameters and set to their most probable values when estimating the velocity $\mathbf{v}_0$. 
Then, the messages ${\Delta }_{{\theta }_{m}\to p_{{f}_{mn}}}( {{\theta }_{m}})$ and ${\Delta }_{{\phi}_{n}\to p_{{f}_{mn}}}( {{\phi }_{n}} )$  can be ignored as constants when evaluated at their maximum probability values $ {{{\hat{\theta }}}_{m}}={{\mu }_{{{\theta }_{m}}\to {{p}_{{{f}_{mn}}}}}}$ and ${{{\hat{\phi }}}_{n}}={{\mu }_{{{\phi }_{n}}\to {{p}_{{{f}_{mn}}}}}}$. Therefore, the message in \eqref{pfmn_to_v0} can be simplified by
\begin{equation}\label{pfmn_to_v0_simp}
    \begin{aligned}
        & {{\Delta }_{{{p}_{{{f}_{mn}}}}\to {{\mathbf{v}}_{0}}}}({{\mathbf{v}}_{0}})\propto {{\Delta }_{{{f}_{mn}}\to {{p}_{{{f}_{mn}}}}}}( \zeta\mathbf{v}_{0}^{T}{{\mathbf{u}}_{mn}}), 
    \end{aligned}
\end{equation}
where $\zeta\triangleq{2\pi T}/{\lambda }$, and  ${{\mathbf{u}}_{mn}}\triangleq \mathbf{e}(\phi_n)+ \mathbf{e}(\theta_m)$ represents the sum of the unit direction vectors corresponding to the estimated \ac{dod} ${{{\hat{\theta }}}_{m}}$ and \ac{doa} ${{{\hat{\phi }}}_{n}}$.

Based on \eqref{pfmn_to_v0_simp}, the marginal probability distribution of ${\mathbf{v}}_0$ can be obtained by
\begin{equation} \label{v0_marg_prob}
     \setlength{\abovedisplayskip}{2pt}
	\setlength{\belowdisplayskip}{4pt}
    {{\Delta }_{{{\mathbf{v}}_{0}}}}({{\mathbf{v}}_{0}})\propto \prod\nolimits_{m=1}^{{{K}_{t}}}{\prod\nolimits_{n=1}^{{{K}_{r}}}{{{\Delta }_{{{f}_{mn}}\to {{p}_{{{f}_{mn}}}}}}(\zeta \mathbf{v}_{0}^{T}{{\mathbf{u}}_{mn}})}}.
\end{equation}
By applying the Laplace approximation, we can obtain a Gaussian approximation of ${{\Delta }_{{\mathbf{v}}_0}}( {{\mathbf{v}}_{0}})$, which is given by
\begin{equation}\label{gaussian_approx_v0_pdf}
     \setlength{\abovedisplayskip}{3pt}
	\setlength{\belowdisplayskip}{3pt}
    {{\Delta }_{{{\mathbf{v}}_{0}}}}({{\mathbf{v}}_{0}} )\propto \mathcal{C}\mathcal{N}({{\mathbf{v}}_{0}};{{\mathbf{m}}_{\mathcal{H}}},{{\mathbf{C}}_{\mathcal{H}}}),
\end{equation}
where the mean vector ${\mathbf{m}}_{\mathcal{H}}$ and the covariance matrix  ${\mathbf{C}}_{\mathcal{H}}$ are given in Appendix~\ref{calculate_v0_gaussian_pdf}.

\subsubsection{Distributed Velocity Estimation}
In contrast to location estimation,  the ambiguity in the velocity vector direction  arises when using a single subarray pair. 
To resolve this problem, we employ geometric diversity through distinct \ac{t-r} subarray configurations.
Let $\Omega$ denote the set of all possible configurations of two distinct \ac{t-r} subarray pairs, with each configuration represented by $\omega_i = \{(m,n), (p,q)\}$, where $i = 1, 2, \ldots, |\Omega|$, and $|\Omega|$ represents the total number of configurations. To address the velocity estimation issue, we select a subset of configurations $\omega_i \in \Omega$, such that $m \neq p$ or $n \neq q$, to obtain the marginal distributions of $\mathbf{v}_0$.
The marginal distribution of $\mathbf{v}_0$ determined by the configuration $\omega_i$ is given by
\begin{equation} \label{vo_marg_pdf_dis_pair}
     \setlength{\abovedisplayskip}{1pt}
	\setlength{\belowdisplayskip}{5pt}
    \Delta_{{{\mathbf{v}}_{0}}}^{{({\omega }_{i})}}\!( {{\mathbf{v}}_{0}})\!\propto\! {{\Delta }_{{{f}_{mn}}\!\to{{p}_{{{f}_{mn}}}}}}\!(\zeta \mathbf{v}_{0}^{T}\!{{\mathbf{u}}_{mn}}){{\Delta }_{{{f}_{pq}}\!\to {{p}_{{{f}_{pq}}}}}}\!(\zeta \mathbf{v}_{0}^{T}\!{{\mathbf{u}}_{pq}}).
\end{equation}
By applying the Laplace approximation for the  product of $ {{\Delta }_{{{f}_{mn}}\to {{p}_{{{f}_{mn}}}}}}(\zeta \mathbf{v}_{0}^{T}{{\mathbf{u}}_{mn}})$ and ${{\Delta }_{{{f}_{pq}}\to {{p}_{{{f}_{pq}}}}}}(\zeta \mathbf{v}_{0}^{T}{{\mathbf{u}}_{pq}})$, we can obtain the Gaussian approximation of $ \Delta _{{{\mathbf{v}}_{0}}}^{{({\omega }_{i})}}({{\mathbf{v}}_{0}})$, i.e.,
\begin{equation}\label{gaussian_approx_v0_dis_pdf}
{\Delta}_{{{\mathbf{v}}_{0}}}^{({\omega }_{i})}({{\mathbf{p}}_{0}})\propto \mathcal{N}( {{\mathbf{p}}_{0}};\mathbf{m}_{v}^{({\omega }_{i})},\mathbf{C}_{v}^{({\omega }_{i})}),
\end{equation}
where $\mathbf{m}_{v}^{({\omega }_{i})}$ and $\mathbf{C}_{v}^{({\omega }_{i})}$ are given in Appendix~\ref{calculate_dis_v0_gaussian_pdf}.

Subsequently, distributions in \eqref{gaussian_approx_v0_dis_pdf} are fused by a product of Gaussians to acquire the final velocity estimate, where the mean vector and covariance matrix given by:
\begin{subequations}\label{v0_sub_level_fusion}
       \setlength{\abovedisplayskip}{2pt}
	\setlength{\belowdisplayskip}{2pt}
    \begin{align}
& {{{\mathbf{\tilde{m}}}}_{\mathcal{H}}}={{{\mathbf{\tilde{C}}}}_{\mathcal{H}}}(\sum\nolimits_{i=1}^{|\Omega |}{{{( \mathbf{C}_{v}^{({{\omega }_{i}})} )}^{-1}}}\mathbf{m}_{v}^{({{\omega }_{i}})}), \\ 
 & \mathbf{\tilde{C}}_{\mathcal{H}}^{-1}=\sum\nolimits_{i=1}^{|\Omega |}{{{(\mathbf{C}_{v}^{({{\omega }_{i}})})}^{-1}}}. 
\end{align}
\end{subequations}

%This  framework can achieve ambiguity resolution equivalent to centralized processing while enabling parallel computation across subarray pairs.  

%

%where $\zeta \triangleq \frac{2\pi T}{\lambda }$.
%

\subsection{Overall Algorithm}

Based on previous sections, we propose the Variational Inference and Message Passing-based Near-Field Motion Parameter Estimation (VMP-NMPE) algorithm, as summarized in Algorithm~\ref{alg:vmp_nmpe}. 
The proposed framework operates through two cascaded stages:

\textbf{Stage 1 - Variational Bayesian Inference:} Each \ac{t-r} subarray pair executes parallel computation of closed-form posterior distributions for intermediate parameters (\ac{doa} $\phi_n$, \ac{dod} $\theta_m$, bistatic Doppler $f_{mn}$, reflection coefficient $\beta_{mn}$) and hyperparameters ($\sigma$, $\varsigma_{mn}$) through \ac{cavi}, as summarized in Algorithm~\ref{alg:vmp_subarray}.

\textbf{Stage 2 - Hierarchical Message Passing:} A two-tiered message passing architecture, which contains two modes:
\begin{itemize}
    \item \textit{System-level fusion}: Aggregates all subarray messages through gradient-based optimization for centralized location and velocity estimation.
    \item \textit{Subarray-level fusion}: Performs distributed estimation, fusing results via Gaussian mixture product rules to obtain final location and velocity estimates.
\end{itemize}

The computational complexity of the VMP-NMPE algorithm is as follows:
The variational inference stage requires $O(K_tK_rN_{vi}(M^2 + ML))$ operations for $N_{vi}$ iterations across $K_tK_r$ subarrays. The location estimation via system-level fusion involves $O(N_{gd}K_tK_r)$ operations for gradient descent with $N_{gd}$ steps, while the velocity estimation entails $O(N_{ga}K_tK_r)$ operations for Gauss-Newton iterations with $N_{ga}$ steps. 
The subarray-level distributed fusion incurs $O(|\Omega|)$ operations for $|\Omega|$ subarray pair configurations. 
% The hierarchical structure ensures scalability while maintaining near-optimal estimation accuracy through principled probabilistic fusion.

% \begin{figure} [t]
% %\setlength{\abovecaptionskip}{-0.1cm}
% 		\centering		\includegraphics[width=0.4\textwidth]{flow chart.pdf}
% 		\caption{ Flow diagram of the proposed VMP-NMPE Algorithm.}
% 		\label{flow_chart}
%         %\vspace{-3mm}
% \end{figure}

\begin{algorithm}[t]
\caption{Variational Bayesian Inference for Subarray Pair}
\label{alg:vmp_subarray}
\begin{algorithmic}[1]
\REQUIRE $\mathbf{z}_{mn}$, $\bar{\eta}_{\theta_m}$, $\bar{\eta}_{\phi_n}$, $\bar{\eta}_{f_{mn}}$, $\varsigma_{mn}^{(0)}$, $\hat{\beta}_{mn}^{(0)}$, $\sigma^{(0)}$,  $N_{\text{max}}$, $\epsilon$
\ENSURE Posterior parameters $\hat{\eta}_{\theta_m}$, $\hat{\eta}_{\phi_n}$, $\hat{\eta}_{f_{mn}}$, $\hat{\beta}_{mn}$, and hyperparameters $\tilde{\varsigma}_{mn}$, $\hat{\sigma}$
\STATE Initialize $q^{(0)}(\theta_m) \gets \mathcal{M}(\theta_m;\bar{\eta}_{\theta_m})$,  $q^{(0)}(\phi_n) \gets \mathcal{M}(\phi_n;\bar{\eta}_{\phi_n})$, $q^{(0)}(f_{mn}) \gets \mathcal{M}(f_{mn};\bar{\eta}_{f_{mn}})$, $q^{(0)}(\beta_{mn}) \gets \mathcal{CN}(0, \varsigma_{mn}^{(0)})$, and $\boldsymbol{\xi}^{(0)} \gets [{\bar{\eta}_{\theta_m}, \bar{\eta}_{\phi_n}, \bar{\eta}_{f_{mn}}, \hat{\beta}_{mn}^{(0)}, \tilde{\varsigma}_{mn}^{(0)}, \hat{\sigma}^{(0)}}]$
\FOR{$i = 1$ to $N_{\text{max}}$}
    \STATE Update angular and posteriors
     $\hat{\eta}_{\phi_n}^{(i)} \gets \text{Solve } \eqref{eta_compute}$ with $x = \phi_n$, 
    $\hat{\eta}_{\theta_m}^{(i)} \gets \text{Solve } \eqref{eta_compute}$ with $x = \theta_m$
    \STATE Update Doppler posterior:
     $\hat{\eta}_{f_{mn}}^{(i)} \gets \text{Solve } \eqref{eta_compute}$ with $x = \tilde{f}_{mn}$
    \STATE Update reflection coefficient:
     $\hat{\beta}_{mn}^{(i)}, \tilde{\varsigma}_{mn}^{(i)} \gets \eqref{eq_beta_mean}$
    \STATE Estimate hyperparameters:
     $\hat{\sigma}^{(i)} \gets \eqref{eq_sigma_est}$,
     $\hat{\varsigma}_{mn}^{(i)} \gets \eqref{eq_varsigma_est}$
     \STATE Update $\boldsymbol{\xi}^{(i)} \gets [{\hat{\eta}_{\theta_m}^{(i)}, \hat{\eta}_{\phi_n}^{(i)}, \hat{\eta}_{f_{mn}}^{(i)}, \hat{\beta}_{mn}^{(i)}, \tilde{\varsigma}_{mn}^{(i)}, \hat{\sigma}^{(i)}}]$
    \IF{$\|\boldsymbol{\xi}^{(i)} - \boldsymbol{\xi}^{(i-1)}\| < \epsilon$}
        \STATE \textbf{break}
    \ENDIF
\ENDFOR
\end{algorithmic}
\end{algorithm}

\begin{algorithm}[t]
\caption{VMP-NMPE: Variational Message Passing for Near-Field Motion Parameter Estimation}
\label{alg:vmp_nmpe}
\begin{algorithmic}[1]
\REQUIRE $\{\mathbf{z}_{mn}\}$, system parameters $T, \lambda$, convergence thresholds $\epsilon, \epsilon_1, \epsilon_2$
\ENSURE Target location $\hat{\mathbf{p}}_0$, velocity $\hat{\mathbf{v}}_0$, covariance matrices $\mathbf{C}_{\mathcal{G}}, \mathbf{C}_{\mathcal{H}}$

\STATE \textbf{Stage 1: Distributed Variational Inference}
\FOR{each transmit subarray $m \in [1,K_t]$}
    \FOR{each receive subarray $n \in [1,K_r]$}
        \STATE Run Algorithm \ref{alg:vmp_subarray} for subarray pair $(m,n)$
        \STATE Store posterior parameters $\hat{\eta}_{\theta_m}, \hat{\eta}_{\phi_n}, \hat{\eta}_{f_{mn}}, \hat{\beta}_{mn}$
\STATE Compute local messages $\Delta_{p_{\theta_m} \rightarrow\theta_m}$, $\Delta_{p_{\phi_n} \rightarrow \phi_n}$, $\Delta_{p_{f_{mn}} \rightarrow f_{mn}}$ using \eqref{unified_message_passing}
    \ENDFOR
\ENDFOR

\STATE \textbf{Stage 2: Hierarchical Message Passing}
\STATE \textbf{Option A: System-level Fusion}
\STATE Initialize $\mathbf{p}_0^{(0)}$, set iteration $i=0$
\WHILE{$\|\mathbf{p}_0^{(i+1)} - \mathbf{p}_0^{(i)}\| > \epsilon_1$}
\STATE Update $\mathbf{p}_0^{(i+1)}$ using \eqref{p0_sys_level_update}
\STATE $i \leftarrow i+1$
\ENDWHILE
\STATE Set $\hat{\mathbf{p}}_0 = \mathbf{p}_0^{(i)}$, compute $\mathbf{C}_{\mathcal{G}}$ using \eqref{covar_p0_sys_level}
\STATE Initialize $\mathbf{v}_0^{(0)}$, set iteration $j=0$
\WHILE{$\|\mathbf{v}_0^{(j+1)} - \mathbf{v}_0^{(j)}\| > \epsilon_2$}
\STATE Update $\mathbf{v}_0^{(j+1)}$ using \eqref{v0_sys_update}
\STATE $j \leftarrow j+1$
\ENDWHILE
\STATE Set $\hat{\mathbf{v}}_0 = \mathbf{v}_0^{(j)}$, compute $\mathbf{C}_{\mathcal{H}}$ using \eqref{v0_sys_level_covar}
\STATE \textbf{Option B: Subarray-level Fusion}
\FOR{each subarray pair $(m,n)$}
\STATE Compute local estimates $\mathbf{m}_p^{(mn)}, \mathbf{C}_p^{(mn)}$ using \eqref{gaussian_approx_dis_p0_pdf}
\ENDFOR
\STATE Fuse local estimates ${\mathbf{m}_p^{(mn)}, \mathbf{C}_p^{(mn)}}$ to obtain $\hat{\mathbf{p}}_0, \mathbf{C}_{\mathcal{G}}$ using \eqref{p0_sub_level_fusion}
\FOR{each subarray pair configuration $\omega_i \in \Omega$}
\STATE Compute local estimates $\mathbf{m}_v^{(\omega_i)}, \mathbf{C}_v^{(\omega_i)}$ using \eqref{gaussian_approx_v0_dis_pdf}
\ENDFOR
\STATE Fuse local estimates ${\mathbf{m}_v^{(\omega_i)}, \mathbf{C}_v^{(\omega_i)}}$ to obtain $\hat{\mathbf{v}}_0, \mathbf{C}_{\mathcal{H}}$ using \eqref{v0_sub_level_fusion}

\end{algorithmic}
\end{algorithm}

\section{Cramer-Rao Bound} \label{section_performance_analysis}
To evaluate the near-field motion parameter estimation, we derive the \ac{crb}s for estimating $\boldsymbol{\eta }={{[\mathbf{p}_{0}^{T},\mathbf{v}_{0}^{T},{\tilde{\boldsymbol{\beta}}^{T}}]}^{T}}$.
We first define an intermediate parameter set $\boldsymbol{\rho}=[{{\boldsymbol{\theta }}^{T}},{\boldsymbol{\phi }^{T}},{{\boldsymbol{f}}^{T}},{\tilde{\boldsymbol{\beta }}^{T}}]^{T}$, where $\boldsymbol{\theta }\triangleq[ {{\theta }_{1}},\ldots ,{{\theta }_{{{K}_{t}}}}]$, $\boldsymbol{\phi } \triangleq[ {{\phi }_{1}},\ldots ,{{\phi }_{{{K}_{r}}}}]$, $\boldsymbol{f}\triangleq[ {{f}_{11}},{{f}_{12}},\ldots ,{{f}_{{{K}_{t}}{{K}_{r}}}}]$, and $\tilde{\boldsymbol{\beta}}\triangleq[ \Re\{ {{\beta }_{11}}\},\ldots ,\Re \{ {{\beta }_{{{K}_{t}}{{K}_{r}}}} \},\Im \{ {{\beta }_{11}} \},\ldots ,\Im \{ {{\beta }_{{{K}_{t}}{{K}_{r}}}} \} ] $.

The \ac{crb} for $\boldsymbol{\eta}$ satisfies
\begin{equation}
       \setlength{\abovedisplayskip}{2pt}
	\setlength{\belowdisplayskip}{2pt}
    \mathbb{E}\{ (\boldsymbol{\hat{\eta }}-\boldsymbol{\eta }){( \boldsymbol{\hat{\eta }}-\boldsymbol{\eta })^{T}} \}\succeq {{\mathbf{F}}_{\boldsymbol{\eta }}^{-1}},
\end{equation}
where ${\mathbf{F}}_{\boldsymbol{\eta }}$ denotes the \ac{fim} for $\boldsymbol{\eta}$, which is given by
\begin{equation}
       \setlength{\abovedisplayskip}{3pt}
	\setlength{\belowdisplayskip}{3pt}
    \mathbf{F}_{\boldsymbol{\eta}}\!\!=\!\!\frac{2}{\sigma}\!\! \sum\limits_{m=1}^{{{K}_{t}}}\!{\sum\limits_{n=1}^{{{K}_{r}}}\!{\sum\limits_{l=0}^{L-1}\!{\Re \!\bigg\{ \!\!\Big(\!\frac{\partial ( {{\beta }_{mn}}\boldsymbol{\mu }_{mn}^{(l)})}{\partial \boldsymbol{\eta }}\!\Big)^H\!\!{{\Big(\frac{\partial ( {{\beta }_{mn}}\boldsymbol{\mu }_{mn}^{(l)} )}{\partial \boldsymbol{\eta }}\!\Big)}}\!\!\bigg\}}}}.
\end{equation}
However, direct computation of $\mathbf{F}$ is challenging due to the nonlinear coupling between $\boldsymbol{\eta}$ and the physical observation model. To address this, we first derive the \ac{fim} for the intermediate parameters $\boldsymbol{\rho}$, followed by a transformation via the chain rule.

The \ac{fim} for $\boldsymbol{\rho}$ is formulated as:
\begin{equation}\label{fim_calcu_rho}
       \setlength{\abovedisplayskip}{3pt}
	\setlength{\belowdisplayskip}{3pt}
\mathbf{F}_{\boldsymbol{\rho}}\!\!=\!\!\frac{2}{\sigma }\!\!\sum\limits_{m=1}^{{{K}_{t}}}\!{\sum\limits_{n=1}^{{{K}_{r}}}\!{\sum\limits_{l=0}^{L-1}\!{\Re\!\bigg\{\!\!{\Big(\!\frac{\partial({{\beta }_{mn}}\boldsymbol{\mu }_{mn}^{(l)})}{\partial \boldsymbol{\rho}}\!\!\Big)^H}\!\!\Big(\!\frac{\partial ( {{\beta }_{mn}}\boldsymbol{\mu }_{mn}^{(l)})}{\partial \boldsymbol{\rho}}\!\!\Big)\!\!\bigg\}}}}.
\end{equation}
Then, by using the Jacobian matrix $\boldsymbol{\Psi} \triangleq \frac{\partial \boldsymbol{\rho}}{\partial \boldsymbol{\eta }}$ and the chain rule $\frac{\partial( {{\beta }_{mn}}\boldsymbol{\mu }_{mn}^{(l)} )}{\partial \boldsymbol{\eta }} = \frac{\partial( {{\beta }_{mn}}\boldsymbol{\mu }_{mn}^{(l)} )}{\partial \boldsymbol{\rho}} \boldsymbol{\Psi }$, the \ac{fim} for $\boldsymbol{\eta}$ can be obtained through the transformation:
\begin{equation}\label{fim_calcu_eta}
       \setlength{\abovedisplayskip}{2pt}
	\setlength{\belowdisplayskip}{1pt}
\mathbf{F}_{\boldsymbol{\eta}} = \boldsymbol{\Psi }^{T} \mathbf{F}_{\boldsymbol{\rho}} \boldsymbol{\Psi }.
\end{equation}

The calculations for the terms in $\frac{\partial( {{\beta }_{mn}}\boldsymbol{\mu }_{mn}^{(l)})}{\partial \boldsymbol{\rho}}$ and $\boldsymbol{\Psi}$ are derived as following.
The nonzero entries of $\frac{\partial( {{\beta }_{mn}}\boldsymbol{\mu }_{mn}^{(l)})}{\partial \boldsymbol{\rho}}$ are calculated by
\begin{equation}
     \setlength{\abovedisplayskip}{2pt}
	\setlength{\belowdisplayskip}{2pt}
    \frac{\partial( {{\beta }_{mn}}\boldsymbol{\mu }_{mn}^{(l)})}{\partial \Re\{ {{\beta }_{mn}} \}}=\boldsymbol{\mu }_{mn}^{(l)},\quad \frac{\partial( {{\beta }_{mn}}\boldsymbol{\mu }_{mn}^{(l)} )}{\partial \Im \{ {{\beta }_{mn}} \}}=j\boldsymbol{\mu }_{mn}^{(l)},
\end{equation}
\begin{equation}
    \begin{aligned}
  & \frac{\partial ({{\beta }_{mn}}\boldsymbol{\mu }_{mn}^{(l)})}{\partial {{\theta }_{m}}}\!\!=\!j{{\beta }_{mn}}[ {{e}^{-j2\pi {{f}_{mn}}{{t}_{l}}}}{{\mathbf{a}}_{r}}({{\phi }_{n}})\!\otimes\!( \mathbf{c}\!\odot\! {{\mathbf{a}}_{t}}({{\theta }_{m}}))\\ 
 & \!+{2\pi {{t}_{l}}}/{\lambda }{{e}^{-j2\pi {{f}_{mn}}{{t}_{l}}}}(\tan({{{\tilde{\theta }}}_{m}}){{v}_{x}}\!-\!{{v}_{y}}){{\mathbf{a}}_{r}}({{\phi }_{n}})\!\otimes\! {{\mathbf{a}}_{t}}({{\theta}_{m}})], 
\end{aligned}
\end{equation}
where $\mathbf{c}\triangleq{{\left[ 0,1,\ldots ,M-1 \right]}^{T}}$. Similarly, we have
\begin{equation}
       \setlength{\abovedisplayskip}{2pt}
	\setlength{\belowdisplayskip}{2pt}
    \begin{aligned}
  & \frac{\partial({{\beta }_{mn}}\boldsymbol{\mu }_{mn}^{(l)})}{\partial {{\phi }_{n}}}\!=\!j{{\beta }_{mn}}[ {{e}^{-j2\pi {{f}_{mn}}{{t}_{l}}}}( \mathbf{c}\!\odot \!{{\mathbf{a}}_{r}}({{\phi }_{n}}))\!\otimes \!{{\mathbf{a}}_{t}}({{\theta }_{m}}) \\ 
 & \!+{2\pi {{t}_{l}}}/{\lambda }{{e}^{-j2\pi {{f}_{mn}}{{t}_{l}}}}(\tan({{{\tilde{\phi}}}_{n}} ){{v}_{x}}\!-\!{{v}_{y}}){{\mathbf{a}}_{r}}({{\phi }_{n}})\!\otimes\! {{\mathbf{a}}_{t}}({{\theta }_{m}})],
\end{aligned}
\end{equation}
\begin{equation}
       \setlength{\abovedisplayskip}{2pt}
	\setlength{\belowdisplayskip}{2pt}
    \begin{aligned}
  & \frac{\partial({{\beta }_{mn}}\boldsymbol{\mu }_{mn}^{(l)})}{\partial {{f}_{mn}}}\!\!=\!\!j{{\beta }_{mn}}{{e}^{\!-j2\pi {{f}_{mn}}{{t}_{l}}}}\!\bigg[ \!\frac{\lambda( \mathbf{c}\!\odot\! {{\mathbf{a}}_{r}}({{\phi }_{n}}))}{{{v}_{y}}\!-\!\tan({{{\tilde{\phi }}}_{n}}){{v}_{x}}}\!\!\otimes\! {{\mathbf{a}}_{t}}({{\theta }_{m}})  \\ 
 & \!-2\pi {{t}_{l}}{{\mathbf{a}}_{r}}({{\phi }_{n}})\!\otimes\!{{\mathbf{a}}_{t}}({{\theta }_{m}})\!+\!{{\mathbf{a}}_{r}}({{\phi }_{n}})\!\otimes\! \frac{\lambda( \mathbf{c}\!\odot\! {{\mathbf{a}}_{t}}({{\theta }_{m}}))}{{{v}_{y}}\!-\!\tan( {{{\tilde{\theta }}}_{m}}){{v}_{x}}} \!\bigg].
\end{aligned}
\end{equation}

The nonzero entries in $\boldsymbol{\Psi}$ are calculated by
\begin{equation}
       \setlength{\abovedisplayskip}{2pt}
	\setlength{\belowdisplayskip}{2pt}
    \frac{\partial {{\theta }_{m}}}{\partial {{\mathbf{p}}_{0}}}=\frac{\partial {( \mathbf{e}_{m}^{t} )^{T}}{{\mathbf{e}}_{x}}}{\partial {{\mathbf{p}}_{0}}}=\frac{-{{\mathbf{e}}_{x}}+{( \mathbf{e}_{m}^{t})^{T}}{{\mathbf{e}}_{x}}{( \mathbf{e}_{m}^{t})^{T}}}{{{\| \mathbf{p}_{m}^{t}-{{\mathbf{p}}_{0}}\|}_{2}}},
\end{equation}
\begin{equation}
    \frac{\partial {{\phi }_{n}}}{\partial {{\mathbf{p}}_{0}}}=\frac{\partial {( \mathbf{e}_{n}^{r} )^{T}}{{\mathbf{e}}_{x}}}{\partial {{\mathbf{p}}_{0}}}=\frac{-{{\mathbf{e}}_{x}}+{( \mathbf{e}_{n}^{r})^{T}}{{\mathbf{e}}_{x}}{( \mathbf{e}_{n}^{r})^{T}}}{{{\| \mathbf{p}_{n}^{r}-{{\mathbf{p}}_{0}}\|}_{2}}},
\end{equation}
\begin{equation}
   \begin{aligned}
       \frac{\partial {{f}_{mn}}}{\partial {{\mathbf{p}}_{0}}} \!\!=\!\!\frac{-{{\mathbf{v}}_{0}}\!\!+\!\!{{(\mathbf{e}_{m}^{t})^{T}}}\!{{\mathbf{v}}_{0}}{{(\mathbf{e}_{m}^{t})^{T}}}}{{{\| \mathbf{p}_{m}^{t}-{{\mathbf{p}}_{0}}\|}_{2}}}\!\!+\!\!\frac{-\!{{\mathbf{v}}_{0}}\!\!+\!\!{{( \mathbf{e}_{n}^{r} )^{T}}}\!{{\mathbf{v}}_{0}}{{( \mathbf{e}_{n}^{r})^{T}}}}{{{\| \mathbf{p}_{n}^{r}-{{\mathbf{p}}_{0}} \|}_{2}}},  
   \end{aligned}
\end{equation}
\begin{equation}
    \frac{\partial {{f}_{mn}}}{\partial {{\mathbf{v}}_{0}}}=\mathbf{e}_{m}^{t}+\mathbf{e}_{n}^{r},\ \frac{\partial \Re \{ {{\beta }_{mn}}\}}{\partial \Re \{ {{\beta }_{mn}}\}}=1,\ \frac{\partial \Im\{ {{\beta }_{mn}} \}}{\partial \Im \{ {{\beta }_{mn}}\}}=1.
\end{equation}

\section{Numerical Simulation} \label{section_simulation}
In this section, we presents a comprehensive performance evaluation of the proposed VMP-NMPE algorithm through numerical simulations.
We compare our method against three benchmarks: 
\begin{enumerate}
\item \textbf{Maximum Likelihood Estimation} (labeled ``ML Estimation"): This scheme jointly estimates the target location $\mathbf{p}_0$, velocity $\mathbf{v}_0$, and complex amplitudes $\boldsymbol{\beta}$ by solving the optimization problem\cite{wang2024velocity}:
\begin{equation}
   \begin{aligned}
        \underset{{{\mathbf{p}}_{0}},{{\mathbf{v}}_{0}}}{\mathop{\min }}\,\!\sum\nolimits_{m=1}^{{{K}_{t}}}\!{\sum\nolimits_{n=1}^{{{K}_{r}}}\!\!{\| {{\mathbf{z}}_{mn}}\!\!-\!{{\beta }_{mn}}{{\boldsymbol{\mu }}_{mn}}( {{\mathbf{p}}_{0}},{{\mathbf{v}}_{0}} )\|_{2}^{2}}}
   \end{aligned}
\end{equation}
The gradient descent algorithm is employed to solve this problem and obtain the ML estimates of $\mathbf{p}_0$ and $\mathbf{v}_0$.

\item \textbf{Grid-based Method}: This is a two-stage algorithm that first estimates the location using a grid search approach. Then, it employs the 2D \ac{music} algorithm to estimate the velocity based on the estimated location.

\item \textbf{Subarray-level Average}: In this scheme, each \ac{t-r} subarray pair independently estimates the location and velocity. These subarray-level estimates are then averaged to obtain the final location and velocity estimates. 
\end{enumerate}  
The system configuration parameters are detailed in Table \ref{simulation_param}. 
We quantify the estimation performance using the \ac{rmse} metrics for both location and velocity:
\begin{subequations}
    \begin{align}
        & \text{RMSE}({{\mathbf{p}}_{0}})\triangleq \sqrt{\mathbb{E}( {{| {{{\hat{x}}}_{0}}-{{x}_{0}} |^{2}}}+{{| {{{\hat{y}}}_{0}}-{{y}_{0}} |^{2}}})}, \\ 
 & \text{RMSE}( {{\mathbf{v}}_{0}})\triangleq \sqrt{\mathbb{E}({{|{{{\hat{v}}}_{x}}-{{v}_{x}}|^{2}}}+{{| {{{\hat{v}}}_{y}}-{{v}_{y}} |^{2}}})}. 
    \end{align}
\end{subequations}
The derived \ac{crb}s are served as theoretical performance benchmarks for both location and velocity estimation. 
All numerical results are statistically validated through 200 independent Monte Carlo trials.

\begin{table} 
\centering
\caption{Radar Settings Used in the Simulations}\label{simulation_param}
\begin{tabular}{|c|c|c|c|}
\hline
Parameter & Value & Parameter & Value \\
\hline
$P_T$ & 30 dBm &  $\sigma$ &  {1 $\text{m}^2$} \\
$T$ & 10 $\mu$s &  $L$ & 600 \\
$B$ & 200 MHz &  $f_c$ & 28 GHz\\
$G_T / G_R$ & 15 dB & $N_t / N_r$ & 256  \\
 $\mathbf{p}_0$ & (15,20.7) m & $\mathbf{v}_0$ & (10,10.2) m/s  \\
\hline
\end{tabular}\vspace{-3mm}
\end{table}

\begin{figure} [t]
\setlength{\abovecaptionskip}{-0.1cm}
		\centering		\includegraphics[width=0.3\textwidth]{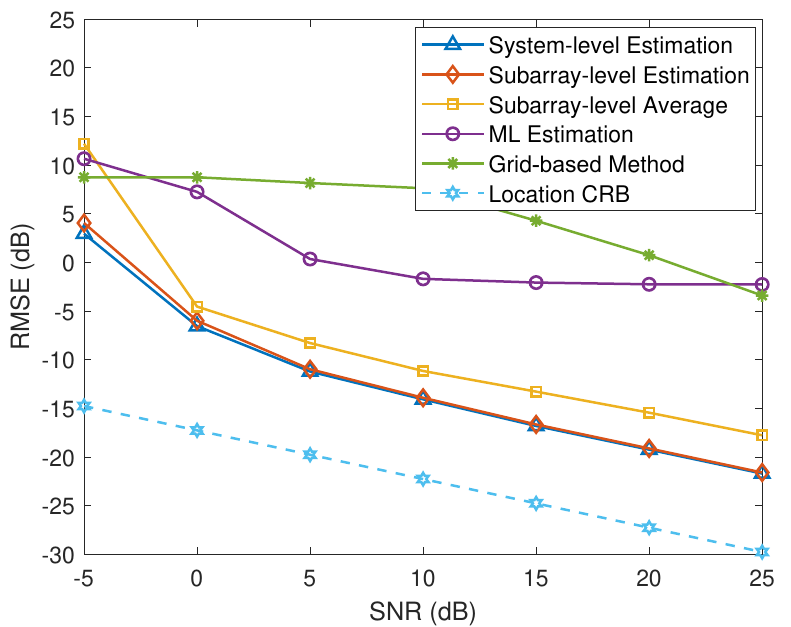}
		\caption{ {\ac{rmse} of location estimation for the proposed  algorithms compared with benchmark schemes versus receive \ac{snr}.}}
		\label{simulation1}
        \vspace{-3mm}
\end{figure}

Fig.~\ref{simulation1} compares the localization accuracy of the proposed system-level and subarray-level algorithms with three benchmark schemes. 
Both proposed algorithms demonstrate superior performance over all baselines across the entire \ac{snr} range, achieving comparable estimation errors that asymptotically approach the \ac{crb} at high \ac{snr}s. 
The performance improvement originates from an adaptive reliability weighting mechanism governed by concentration parameters ${\kappa }_{{{\theta}_{m}}\to {{p}_{{{\theta }_{m}}}}}$ and ${\kappa }_{{{\phi }_{n}}\to {{p}_{{{\phi }_{n}}}}}$ in \eqref{VM_theta_m} and \eqref{VM_phi_n}, which effectively suppresses low-quality subarray measurements.
This contrasts with the subarray-averaging scheme that naively combines estimates with equal weights, thereby amplifying noise impacts particularly at low-to-moderate \ac{snr}s.
The system-level approach marginally outperforms the subarray-level counterpart by exploiting cross-subarray information. 
Compared to conventional approaches, our variational framework fundamentally overcomes following limitations:
1) The ML Estimation suffers from local minima in high-dimensional optimization;
 2) Grid search method suffer from discretization errors and cascaded error propagation, whereas the proposed methods  leverage \ac{vbi} to provide continuous domain modeling, thereby eliminating grid mismatches and enhancing estimation accuracy.
These advantages are particularly pronounced in challenging low-\ac{snr} scenarios where the proposed reliability-aware fusion strategies demonstrate robust noise suppression capabilities.

\begin{figure*} [t]
\setlength{\abovecaptionskip}{-0.1cm}
  \centering
 \begin{minipage}[t]{0.31\textwidth}
       \centering		\includegraphics[width=0.9\textwidth]{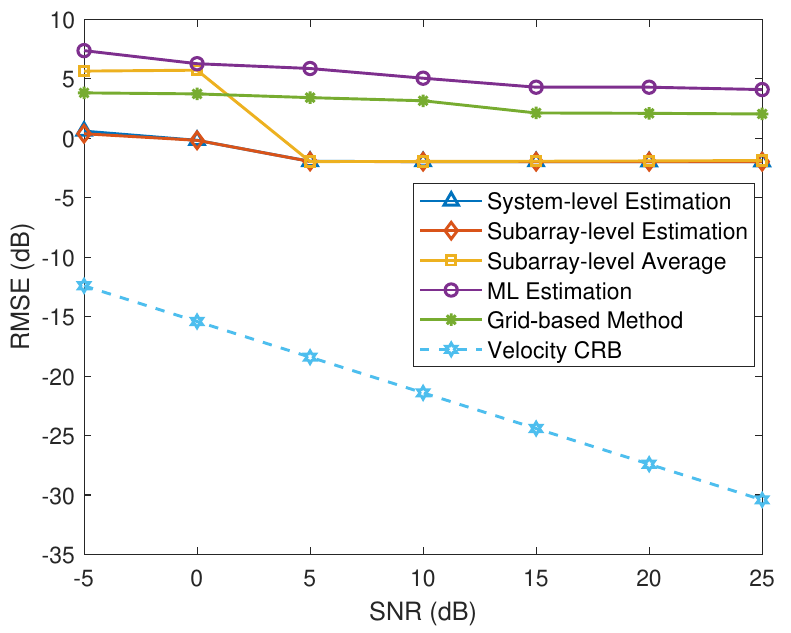}
		\caption{ {\ac{rmse} of velocity estimation for the proposed algorithms compared with benchmark schemes versus receive \ac{snr}.}}
		\label{simulation2}
\end{minipage}
 \begin{minipage}[t]{0.31\textwidth}
      \centering		\includegraphics[width=0.9\textwidth]{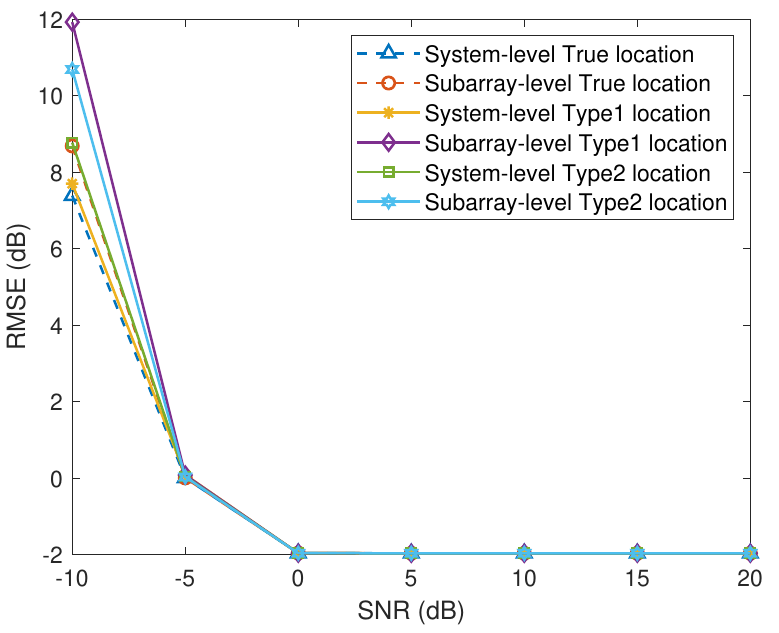}
		\caption{Impact of location estimation accuracy on velocity estimation performance.}
		\label{simulation3}
\end{minipage}
 \begin{minipage}[t]{0.31\textwidth}
\centering		\includegraphics[width=0.9\textwidth]{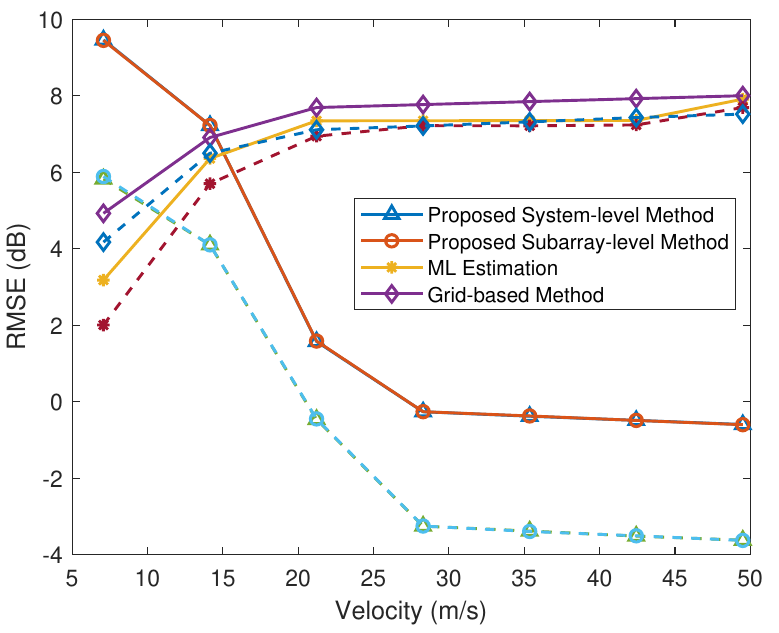}
		\caption{ Impact of target mobility on velocity estimation performance at $ \text{SNR} = 0 \text{dB}$.}
		\label{simulation4}
\end{minipage}
        \vspace{-3mm}
\end{figure*}
Fig.~\ref{simulation2} compares the velocity estimation performance of
the proposed system-level and subarray-level algorithms with three benchmark schemes.
Both system-level and subarray-level variants achieve comparable performance, consistently outperforming all baselines.
The ML Estimator exhibits the worst accuracy due to its requirement for joint optimization over coupled position-velocity parameters, which creates complex local optima in high-dimensional space.
The grid-based method shows discretization-induced error floor, which can be mitigated through our continuous parameter space modeling via \ac{vbi}. 
As the SNR increases, the performance gaps between subarray-averaging and proposed methods narrows. 
This occurs because high-SNR conditions render subarray measurements sufficiently reliable for naive averaging.
However, a discrepancy remains between these converged \ac{rmse} values and the \ac{crb} for velocity estimation. 
This may originate from theoretical constraints of variational approximation, while the mean-field assumption enables tractable inference, it introduces minor information loss in modeling parameter correlations.

% \begin{figure} [t]
% %\setlength{\abovecaptionskip}{-0.1cm}
% 		\centering		\includegraphics[width=0.35\textwidth]{velocity_known_unknown_loc.pdf}
% 		\caption{Impact of location estimation accuracy on velocity estimation performance.}
% 		\label{simulation3}
%         %\vspace{-3mm}
% \end{figure}
Fig.~\ref{simulation3}  demonstrates the impact of location estimation accuracy on velocity estimation, where "True" uses the actual location, while "Type 1" and "Type 2" employ system-level and subarray-level estimated locations, respectively.
At low SNRs, the subarray-level method's velocity estimation suffers considerably, and the choice of location estimation method (True, Type 1, or Type 2) significantly affects the results. 
This degradation stems from error amplification through the bistatic Doppler coupling in \eqref{fmn_geomitric_relation}, causing  biased gradient calculations in velocity estimation.
In contrast, the proposed system-level method can closely match the performance achieved when using the true location, regardless of the location estimation type (Type 1 or Type 2). 
This is due to the system-level approach's ability to coherently fuse information from all subarray pairs, mitigating the impact of individual subarray estimation errors.
As SNR increases, all methods converge  to similarly low \ac{rmse} values, validating the asymptotic optimality of our approach.

% \begin{figure} [t]
% %\setlength{\abovecaptionskip}{-0.1cm}
% 		\centering		\includegraphics[width=0.35\textwidth]{figure1_mobility_velocity_0210.pdf}
% 		\caption{ Impact of target mobility on velocity estimation performance at $ \text{SNR} = 0 \text{dB}$.}
% 		\label{simulation4}
% \end{figure}
 {Fig.~\ref{simulation4} illustrates the impact of target mobility on the velocity estimation performance at an \ac{snr} of $0 $ dB for \ac{cpi} lengths $L=400$ (solid lines) and $L=800$ (dashed lines).}
In these simulations, the target's location and velocity direction remain constant, while the velocity magnitude varies.
It is observed that the extended \ac{cpi} operation enhances the precision of velocity estimation, and the performance gap between the proposed and conventional methods progressively increases with velocity, revealing the inherent advantage of our architecture in high-dynamic scenarios. 
This velocity resilience stems from two aspects: the high-velocity target induce stronger Doppler diversity across spatially distributed subarrays, creating distinctive frequency signatures that our variational framework effectively decouples and exploits, and the hierarchical message passing architecture inherently compensates for spatial-Doppler coupling through adaptive reliability weighting of subarray measurements.

\begin{figure*} [t]
\setlength{\abovecaptionskip}{-0.1cm}
\centering
\begin{minipage}[t]{0.31\textwidth}
\centering
\includegraphics[width=0.9\textwidth]{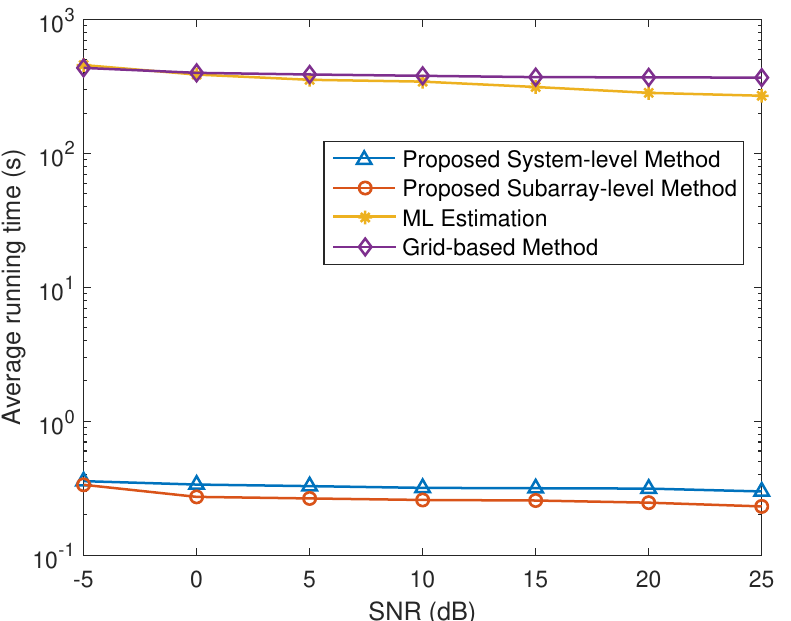}
		\caption{ Average running times for the proposed  algorithms compared with benchmark schemes versus receive \ac{snr}.}
		\label{simulation5}
\end{minipage}	
 \begin{minipage}[t]{0.31\textwidth}
        \centering		\includegraphics[width=0.9\textwidth]{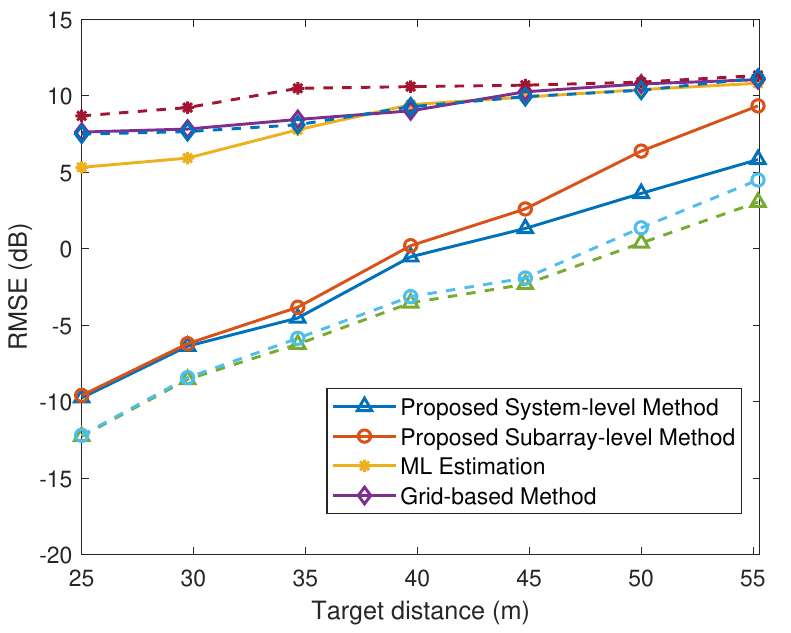}
		\caption{ Impact of target distance on location estimation performance with $M=32$ and $M=64$.}
		\label{simulation6a}
\end{minipage}
 \begin{minipage}[t]{0.31\textwidth}
        \centering		\includegraphics[width=0.9\textwidth]{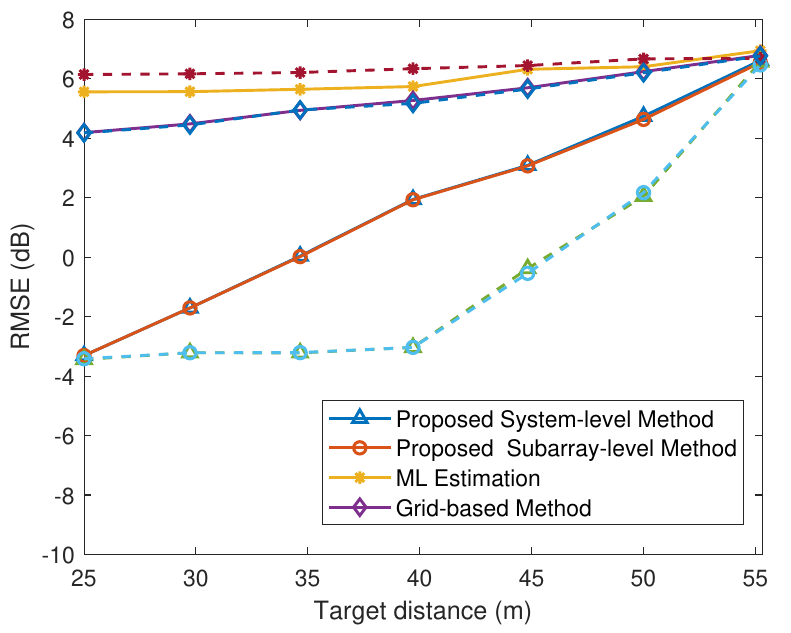}
		\caption{ \ac{rmse} of location estimation for the proposed  algorithms compared with benchmark schemes versus receive \ac{snr}.}
		\label{simulation6b}
\end{minipage}
\vspace{-3mm}
\end{figure*}

Fig.~\ref{simulation5} illustrates the average running time comparison between the proposed algorithms and benchmark schemes across different SNRs.
The computational complexities are $\mathcal{O}(N_{\text{vi}}K_t K_rM(M+L))$ for both system-level and subarray-level proposed algorithms, $\mathcal{O}(N_{\text{iter}}{K_t^2}{K_r^2}M^3L^3)$ for ML estimation, and $\mathcal{O}(K_t K_r(M^6L^3))$ for the grid-based method, where $I$ denotes the number of grid points per parameter dimension. The proposed algorithms achieve $10^{-1}$-$10^0$ seconds average running times, substantially outperforming the ML estimation and the grid-based method ($10^2$-$10^3$ seconds). This superior efficiency can be attributed to two key factors: the parallel processing capability during the variational inference stage, where parameter estimation for each subarray pair can be computed simultaneously, and the well-structured computational framework that avoids the exponential complexity of grid search and the quadratic dependency on array dimensions of ML estimation.

% \begin{figure} [t]
% 		\centering		\includegraphics[width=0.35\textwidth]{fig_location_estimation0210.pdf}
% 		\caption{ Impact of target distance on location estimation performance with $M=32$ and $M=64$.}
% 		\label{simulation6a}
% \end{figure}

 {Fig.~\ref{simulation6a} illustrates the \ac{rmse} of location estimation  for subarray sizes $M=32$ (solid lines) and $M=64$ (dashed lines), with fixed velocity and total number of transmit and receive antennas.}
It can be observed that the accuracy of location estimation for for all schemes decreases as the distance increases, which is due to the diminishing curvature of the spherical wavefront as the target located farther away from the \ac{bs}. 
The performance gap between the benchmark schemes and the proposed methods diminishes with increasing distance due to the gradual weakening of the near-field effect, as the signal propagation becomes more similar to the far-field case.
For the two proposed approaches, the location estimation performance with $M=64$ is consistently better than that with $M=32$ for each method across all distances.
This improvement can be attributed to the more accurate estimation of the \ac{doa} and \ac{dod} by larger subarrays, which provide a higher angular resolution and are less sensitive to noise, leading to more precise location estimates after fusion.

% \begin{figure} [t]
% 		\centering		\includegraphics[width=0.35\textwidth]{fig_velocity_estimation0210.pdf}
% 		\caption{ \ac{rmse} of location estimation for the proposed  algorithms compared with benchmark schemes versus receive \ac{snr}.}
% 		\label{simulation6b}
% \end{figure}

 {Fig.\ref{simulation6b} illustrates the \ac{rmse} performance of velocity estimation versus target distance for the proposed algorithms and benchmark schemes for subarray sizes $M=32$ (solid lines) and $M=64$ (dashed lines).}
It can be seen that the proposed system-level and subarray-level methods exhibit similar performance across the distance range considered, outperforming the ML estimation and grid-based method. 
As the target distance increases, the performance gap between the proposed algorithms and the benchmarks diminishes. 
Similar to Fig. 7, the velocity estimation accuracy improves when $M=64$ compared to $M=32$, due to the alternating optimization of Doppler, \ac{dod}, and \ac{doa} for each subarray during the variational inference process. 
Larger subarrays enable higher angular estimation precision, which consequently improves the Doppler estimation accuracy. 
%The superior performance of the proposed algorithms stems from their ability to exploit the near-field spherical wavefront model and coherently fuse information from all subarray pairs, mitigating the impact of individual subarray estimation errors.

\section{Conclusion}\label{section_conclusion}
	    In this paper, a novel subarray-based \ac{vmp} algorithm has been proposed for joint near-field location and velocity estimation.
        Simulation results have demonstrated its superior performance compared to existing methods, achieving centimeter-level location accuracy and sub-m/s velocity accuracy with significantly reduced computational complexity.
        It has been found that increasing the \ac{cpi} length or subarray size generally leads to improved estimation accuracy. 
        Moreover, the proposed method has exhibited robust performance for the target with high mobility.
        Therefore, the low complexity and real-time capability of the proposed algorithm make it highly suitable for implementation in real-world near-field sensing scenarios.

\appendices
\section{Proof of Lemma~\ref{lemma_subarray_vm}}\label{proof_of_lemma_subarray_vm}
By substituting  the joint \ac{pdf} \eqref{joint_pdf_trsubarray}
 and postulated \ac{pdf} \eqref{surrogate_posteri_pdf} into \eqref{trans_elbo_wrt_qu}, we  obtain
\begin{equation} \label{trans_elbo_wrt_qphi_n}
   \setlength{\abovedisplayskip}{2pt}
	\setlength{\belowdisplayskip}{1pt}
\begin{aligned}
        \ln q( {{\phi }_{n}} | {{\mathbf{z}}_{mn}})\!=\!{{E}_{\backslash {{\phi }_{n}}}}\![ \ln p( {{\mathbf{z}}_{mn}}| \boldsymbol{\Theta} ;\boldsymbol{\alpha }) ]\!+\!\ln p( {{\phi }_{n}} )\!+\!\text{const}.
\end{aligned}
\end{equation}
By substituting the likelihood \ac{pdf} \eqref{likelihood_zmn} into \eqref{trans_elbo_wrt_qphi_n} and retaining only the terms dependent on $\phi_n$, we further obtain
\begin{equation}
   \setlength{\abovedisplayskip}{2pt}
	\setlength{\belowdisplayskip}{1pt}
\begin{aligned}
     \ln q ( {{\phi }_{n}} | {{\mathbf{z}}_{mn}} )\!=\!
     -\frac{2}{\sigma }{{E}_{\backslash {{\phi }_{n}}}} [ \Re \{ \mathbf{z}_{mn}^{H} ( \mathbf{a} ( {{\phi }_{n}}  )\otimes {{\mathbf{c}}_{mn} } )  \}  ] \!+\!\ln p ( {{\phi }_{n}}  ),
\end{aligned}
\end{equation}
By carrying out the required expectations, we finally obtain
\begin{equation}\label{posteri_pdf_phin_exp}
   \setlength{\abovedisplayskip}{2pt}
	\setlength{\belowdisplayskip}{2pt}
    q( {{\phi }_{n}}| {{\mathbf{z}}_{mn}})\propto p( {{\phi }_{n}})\exp\{ \Re( (\boldsymbol{\eta }_{mn} )^{H}( \mathbf{a}( {{\phi }_{n}})\otimes {{{\mathbf{\hat{c}}}}_{mn} }) ) \}.
\end{equation}
xpressing $\mathbf{a}( {{\phi }_{n}})$ in terms of the antenna indices, we can rewrite \eqref{posteri_pdf_phin_exp} as
\begin{equation}\label{posteri_pdf_phin_exp_mfold}
   \setlength{\abovedisplayskip}{2pt}
	\setlength{\belowdisplayskip}{2pt}
    q( {{\phi }_{n}} | {{\mathbf{z}}_{mn}})\propto p( {{\phi }_{n}})\prod\nolimits_{k=0}^{M-1}{\exp\{ \Re ( \tilde{\eta }_{\phi_n, k}^{*}{{e}^{jk{{\phi }_{n}}}})\}},
\end{equation}
where $\tilde{\eta }_{\phi_n, k}^{*}$ has the polar form ${\tilde{\eta }_{\phi_n, k}}={{\kappa }_{\phi_n, k}}{{e}^{j{{\mu }_{\phi_n, k}}}}$.
When $k=0$, the corresponding factor in \eqref{posteri_pdf_phin_exp_mfold} is a constant and can be removed. 
The factors with index $k\geq 1$ have the form of a $k$-fold wrapped \ac{vm} distribution $\mathcal{M}(k\phi_n, \tilde{\eta}_{\phi_n, k})$, which can be approximated by a mixture of \ac{vm} distributions\cite{badiu2017variational}.

To fully exploit the multiplicative closure property of \ac{vm} distributions, we also model the prior distribution of $\phi_n$ using a \ac{vm} distribution, i.e., $p(\phi_n) = \mathcal{M}(\phi_n; \bar{\eta}_{\phi_n})$. 
We define the function to represent the exponent of \eqref{posteri_pdf_phin_exp}, which is given by 
$g( {{\phi }_{n}})=\Re ( \bar{\eta }_{\phi_n}^{*}{{e}^{j{{\phi }_{n}}}}+\sum\nolimits_{k=1}^{M-1}{\eta _{\phi_n, k}^{*}{{e}^{jk{{\phi }_{n}}}}} )$.

According to the properties of wrapped normal distributions and their similarity to \ac{vm} distributions\cite{mardia2009directional}, we can obtain the \ac{vm} approximation of $q( {{\phi}_{n}}| {{\mathbf{z}}_{mn}})$ as $\mathcal{M}(\phi_n, \hat{\eta}_{\phi_n})$, where $\hat{\eta}_{\phi_n} = \hat{\kappa}_{\phi_n}e^{j{\hat{\phi}_n}}$.
The mean direction is ${{\hat{\phi }}_{n}}={{\bar{\phi }}_{n}}-\frac{{g}'({{\phi }_{n}})}{{g}''({{\phi}_{n}})}$, where ${{\bar{\phi }}_{n}}$ is obtained at the maximum of $g({{\phi}_{n}})$.
The concentration is given by ${{\hat{\kappa }}_{{{\phi }_{n}}}}={{A}^{-1}}(\exp(\frac{1}{2{g}''({{\phi }_{n}})}) )$.

The proof follows an identical variational structure for parameters $\theta_m$ and $\tilde{f}_{mn}$.

\section{Proof of Lemma~\ref{lemma_beta_gaussian}}\label{proof_of_lemma_beta_mn}
The optimal form of $q( {{\beta}_{mn}}| {{\mathbf{z}}_{mn}} )$ can be obtained as
\begin{equation} \label{trans_elbo_wrt_beta_mn}
   \setlength{\abovedisplayskip}{2pt}
	\setlength{\belowdisplayskip}{2pt}
    \ln q( {{\beta}_{mn}}| {{\mathbf{z}}_{mn}})\!=\!{{E}_{\backslash {{\beta}_{mn}}}}\!\left[ \ln p( {{\mathbf{z}}_{mn}} | \boldsymbol{\Theta};\boldsymbol{\alpha } ) \right]\!+\!\ln p( {\beta}_{mn})\!+\!\text{const}.
\end{equation}
By substituting the likelihood \ac{pdf} \eqref{likelihood_zmn} and prior distribution $ \mathcal{CN}(0, \varsigma_{mn})$ into \eqref{trans_elbo_wrt_beta_mn}, we further obtain
\begin{equation}\label{vi_beta_mn_exp_pdf}
   \setlength{\abovedisplayskip}{2pt}
	\setlength{\belowdisplayskip}{2pt}
\ln q(\beta_{mn}|\mathbf{z}_{mn})\!\!\propto \!\!-\!\frac{2}{\sigma}\!\Re(\beta_{mn}\boldsymbol{\mu}_{mn}^H\mathbf{z}_{mn}) \!\!-\!\! \left(\!\!\frac{1}{\varsigma_{mn}}\!\! + \!\!\frac{2}{\sigma}\|\boldsymbol{\mu}_{mn}\|_2^2\!\right)\!|\beta_{mn}|^2
\end{equation}
Recognizing this as the exponent of a complex normal distribution $\mathcal{CN}(\hat{\beta}_{mn}, \tilde{\varsigma}_{mn})$, we identify
\begin{equation}
   \setlength{\abovedisplayskip}{3pt}
	\setlength{\belowdisplayskip}{3pt}
 {{\hat{\beta }}_{mn}}\!=\!\frac{{{\varsigma }_{mn}}\boldsymbol{\mu }_{mn}^{H}{{\mathbf{z}}_{mn}}}{\sigma \!+\!{{\varsigma }_{mn}}\| {{\boldsymbol{\mu }}_{mn}} \|_{2}^{2}},\  
  {{\tilde{\varsigma}}_{mn}}\!=\!\frac{{{\varsigma }_{mn}}\sigma }{2( \sigma \!+\!{{\varsigma }_{mn}}\| {{\boldsymbol{\mu }}_{mn}} \|_{2}^{2} )},
\end{equation}
which completes the proof.

\section{Proof of Lemma~\ref{lemma_hyper_est}}\label{proof_of_lemma_hyperparameter}
With  $q\left( \boldsymbol{\Theta} \left| {{\mathbf{z}}_{mn}}\right. \right)$ fixed, the \ac{elbo} in \eqref{elbo_origi_express} can be expressed as a function of the hyperparameters $\boldsymbol{\alpha}$, which is given by
\begin{equation}\label{elbo_hyper_expln}
   \setlength{\abovedisplayskip}{2pt}
	\setlength{\belowdisplayskip}{2pt}
    \mathcal{L}( \boldsymbol{\alpha })={{E}_{q( \mathbf{\Theta }| {{\mathbf{z}}_{mn}}  )}}[ \ln p( {{\mathbf{z}}_{mn}}|\Theta ;\boldsymbol{\alpha } )+\ln p( {{\beta }_{mn}};\boldsymbol{\alpha } ) ]+\text{const}.
\end{equation}
Substituting the likelihood \ac{pdf} \eqref{likelihood_zmn} and the prior distribution of $\beta_{mn}$ into \eqref{elbo_hyper_expln}, we obtain
\begin{equation}
   \setlength{\abovedisplayskip}{2pt}
	\setlength{\belowdisplayskip}{2pt}
    \begin{aligned}
       & \mathcal{L}( \boldsymbol{\alpha })=-{{M}^{2}}L\ln \sigma -\ln {{\varsigma }_{mn}}-{2}/{{{\varsigma }_{mn}}}( {{{\tilde{\varsigma}}}_{mn}}+{| {{{\hat{\beta }}}_{mn}}|^{2}} ) -\\
        &{2}/{\sigma }(\| {{\mathbf{z}}_{mn}} \|_{2}^{2}+2\Re( {{{\hat{\beta }}}_{mn}}\mathbf{z}_{mn}^{H}{\hat{\boldsymbol{\mu}}_{mn}} )+\| {\hat{\boldsymbol{\mu }}_{mn}} \|_{2}^{2}( {{{\tilde{\varsigma}}}_{mn}}+{|{{{\hat{\beta }}}_{mn}}|^2}) ).
    \end{aligned}
\end{equation}
Closed-form hyperparameter estimates in \eqref{hyper_est_total} are obtained by setting $\partial\mathcal{L}/\partial\sigma = 0$ and $\partial\mathcal{L}/\partial\varsigma_{mn} = 0$, leveraging the moments $\mathbb{E}[\beta_{mn}] = \hat{\beta}_{mn}$ and $\mathbb{E}[|\beta_{mn}|^2] = \tilde{\varsigma}_{mn} + |\hat{\beta}_{mn}|^2$.

\section{Gaussian Message Approximation in (\ref{gaussian_approx_p0_pdf})} 
\label{calculate_p0_gaussian_pdf}
The expression of $\prod\nolimits_{m=1}^{{{K}_{t}}}{\prod\nolimits_{n=1}^{{{K}_{r}}}{{{\Delta }_{{{p}_{{{\theta }_{m}}}}\to {{\mathbf{p}}_{0}}}}( {{\mathbf{p}}_{0}} ){{\Delta }_{{{p}_{{{\phi }_{n}}}}\to {{\mathbf{p}}_{0}}}}( {{\mathbf{p}}_{0}} )}}$ is given by
\begin{equation}\label{message_input_p0}
   \setlength{\abovedisplayskip}{2pt}
	\setlength{\belowdisplayskip}{2pt}
    \begin{aligned}
  & \prod\nolimits_{m=1}^{{{K}_{t}}}{\prod\nolimits_{n=1}^{{{K}_{r}}}{{{\Delta }_{{{p}_{{{\theta }_{m}}}}\to {{\mathbf{p}}_{0}}}}( {{\mathbf{p}}_{0}}){{\Delta }_{{{p}_{{{\phi }_{n}}}}\to {{\mathbf{p}}_{0}}}}( {{\mathbf{p}}_{0}})}} \\ 
 &\propto \!\exp\!\big\{ \sum\nolimits_{m=1}^{Kt}\!{\sum\nolimits_{n=1}^{{{K}_{r}}}\!{{{\kappa }_{{{\theta }_{m}}\to {{p}_{{{\theta }_{m}}}}}}\!\!\cos (\chi {{( \mathbf{e}_{m}^{t} )^{T}}}\!{{\mathbf{e}}_{x}}\!\!-\!{{\mu }_{{{\theta }_{m}}\to {{p}_{{{\theta }_{m}}}}}} )}} \\ 
 &  +{{\kappa }_{{{\phi }_{n}}\to {{p}_{{{\phi }_{n}}}}}}\cos (  \chi {{( \mathbf{e}_{n}^{r})^{T}}}{{\mathbf{e}}_{x}}-{{\mu }_{{{\phi }_{n}}\to {{p}_{{{\phi }_{n}}}}}} ) \big\},
\end{aligned}
\end{equation}
where $\chi\triangleq\frac{2\pi d}{\lambda }$,  $\mathbf{e}_{m}^{t}\triangleq\frac{{{\mathbf{p}}_{0}}-\mathbf{p}_{m}^{t}}{{{\left\| {{\mathbf{p}}_{0}}-\mathbf{p}_{m}^{t} \right\|}_{2}}}$, and $\mathbf{e}_{n}^{r}\triangleq\frac{{{\mathbf{p}}_{0}}-\mathbf{p}_{n}^{r}}{{{\left\| {{\mathbf{p}}_{0}} -\mathbf{p}_{n}^{r}\right\|}_{2}}}$.

We resort to the gradient descent method to find the local maximum of \eqref{message_input_p0} which is used as the mean vector ${{\mathbf{m}}_{\mathcal{G}}}$ of the approximated Gaussian message in \eqref{gaussian_approx_p0_pdf}. 
The covariance matrix ${{\mathbf{C}}_{\mathcal{G}}}$ of the approximated Gaussian message is given by the Hessian matrix at ${{\mathbf{p}}_{0}}={{\mathbf{m}}_{\mathcal{G}}}$.
We denote the exponential term of \eqref{message_input_p0} as  $f_p({{\mathbf{p}}_{0}})$. 
By defining $\mathbf{w}_{m}^{t}\triangleq\frac{\partial {{( \mathbf{e}_{m}^{t} )^{T}}}{{\mathbf{e}}_{x}}}{\partial {{\mathbf{p}}_{0}}}=\frac{-{{\mathbf{e}}_{x}}+{{( \mathbf{e}_{m}^{t})^{T}}}{{\mathbf{e}}_{x}}{{ \mathbf{e}_{m}^{t} }}}{{{\| \mathbf{p}_{m}^{t}-{{\mathbf{p}}_{0}}\|}_{2}}}$, and $ \mathbf{w}_{n}^{r}\triangleq \frac{\partial {{( \mathbf{e}_{n}^{r})^{T}}}{{\mathbf{e}}_{x}}}{\partial {{\mathbf{p}}_{0}}}=\frac{-{{\mathbf{e}}_{x}}+{{( \mathbf{e}_{n}^{r})^{T}}}{{\mathbf{e}}_{x}}{{\mathbf{e}_{n}^{r}}}}{{{\| \mathbf{p}_{n}^{r}-{{\mathbf{p}}_{0}} \|}_{2}}}$, the gradient of $f_p({{\mathbf{p}}_{0}})$ is derived as
\begin{equation}
   \setlength{\abovedisplayskip}{4pt}
	\setlength{\belowdisplayskip}{4pt}
    \frac{\partial {{f}_{p}}( {{\mathbf{p}}_{0}})}{\partial {{\mathbf{p}}_{0}}}=\sum\nolimits_{m=1}^{Kt}{\sum\nolimits_{n=1}^{{{K}_{r}}}{{\boldsymbol{\vartheta}}_{m}^{t}+{\boldsymbol{\vartheta}}_{n}^{r}}},
\end{equation}
with  ${\boldsymbol{\vartheta}}_{m}^{t}\triangleq{- \chi {{\kappa }_{{{\theta }_{m}}\to {{p}_{{{\theta }_{m}}}}}}\sin ( \chi {{( \mathbf{e}_{m}^{t})^{T}}}{{\mathbf{e}}_{x}}-{{\mu }_{{{\theta }_{m}}\to {{p}_{{{\theta }_{m}}}}}} )}\mathbf{w}_{m}^{t}$, and  $ {\boldsymbol{\vartheta}}_{n}^{r}\triangleq- \chi {{\kappa }_{{{\phi }_{n}}\to {{p}_{{{\phi }_{n}}}}}}\sin (  \chi {{( \mathbf{e}_{n}^{r})^{T}}}{{\mathbf{e}}_{x}}-{{\mu }_{{{\phi }_{n}}\to {{p}_{{{\phi }_{n}}}}}})\mathbf{w}_{n}^{r}$.
Therefore, we iteratively update $\mathbf{p}_0$ according to the following rule:
\begin{equation} \label{p0_sys_level_update}
   \setlength{\abovedisplayskip}{4pt}
	\setlength{\belowdisplayskip}{3pt}
\mathbf{p}_{0}^{(i+1)}=\mathbf{p}_{0}^{(i)}+{{\delta }_{p}}{{\left. \frac{\partial {{f}_{p}}( {{\mathbf{p}}_{0}} )}{\partial {{\mathbf{p}}_{0}}} \right|}_{{{\mathbf{p}}_{0}}=\mathbf{p}_{0}^{(i)}}},
\end{equation}
where ${{\delta }_{p}}$ is the step size and $i$ is the iteration index. The iteration stops when $\mathbf{p}_0$ reaches a local optimum, which is obtained as ${{\mathbf{m}}_{\mathcal{G}}}$.
The covariance matrix ${{\mathbf{C}}_{\mathcal{G}}}$ is obtained by calculating the Hessian matrix of ${{f}_{p}}( {{\mathbf{p}}_{0}} )$ at ${{\mathbf{m}}_{\mathcal{G}}}$.
The  Hessian matrix of ${{f}_{p}}( {{\mathbf{p}}_{0}} )$ is obtained as
\begin{equation} \label{hessian_p0_central}
   \setlength{\abovedisplayskip}{5pt}
	\setlength{\belowdisplayskip}{1pt}
    \frac{{{\partial }^{2}}{{f}_{p}}( {{\mathbf{p}}_{0}} )}{\partial {{\mathbf{p}}_{0}}\partial \mathbf{p}_{0}^{T}}=\sum\nolimits_{m=1}^{Kt}{\sum\nolimits_{n=1}^{{{K}_{r}}}{\mathbf{J}_{m}^{t}+\mathbf{J}_{n}^{r}}},
\end{equation}
with
\begin{equation}
   \setlength{\abovedisplayskip}{1pt}
	\setlength{\belowdisplayskip}{3pt}
    \begin{aligned}
  & \mathbf{J}_{m}^{t}\!=\!-{{ \chi }^{2}}{{\kappa }_{{{\theta }_{m}}\to {{p}_{{{\theta }_{m}}}}}}\!\!\cos(\chi {\gamma_{m,x}}\!\!-\!{{\mu }_{{{\theta }_{m}}\to {{p}_{{{\theta }_{m}}}}}}\!)\mathbf{w}_{m}^{t}{{( \mathbf{w}_{m}^{t} )^{T}}} \\ 
 & - \chi {{\kappa }_{{{\theta }_{m}}\to {{p}_{{{\theta }_{m}}}}}}\sin (  \chi \gamma_{m,x}-{{\mu }_{{{\theta }_{m}}\to {{p}_{{{\theta }_{m}}}}}})\frac{\partial \mathbf{w}_{m}^{t}}{\partial \mathbf{p}_{0}^{T}},
\end{aligned}
\end{equation}
where  
$\frac{\partial \mathbf{w}_{m}^{t}}{\partial \mathbf{p}_{0}^{T}}=\frac{-2{{\mathbf{e}}_{x}}{{( \mathbf{e}_{m}^{t})^{T}}}+3\gamma_{m,x}\mathbf{e}_{m}^{t}{{( \mathbf{e}_{m}^{t} )^{T}}}-\gamma_{m,x}\mathbf{I}}{{{\| \mathbf{p}_{m}^{t}-{{\mathbf{p}}_{0}}\|}_{2}}}$, ${\gamma_{m,x}^t\triangleq( \mathbf{e}_{m}^{t} )^{T}}{{\mathbf{e}}_{x}}$, and ${\gamma_{n,x}^r\triangleq( \mathbf{e}_{n}^{r} )^{T}}{{\mathbf{e}}_{x}}$.
The matrix $\mathbf{J}_{n}^{r}$ can be calculated similarly.

Therefore, the covariance matrix ${{\mathbf{C}}_{\mathcal{G}}}$ is obtained as the negative inverse of the Hessian matrix of ${{f}_{p}}\left( {{\mathbf{p}}_{0}} \right)$ at ${{\mathbf{m}}_{\mathcal{G}}}$, which is given by
\begin{equation} \label{covar_p0_sys_level}
   \setlength{\abovedisplayskip}{2pt}
	\setlength{\belowdisplayskip}{2pt}
{{\mathbf{C}}_{\mathcal{G}}^{-1}}={{ -{{\left. \frac{\partial {{f}_{p}}( {{\mathbf{p}}_{0}} )}{\partial {{\mathbf{p}}_{0}}\partial \mathbf{p}_{0}^{T}} \right|}_{{{\mathbf{p}}_{0}}={{\mathbf{m}}_{\mathcal{G}}}}} }}.
\end{equation}

\section{Gaussian Message Approximation in (\ref{gaussian_approx_dis_p0_pdf})} 
\label{calculate_dis_p0_gaussian_pdf}
Based on \eqref{message_input_p0}, the condition for the function $\ln({{\Delta }_{{{p}_{{{\theta }_{m}}}}\to {{\mathbf{p}}_{0}}}}( {{\mathbf{p}}_{0}}){{\Delta }_{{{p}_{{{\phi }_{n}}}}\to {{\mathbf{p}}_{0}}}}( {{\mathbf{p}}_{0}}))$ to reach its maximum value is $ \sin {{\tilde{\theta }}_{m}}=\frac{{{\mu }_{{{\theta }_{m}}\to {{p}_{{{\theta }_{m}}}}}}}{\chi }$, $\sin {{\tilde{\phi }}_{n}}=\frac{{{\mu }_{{{\phi }_{n}}\to {{p}_{{{\phi }_{n}}}}}}}{\chi }$.
Therefore, we have $\mathbf{e}_{m}^{t}=\left[ \frac{1}{\chi }{{\mu }_{{{\theta }_{m}}\to {{p}_{{{\theta }_{m}}}}}},\sqrt{1-\frac{1}{{{\chi }^{2}}}\mu _{{{\theta }_{m}}\to {{p}_{{{\theta }_{m}}}}}^{2}} \right]$,  and $\mathbf{e}_{n}^{r}=\left[ \frac{1}{\chi }{{\mu }_{{{\phi }_{n}}\to {{p}_{{{\phi }_{n}}}}}},\sqrt{1-\frac{1}{{{\chi }^{2}}}\mu _{{{\phi }_{n}}\to {{p}_{{{\phi }_{n}}}}}^{2}} \right]$.

By denoting the  distances $\left\| \mathbf{p}_{n}^{r}-{{\mathbf{p}}_{0}} \right\|_2$ and $\left\| \mathbf{p}_{m}^{t}-{{\mathbf{p}}_{0}} \right\|_2$ as $d_n^r$ and $d_m^t$, respectively, we can obtain the bistatic range for $(m,n)$-th \ac{t-r} subarray pair as $d_n^r + d_m^t = c \hat{\tau}_{mn}$, where $\hat{\tau}_{mn}$ is the estimated bistatic time delay derived from the previous matched filtering step.
Therefore,  the maximum can be obtained as ${{\mathbf{m}}_{p}^{(mn)}}=\mathbf{p}_{n}^{r}-d_{n}^{r}\mathbf{e}_{n}^{r}=\mathbf{p}_{m}^{t}-d_{m}^{t}\mathbf{e}_{m}^{t}$, where the bistatic ranges are given by
$d_{n}^{r}=\frac{c{{{\hat{\tau }}}_{mn}}\sqrt{1-\mu _{{{\theta }_{m}}\to {{p}_{{{\theta }_{m}}}}}^{2}/{{\chi }^{2}}}}{\sqrt{1-\mu _{{{\theta }_{m}}\to {{p}_{{{\theta }_{m}}}}}^{2}/{{\chi }^{2}}}+\sqrt{1-\mu _{{{\phi }_{n}}\to {{p}_{{{\phi }_{n}}}}}^{2}/{{\chi }^{2}}}}$ and $d_{m}^{t}=\frac{c{{{\hat{\tau }}}_{mn}}\sqrt{1-\mu _{{{\phi }_{n}}\to {{p}_{{{\phi }_{n}}}}}^{2}/{{\chi }^{2}}}}{\sqrt{1-\mu _{{{\theta }_{m}}\to {{p}_{{{\theta }_{m}}}}}^{2}/{{\chi }^{2}}}+\sqrt{1-\mu _{{{\phi }_{n}}\to {{p}_{{{\phi }_{n}}}}}^{2}/{{\chi }^{2}}}}$, respectively.

Based on \eqref{hessian_p0_central}, the Hessian matrix of the function $\ln({{\Delta }_{{{p}_{{{\theta }_{m}}}}\to {{\mathbf{p}}_{0}}}}( {{\mathbf{p}}_{0}}){{\Delta }_{{{p}_{{{\phi }_{n}}}}\to {{\mathbf{p}}_{0}}}}({{\mathbf{p}}_{0}}))$ is given by ${\mathbf{J}_{p}^{(mn)}={\mathbf{J}_{m}^{t}+\mathbf{J}_{n}^{r}}}$. Therefore, the covariance matrix of the Gaussian distribution can be approximated by the inverse of the Hessian matrix evaluated at the maximum point $\mathbf{m}_p^{(mn)}$:
\begin{equation}
   \setlength{\abovedisplayskip}{4pt}
	\setlength{\belowdisplayskip}{2pt}
  \mathbf{C}_{p}^{(m,n)} \approx -({\mathbf{J}_{p}^{(mn)}})^{-1}\big|_{{{\mathbf{p}}_{0}}=\mathbf{m}_p^{(mn)}},
\end{equation}
with
\begin{equation}
    \begin{aligned}
       \setlength{\abovedisplayskip}{1pt}
	\setlength{\belowdisplayskip}{1pt}
      & \mathbf{J}_{p}^{(mn)}=\frac{{{\chi }^{2}}{{\kappa }_{{{\theta }_{m}}\to {{p}_{{{\theta }_{m}}}}}}}{d_{m}^{t}}  ( 2{{\mathbf{e}}_{x}}{{ ( \mathbf{e}_{m}^{t}  )^{T}}}-3 \gamma_{m,x}^t\mathbf{e}_{m}^{t}{{ ( \mathbf{e}_{m}^{t}  )^{T}}}+\gamma_{m,x}^t\mathbf{I} ) \\ 
 & +{{{\chi }^{2}}{{\kappa }_{{{\phi }_{n}}\to {{p}_{{{\phi }_{n}}}}}}}/{d_{n}^{r}}  ( 2{{\mathbf{e}}_{x}}{{ ( \mathbf{e}_{n}^{r}  )^{T}}}-3\gamma_{n,x}^r\mathbf{e}_{n}^{r}{{ ( \mathbf{e}_{n}^{r}  )^{T}}}+\gamma_{n,x}^r\mathbf{I} ).
    \end{aligned}
\end{equation}

\section{Gaussian Message Approximation in (\ref{gaussian_approx_v0_pdf})} 
\label{calculate_v0_gaussian_pdf}
Based on Definition \ref{definition_vmd}, the expression in \eqref{v0_marg_prob} can be rewritten as 
\begin{equation}\label{vmd_v0_marg_pdf}
       \setlength{\abovedisplayskip}{2pt}
	\setlength{\belowdisplayskip}{1pt}
    \begin{aligned}
        &\prod\nolimits_{m=1}^{{{K}_{t}}}{\prod\nolimits_{n=1}^{{{K}_{r}}}{{{\Delta }_{{{p}_{{{f}_{mn}}}}\to {{\mathbf{v}}_{0}}}}\left( {{\mathbf{v}}_{0}} \right)}}\propto\\
        & \exp \{\sum\nolimits_{m,n}\!{{{\kappa }_{{{f}_{mn}}\to {{p}_{{{f}_{mn}}}}}}\!\cos ( \zeta \mathbf{v}_{0}^{T}{{\mathbf{u}}_{mn}}\!\!-\!\!{{\mu }_{{{f}_{mn}}\to {{p}_{{{f}_{mn}}}}}})} \}.
    \end{aligned}
\end{equation}
Denoting the exponential term in \eqref{vmd_v0_marg_pdf} as a function, we have
\begin{equation}
\begin{aligned}
       \setlength{\abovedisplayskip}{1pt}
	\setlength{\belowdisplayskip}{0.5pt}
    {{f}_{v}}( {{\mathbf{v}}_{0}})\!\triangleq\!\! \sum_{m,n}{{{\kappa }_{{{f}_{mn}}\to {{p}_{{{f}_{mn}}}}}}\!\!\cos( \zeta \mathbf{v}_{0}^{T}{{\mathbf{u}}_{mn}}\!\!-\!\!{{\mu }_{{{f}_{mn}}\to {{p}_{{{f}_{mn}}}}}})}.
    \end{aligned}
\end{equation}
The mean vector ${{\mathbf{m}}_{\mathcal{H}}}$ is obtained at the local maximum of ${{f}_{v}}\left( {{\mathbf{v}}_{0}} \right)$ using the gradient ascent method:
\begin{equation} \label{v0_sys_update}
       \setlength{\abovedisplayskip}{4pt}
	\setlength{\belowdisplayskip}{4pt}
\mathbf{v}_{0}^{(i+1)}=\mathbf{v}_{0}^{(i)}+{{\delta }_{v}}{{\left. \frac{\partial {{f}_{v}}\left( {{\mathbf{v}}_{0}} \right)}{\partial {{\mathbf{v}}_{0}}} \right|}_{{{\mathbf{v}}_{0}}=\mathbf{v}_{0}^{(i)}}},
\end{equation}
where ${{\delta }_{v}}$ is the step size, $i$ is the iteration index, and $\frac{\partial {{f}_{v}}( {{\mathbf{v}}_{0}})}{\partial {{\mathbf{v}}_{0}}}\!\!=\!\!-\!\!\!
        \sum_{m,n}{\zeta {{\kappa }_{{{f}_{mn}}\to {{p}_{{{f}_{mn}}}}}}\!\sin( \zeta \mathbf{v}_{0}^{T}{{\mathbf{u}}_{mn}}\!\!-\!\!{{\mu }_{{{f}_{mn}}\to {{p}_{{{f}_{mn}}}}}}\!){{\mathbf{u}}_{mn}}}$ denotes the gradient of ${{f}_{v}}\left( {{\mathbf{v}}_{0}} \right)$ with respect to ${{\mathbf{v}}_{0}}$.

The covariance matrix ${{\mathbf{C}}_{\mathcal{H}}}$ is obtained as the negative inverse of the Hessian matrix of ${{f}_{v}}\left( {{\mathbf{v}}_{0}} \right)$ evaluated at ${{\mathbf{m}}_{\mathcal{H}}}$:
\begin{equation} \label{v0_sys_level_covar}
       \setlength{\abovedisplayskip}{4pt}
	\setlength{\belowdisplayskip}{2pt}
{{\mathbf{C}}_{\mathcal{H}}^{-1}}= -{{\left. \frac{\partial {{f}_{v}}\left( {{\mathbf{v}}_{0}} \right)}{\partial {{\mathbf{v}}_{0}}\partial \mathbf{v}_{0}^{T}} \right|}_{{{\mathbf{v}}_{0}}={{\mathbf{m}}_{\mathcal{H}}}}} ,
\end{equation}
where 
\begin{equation}\label{hessian_v0_central}
 \setlength{\abovedisplayskip}{4pt}
	\setlength{\belowdisplayskip}{1pt}
    \frac{\partial {{f}_{v}}( {{\mathbf{v}}_{0}})}{\partial {{\mathbf{v}}_{0}}\partial \mathbf{v}_{0}^{T}}\!\!=\!\!-\!\!\!\sum_{m,n}\!{{\zeta}^{2}}\!{{\kappa }_{{{f}_{mn}}\!\to {{p}_{{{f}_{mn}}}}}}\!\!
       \cos(\! \zeta \mathbf{v}_{0}^{T}{{\mathbf{u}}_{mn}}\!\!-\!\!{{\mu }_{{{f}_{mn}}\!\to {{p}_{{{f}_{mn}}}}}}\!){{\mathbf{u}}_{mn}}\mathbf{u}_{mn}^{T}.
\end{equation}

\section{Gaussian Message Approximation in (\ref{gaussian_approx_v0_dis_pdf})} 
\label{calculate_dis_v0_gaussian_pdf}
The  logarithm of the marginal \ac{pdf} in \eqref{vo_marg_pdf_dis_pair} can be expressed as 
\begin{equation}\label{delta_vo_ln}
       \setlength{\abovedisplayskip}{2pt}
	\setlength{\belowdisplayskip}{2pt}
    \begin{aligned}
        \Delta _{{{\mathbf{v}}_{0}}}^{({{\omega }_{i}})}( {{\mathbf{v}}_{0}}) &\propto\!\exp\!\big\{ {{\kappa }_{{{f}_{mn}}\!\to {{p}_{{{f}_{mn}}}}}}\!\!\cos ( \zeta \mathbf{v}_{0}^{T}{{\mathbf{u}}_{mn}}\!\!-\!\!{{\mu }_{{{f}_{mn}}\!\to {{p}_{{{f}_{mn}}}}}}\!)  \\ 
 & +{{\kappa }_{{{f}_{pq}}\to {{p}_{{{f}_{pq}}}}}}\cos ( \zeta \mathbf{v}_{0}^{T}{{\mathbf{u}}_{pq}}-{{\mu }_{{{f}_{pq}}\to {{p}_{{{f}_{pq}}}}}}) \big\}. 
    \end{aligned}
\end{equation}
We define ${{\mathbf{E}}_{{{\omega }_{i}}}}\triangleq [ {{\mathbf{u}}_{mn}},{{\mathbf{u}}_{pq}}] $,  and ${{\boldsymbol{\mu }}_{{{\omega }_{i}}}}\triangleq {{[ {{\mu }_{{{f}_{mn}}\to {{p}_{{{f}_{mn}}}}}},{{\mu }_{{{f}_{pq}}\to {{p}_{{{f}_{pq}}}}}}]}^{T}}$.
Therefore, the mean of the Gaussian distribution is obtained at the maximum of \eqref{delta_vo_ln}, which is achieved when the  equality $\mathbf{v}_{0}^{T}{{\mathbf{E}}_{{{\omega }_{i}}}}={{\boldsymbol{\mu }}_{{{\omega }_{i}}}}/{\zeta }$ holds.
The mean vector is obtained by $\mathbf{m}_{v}^{({{\omega }_{i}})}={{( \mathbf{E}_{{{\omega }_{i}}}^{T} )^{-1}}}\boldsymbol{\mu }_{{{\omega }_{i}}}^{T}/{\zeta }$.
Based on \eqref{hessian_v0_central}, the covariance matrix is given by
\begin{equation}
       \setlength{\abovedisplayskip}{2pt}
	\setlength{\belowdisplayskip}{2pt}
    \begin{aligned}
        \mathbf{C}_{v}^{({{\omega }_{i}})}\!\!=\!-{{( {{\kappa }_{{{f}_{mn}}\!\to {{p}_{{{f}_{mn}}}}}}\!{{\mathbf{u}}_{mn}}\mathbf{u}_{mn}^{T}\!\!+\!{{\kappa }_{{{f}_{pq}}\!\to {{p}_{{{f}_{pq}}}}}}\!{{\mathbf{u}}_{pq}}\mathbf{u}_{pq}^{T} )^{-1}}}\!/{{{\zeta }^{2}}}.
    \end{aligned}
\end{equation}

\bibliographystyle{IEEEtran}
\bibliography{mybib}

% Generated by IEEEtran.bst, version: 1.14 (2015/08/26)
\begin{thebibliography}{10}
\providecommand{\url}[1]{#1}
\csname url@samestyle\endcsname
\providecommand{\newblock}{\relax}
\providecommand{\bibinfo}[2]{#2}
\providecommand{\BIBentrySTDinterwordspacing}{\spaceskip=0pt\relax}
\providecommand{\BIBentryALTinterwordstretchfactor}{4}
\providecommand{\BIBentryALTinterwordspacing}{\spaceskip=\fontdimen2\font plus
\BIBentryALTinterwordstretchfactor\fontdimen3\font minus
  \fontdimen4\font\relax}
\providecommand{\BIBforeignlanguage}[2]{{%
\expandafter\ifx\csname l@#1\endcsname\relax
\typeout{** WARNING: IEEEtran.bst: No hyphenation pattern has been}%
\typeout{** loaded for the language `#1'. Using the pattern for}%
\typeout{** the default language instead.}%
\else
\language=\csname l@#1\endcsname
\fi
#2}}
\providecommand{\BIBdecl}{\relax}
\BIBdecl

\bibitem{liu2022isac}
F.~Liu, Y.~Cui, C.~Masouros, J.~Xu, T.~X. Han, Y.~C. Eldar, and S.~Buzzi,
  ``Integrated sensing and communications: Toward dual-functional wireless
  networks for {6G} and beyond,'' \emph{IEEE J. Sel. Areas Commun}, vol.~40,
  no.~6, pp. 1728--1767, Mar. 2022.

\bibitem{dong2024sensing}
F.~Dong, F.~Liu, Y.~Cui, S.~Lu, and Y.~Li, ``Sensing as a service in {6G}
  perceptive mobile networks: Architecture, advances, and the road ahead,''
  \emph{IEEE Network}, vol.~38, no.~2, pp. 87--96, Mar. 2024.

\bibitem{wei2023isac}
Z.~Wei, H.~Qu, Y.~Wang, X.~Yuan, H.~Wu, Y.~Du, K.~Han, N.~Zhang, and Z.~Feng,
  ``Integrated sensing and communication signals toward {5G-A} and {6G}: A
  survey,'' \emph{IEEE Internet Things J.}, vol.~10, no.~13, pp.
  11\,068--11\,092, 2023.

\bibitem{liu2023tutorial}
Y.~Liu, Z.~Wang, J.~Xu, C.~Ouyang, X.~Mu, and R.~Schober, ``Near-field
  communications: A tutorial review,'' \emph{IEEE Open J. Commun. Soc.},
  vol.~4, pp. 1999--2049, Aug. 2023.

\bibitem{wang2023nearisac}
Z.~Wang, X.~Mu, and Y.~Liu, ``Near-field integrated sensing and
  communications,'' \emph{IEEE Commun. Lett.}, vol.~27, no.~8, pp. 2048--2052,
  Aug. 2023.

\bibitem{cong2023near}
J.~Cong, C.~You, J.~Li, L.~Chen, B.~Zheng, Y.~Liu, W.~Wu, Y.~Gong, S.~Jin, and
  R.~Zhang, ``Near-field integrated sensing and communication: Opportunities
  and challenges,'' \emph{IEEE Wirel. Commun.}, pp. 1--8, Sept. 2024.

\bibitem{liu2024nearfield}
Y.~Liu, Z.~Wang, J.~Xu, C.~Ouyang, X.~Mu, and R.~Schober, ``Near-field
  communications: A tutorial review,'' \emph{IEEE Open J. Commun. Soc.},
  vol.~4, pp. 1999--2049, Aug. 2023.

\bibitem{wang2024tutorial6g}
Z.~Wang, J.~Zhang, H.~Du, D.~Niyato, S.~Cui, B.~Ai, M.~Debbah, K.~B. Letaief,
  and H.~V. Poor, ``A tutorial on extremely large-scale {MIMO} for {6G}:
  Fundamentals, signal processing, and applications,'' \emph{IEEE Commun. Surv.
  Tutor}, vol.~26, no.~3, pp. 1560--1605, Jan. 2024.

\bibitem{liu2025sensingen}
S.~Liu, X.~Yu, Z.~Gao, J.~Xu, D.~W.~K. Ng, and S.~Cui, ``Sensing-enhanced
  channel estimation for near-field {XL-MIMO} systems,'' \emph{IEEE J. Sel.
  Areas Commun.}, pp. 1--1, Jan. 2025.

\bibitem{wang2024rethink}
Z.~Wang, X.~Mu, and Y.~Liu, ``Rethinking integrated sensing and communication:
  When near field meets wideband,'' \emph{IEEE Commun. Mag.}, vol.~62, no.~9,
  pp. 44--50, Sept. 2024.

\bibitem{yang2023enhancing}
S.~Yang, W.~Lyu, Z.~Zhang, and C.~Yuen, ``Enhancing near-field sensing and
  communications with sparse arrays: Potentials, challenges, and emerging
  trends,'' \emph{arXiv preprint arXiv:2309.08681}, 2023.

\bibitem{yang2023pfboundsubarray}
S.~Yang, X.~Chen, Y.~Xiu, W.~Lyu, Z.~Zhang, and C.~Yuen, ``Performance bounds
  for near-field localization with widely-spaced multi-subarray {mmWave/THz
  MIMO},'' \emph{IEEE Transactions on Wireless Communications}, vol.~23, no.~9,
  pp. 10\,757--10\,772, Mar. 2024.

\bibitem{yangs2023xlris}
S.~Yang, C.~Xie, W.~Lyu, B.~Ning, Z.~Zhang, and C.~Yuen, ``Near-field channel
  estimation for extremely large-scale reconfigurable intelligent surface
  ({XL-RIS})-aided wideband mmwave systems,'' \emph{IEEE J. Sel. Areas
  Commun.}, vol.~42, no.~6, pp. 1567--1582, Apr. 2024.

\bibitem{dehkordi2023multistatic}
S.~K. Dehkordi, L.~Pucci, P.~Jung, A.~Giorgetti, E.~Paolini, and G.~Caire,
  ``Multistatic parameter estimation in the near/far field for integrated
  sensing and communication,'' \emph{IEEE Trans. Wirel. Commun.}, vol.~23,
  no.~12, pp. 17\,929--17\,944, Sept. 2024.

\bibitem{Sakhnini2022nfcoherent}
A.~Sakhnini, S.~De~Bast, M.~Guenach, A.~Bourdoux, H.~Sahli, and S.~Pollin,
  ``Near-field coherent radar sensing using a massive {MIMO} communication
  testbed,'' \emph{IEEE Trans. Wirel. Commun.}, vol.~21, no.~8, pp. 6256--6270,
  Feb. 2022.

\bibitem{giovan2024pfbound}
C.~Giovannetti, N.~Decarli, and D.~Dardari, ``Performance bounds for velocity
  estimation with extremely large aperture arrays,'' \emph{IEEE Wireless
  Commun. Lett.}, vol.~13, no.~12, pp. 3513--3517, Oct. 2024.

\bibitem{wang2024velocity}
Z.~Wang, X.~Mu, and Y.~Liu, ``Near-field velocity sensing and predictive
  beamforming,'' \emph{{IEEE} Trans. Veh. Technol.}, pp. 1--6, Sept. 2024.

\bibitem{jiang2024near}
H.~Jiang, Z.~Wang, and Y.~Liu, ``Near-field sensing enabled predictive
  beamforming: From estimation to tracking,'' \emph{arXiv preprint
  arXiv:2408.02027}, 2024.

\bibitem{Rahal2024ris}
M.~Rahal, B.~Denis, M.~F. Keskin, B.~Uguen, and H.~Wymeersch, ``{RIS}-enabled
  {NLoS} near-field joint position and velocity estimation under user
  mobility,'' \emph{IEEE J. Sel. Topics Signal Process.}, pp. 1--12, Jun. 2024.

\bibitem{li2008mimo}
J.~Li and P.~Stoica, \emph{{MIMO} radar signal processing}.\hskip 1em plus
  0.5em minus 0.4em\relax John Wiley \& Sons, 2008.

\bibitem{richards2010principles}
M.~A. Richards, J.~Scheer, W.~A. Holm, and W.~L. Melvin, ``Principles of modern
  radar,'' 2010.

\bibitem{skolnik1980introduction}
M.~I. Skolnik \emph{et~al.}, \emph{Introduction to radar systems}.\hskip 1em
  plus 0.5em minus 0.4em\relax McGraw-hill New York, 1980, vol.~3.

\bibitem{cui2024nearwide}
M.~Cui and L.~Dai, ``Near-field wideband beamforming for extremely large
  antenna arrays,'' \emph{IEEE Trans. Wirel. Commun.}, vol.~23, no.~10, pp.
  13\,110--13\,124, May 2024.

\bibitem{richards2005fundamentals}
M.~A. Richards \emph{et~al.}, \emph{Fundamentals of radar signal
  processing}.\hskip 1em plus 0.5em minus 0.4em\relax Mcgraw-hill New York,
  2005, vol.~1.

\bibitem{loe2004factorgraph}
H.-A. Loeliger, ``An introduction to factor graphs,'' \emph{IEEE Signal
  Process. Mag.}, vol.~21, no.~1, pp. 28--41, Jan. 2004.

\bibitem{Tzikas2008vbi}
D.~G. Tzikas, A.~C. Likas, and N.~P. Galatsanos, ``The variational
  approximation for bayesian inference,'' \emph{IEEE Signal Process. Mag.},
  vol.~25, no.~6, pp. 131--146, Nov. 2008.

\bibitem{bishop2006pattern}
C.~M. Bishop, ``Pattern recognition and machine learning,'' \emph{Springer
  google schola}, vol.~2, pp. 1122--1128, 2006.

\bibitem{badiu2017variational}
M.-A. Badiu, T.~L. Hansen, and B.~H. Fleury, ``Variational bayesian inference
  of line spectra,'' \emph{IEEE Trans. Signal Process.}, vol.~65, no.~9, pp.
  2247--2261, 2017.

\bibitem{zhang2020multiVALSE}
Q.~Zhang, J.~Zhu, N.~Zhang, and Z.~Xu, ``Multidimensional variational line
  spectra estimation,'' \emph{IEEE Signal Process. Lett.}, vol.~27, pp.
  945--949, May 2020.

\bibitem{Kschischang2001sum}
F.~Kschischang, B.~Frey, and H.-A. Loeliger, ``Factor graphs and the
  sum-product algorithm,'' \emph{IEEE Trans. Inf. Theory}, vol.~47, no.~2, pp.
  498--519, Aug. 2001.

\bibitem{mardia2009directional}
K.~V. Mardia and P.~E. Jupp, \emph{Directional statistics}.\hskip 1em plus
  0.5em minus 0.4em\relax John Wiley \& Sons, 2009.

\end{thebibliography}
\end{document}